\pgfplotsset{compat=1.4}
\newtheorem{definition}{Definition}
\newtheorem{lemma}[definition]{Lemma}
\newtheorem{corollary}[definition]{Corollary}
\newtheorem{theorem}[definition]{Theorem}
\newtheorem{proposition}[definition]{Proposition}
\theoremstyle{definition}
\newtheorem{example}[definition]{Example}
\newcommand{\AutoAdjust}[3]{\mathchoice{ \left #1 #2  \right #3}{#1 #2 #3}{#1 #2 #3}{#1 #2 #3} }
\newcommand{\Xcomment}[1]{{}}
\newcommand{\InBrackets}[1]{\AutoAdjust{[}{#1}{]}}
\newcommand{\Ex}[2][]{\operatorname{\mathbb E}_{#1}\InBrackets{#2}}
\newcommand{\Prx}[2][]{\operatorname{\text{Pr}}_{#1}\InBrackets{#2}}
\def\prob{\Prx}
\def\expect{\Ex}
\newcommand{\OPT}{*}
\newcommand{\agent}{i}
\newcommand{\agentiter}{\agent}
\newcommand{\numagents}{n}
\newcommand{\agentalt}{\agent^{\prime}}
\newcommand{\action}{a}
\newcommand{\actions}{\mathbf{\action}}
\newcommand{\actionagent}{\action_{\agentiter}}
\newcommand{\actionothers}{\actions_{-\agentiter}}
\newcommand{\actspace}{A}
\newcommand{\actspaceagent}{\actspace_{\agentiter}}
\newcommand{\actspaces}{\mathbf A}
\newcommand{\actionsitem}{\actions^\mitem}
\newcommand{\actiondist}{G}
\newcommand{\actionsdist}{\mathbf{\actiondist}}
\newcommand{\feasible}{\mathcal X}
\newcommand{\allocation}{x}
\newcommand{\altallocation}{y}
\newcommand{\alloc}{\mathbf \allocation}
\newcommand{\allocopt}{\alloc^{\OPT}}
\newcommand{\allocalt}{\mathbf \altallocation}
\newcommand{\allocagent}[1][\agentiter]{\allocation_{#1}}
\newcommand{\allocoptagent}{\allocation^{\OPT}_{\agentiter}}
\newcommand{\agentallocalt}{\altallocation}
\newcommand{\allocaltagent}[1][\agent]{\agentallocalt_{#1}}
\newcommand{\allocaltagentitem}{\agentallocalt_{\agent}^{\mitem}}
\newcommand{\bidallocation}{\tilde \allocation}
\newcommand{\bidalloc}{\mathbf\bidallocation}
\newcommand{\bidallocagent}{\bidallocation_\agent}
\newcommand{\bidallocr}{\bidalloc^\reserves}
\newcommand{\slotprob}{\alpha}
\newcommand{\perm}{\pi}
\newcommand{\bid}{b}
\newcommand{\bids}{\mathbf{\bid}}
\newcommand{\bidagent}[1][\agent]{\bid_{#1}}
\newcommand{\bidothers}[1][\agent]{\bids_{-#1}}
\newcommand{\bidcdf}{\mathcal B}
\newcommand{\bidcdfagent}{{\bidcdf}_\agent}
\newcommand{\bidcdfs}{\mathcal B}
\newcommand{\bidsalt}{\bids'}
\newcommand{\bidaltagent}{\bid_\agent'}
\newcommand{\bidpostresagent}{\tilde{\bid}_\agent}
\newcommand{\bidspostres}{\tilde{\bids}}
\newcommand{\deviation}{d}
\newcommand{\dev}{\deviation}
\newcommand{\mech}{M}
\newcommand{\mechr}[1][\reserves]{\mech^{#1}}
\newcommand{\mechanism}[1]{\textsc{#1}}
\newcommand{\AP}{\mechanism{AP}}
\newcommand{\WPB}{\mechanism{WPB}}
\newcommand{\OPTmech}{\mechanism{Opt}}
\newcommand{\pay}{p}
\newcommand{\bidpay}{\tilde\pay}
\newcommand{\bidpayment}{\mathbf{\bidpay}}
\newcommand{\bidpaymentagent}{\bidpay_{\agentiter}}
\newcommand{\WEL}{\textsc{Welfare}}
\newcommand{\REV}{\textsc{Rev}}
\newcommand{\rev}{\REV}
\newcommand{\revpar}{\mu}
\newcommand{\threshold}{\hat \bid} 
\newcommand{\thresholdagent}[1][\agentiter]{\threshold_{#1}}
\newcommand{\expthreshold}{T}
\newcommand{\expthresholdagent}{\expthreshold_\agent}
\newcommand{\interthresh}[2]{t_{#1}(#2)}
\newcommand{\cumuthreshagent}[3]{T_{#1}^{#3}(#2)}
\newcommand{\reservem}{\reserve^*}
\newcommand{\threshexpectedallpay}[2]{\expthresholdagent^{\AP}}					
\newcommand{\threshexpectedwinnerpaysbid}[2]{\expthresholdagent^{\WPB}}					
\newcommand{\threshpt}[1]{\thresholdagent(#1)}							
\newcommand{\threshexpected}[2]{\expthresholdagent}					
\newcommand{\threshbound}{\underline{\expthreshold}}						
\newcommand{\threshboundagent}{\threshbound_{\agentiter}}				
\newcommand{\threshboundexpected}[2]{\threshboundagent}			
\newcommand{\equivbid}{\beta}
\newcommand{\equivbidagent}{\equivbid_\agent}
\newcommand{\equivbidagentitem}{\equivbidagent^\mitem}
\newcommand{\equivbidact}[1]{\equivbidagent(#1)}
\newcommand{\util}{u}
\newcommand{\utilagent}{\util_{\agentiter}}
\newcommand{\bidutil}{\tilde u}
\newcommand{\bidutilagent}{\bidutil_{\agentiter}}
\newcommand{\val}{v}
\newcommand{\vals}{\mathbf \val}
\newcommand{\valothers}{\vals_{-\agentiter}}
\newcommand{\valagent}[1][\agentiter]{v_{#1}}
\newcommand{\valueset}{V}
\newcommand{\values}{\mathbf\valueset}
\newcommand{\valuecdf}{\mathcal F_{\agent}}
\newcommand{\valuepdf}{f_{\agent}}
\newcommand{\dist}{\mathcal F}
\newcommand{\valuecdfs}{\dist}
\newcommand{\vval}{\phi}
\newcommand{\vvalagent}{\vval_{\agentiter}}
\newcommand{\mitem}{j}
\newcommand{\mechitem}{\mech^\mitem}
\newcommand{\actionagentitem}{\actionagent^\mitem}
\newcommand{\wants}{\mathcal S}
\newcommand{\wantsagent}[1][\agent]{\wants_{#1}}
\newcommand{\numitems}{m}
\newcommand{\reserve}{r}
\newcommand{\reserves}{\mathbf{\reserve}}
\newcommand{\zeros}{\mathbf{0}}
\newcommand{\reserveagent}{\reserve_{\agent}}
\newcommand{\reserveagentm}{\reserveagent^{\OPT}}
\newcommand{\reservesm}{\mathbf{\reserve}^{\OPT}}
\newcommand{\reservesothers}{\reserves_{-\agentiter}}
\newcommand{\rcr}{\revpar}
\newcommand{\res}{\eta}
\newcommand{\equivthresh}[2]{\tau_{#1}(#2)}
\newcommand{\equivthreshres}[3]{\tau_{#1}^{#3}(#2)}
\newcommand{\equivcumulativesimple}{\mathscr T}
\newcommand{\equivcumulative}[3]{{\equivcumulativesimple}_{#1}^{#3}(#2)}
\newcommand{\bidallocagentitem}{\bidallocagent^\mitem}
\newcommand{\bidallocitem}{\bidalloc^\mitem}
\newcommand{\bidpaymentagentitem}{\bidpay_{\agent}^{\mitem}}
\newcommand{\bidpaymentitem}{\bidpayment^\mitem}
\newcommand{\convcomb}{\theta}
\newcommand{\mf}{auction environment}
\newcommand{\convcombdist}{\omega}
\newcommand{\cumuthresh}{T}
\newcommand{\singlecdf}{F}
\newcommand{\reservesingm}{\reserve^{\OPT}}
\newcommand{\comp}{\beta}
\newcommand{\compagent}{\comp_\agent}
\newcommand{\wpballoc}{\hat\allocation}
\newcommand{\jointcdf}{\mathcal G}
\newcommand{\info}{\theta}
\newcommand{\mechdist}{\mathcal D}
\newcommand{\EC}[1]{}
\newcommand{\IFECELSE}[2]{#2}
\newcommand{\rc}{competitive efficiency}
\newcommand{\Rc}{Competitive efficiency}
\newcommand{\RC}{Competitive Efficiency}
\newcommand{\trc}{competitive efficiency}
\newcommand{\Trc}{Competitive efficiency}
\newcommand{\vc}{individual efficiency}
\newcommand{\Vc}{Individual efficiency}
\newcommand{\VC}{Individual Efficiency}
\newcommand{\bo}{bidding outcome}
\newcommand{\outcome}{outcome}
\newcommand{\bos}{bidding outcomes}
\newcommand{\ao}{action outcome}
\newcommand{\aos}{action outcomes}
\begin{document}

\markboth{J Hartline et al.}{Robust Analysis of Auction Equilibria}

\title{Robust Analysis of Auction Equilibria\footnote{This paper
  provides an economic interpretation on results presented in an
  extended abstract under the title ``Price of Anarchy for Auction
  Revenue'' at the fifteenth ACM Conference on Economics and
  Computation \citep{HHT-14}.  We thank Vasilis Syrgkanis for comments
  on a prior version of this paper for which simultaneous composition
  did not hold, for suggesting the study of simultaneous composition,
  and for perspective on price-of-anarchy methodology.} }

\author[1]{Jason Hartline} \author[ ]{Darrell Hoy}
\author[2]{Sam Taggart}

\affil[1]{Northwestern University}
\affil[2]{Oberlin College}

\maketitle

\begin{abstract}
  Equilibria in auctions can be very difficult to analyze, beyond the
symmetric environments where revenue equivalence renders the analysis
straightforward.  This paper takes a robust approach to evaluating the
equilibria of auctions.  Rather than identify the equilibria of an
auction under specific environmental conditions, it considers
worst-case analysis, where an auction is evaluated according to the
worst environment and worst equilibrium in that environment.  It
identifies a non-equilibrium property of auctions that governs whether
or not their worst-case equilibria are good for welfare and revenue.
This property is easy to analyze, can be refined from data, and
composes across markets where multiple auctions are run
simultaneously.
\end{abstract}

\section{Introduction}
\label{s:intro}

Equilibria in auctions can be very difficult to analyze, beyond the
symmetric environments where revenue equivalence renders the analysis
straightforward.  This paper takes a robust approach to evaluating the
equilibria of auctions.  Rather than identify the equilibria of an
auction under specific environmental conditions, it considers
worst-case analysis, where an auction is evaluated according to the
worst environment and worst equilibrium in that environment.  It
identifies a non-equilibrium property of auctions that governs whether
or not their worst-case equilibria are good for welfare and revenue.
This property is easy to analyze, can be refined from data, and
composes across markets where multiple auctions are run
simultaneously.

Classical economic analyses identify two main drivers of
inefficiency: (a) externalities and (b) incomplete information.  The analysis of this paper decomposes the performance of
auctions into two terms: {\em competitive efficiency} quantifies the
degree to which externalities cause losses in performance and {\em
  individual efficiency} quantifies the degree to which incomplete
information causes losses in performance.  

The quantity of interest for this paper is the {\em robust efficiency}
of mechanisms, i.e., the fraction of the optimal welfare or revenue
that is attained in any equilibrium and under any informational model.
By measuring the efficiency of a mechanism as the ratio of its
performance to the optimal performance we obtain bounds that are
invariant with respect to the relative magnitudes of the environment.


The analysis of the paper shows that competitive efficiency, as we
define it, is a central determinant of whether an auction is good or
bad; while individual efficiency is always relatively high.
Competitive efficiency can be low when externalities are significant
and mechanisms that are more competitively efficient reduce the impact
of externalities.  On the other hand, individual efficiency --- which
quantifies the impact of incomplete information --- is always
relatively high; though, e.g., mechanisms with the winner-pays-bid
payment format can be seen as better than those with the all-pay
payment format.  Applying the philisophy of approximation
\citep{har-13} to these two concerns: externalities are a critical
feature to be treated carefully in mechanism design while incomplete
information is more of a detail.

Introducing the notion of competitive efficiency by example, consider
any set of bids in a first-price auction.  For each agent and this
fixed set of bids, there is a minimum bid that serves as a threshold
for whether the agent wins or loses.  Consider two quantities: (a) the
revenue of the auction for the given bids and (b) the optimal revenue
of an auction if each agent instead bid their threshold.  Quantity (b)
is a measure of the level of competition in the auction.  The {\em
  competitive efficiency} $\rcr$ is the worst case over bids of the
ratio of (a) to (b), this number is at most 1 and numbers closer to 1
are more efficient.  For the first-price auction this ratio is
$\rcr=1$. The revenue is the highest bid and the thresholds of all
losers are equal to the highest bid.  Thus, the two terms, (b) the
optimal revenue from thresholds and (a) the revenue from bids, are the
same.


Even mechanisms that are competitively efficient may lose efficiency
due to the inability of the agents to precisely respond to thresholds,
i.e., due to incomplete information.  Returning to the example of
first price auctions, with deterministic equilibrium concepts like
pure Nash equilibrium, an agent's threshold is deterministic, and the
agent can respond efficiently.  On the other hand, in stochastic
equilibrium concepts like Bayes-Nash equilibrium, the threshold an
agent faces is stochastic, and the agent cannot respond efficiently.
Specifically, when the threshold is below the agent's value the agent
would prefer to bid just above the threshold; however, this threshold
is stochastic and the agent must place a single bid.  Viewing this
threshold as the seller's outside option, the inability for an agent
to respond precisely results in an additional loss of performance.
This {\em individual efficiency} $\res$ is a property of the
best-response problem faced by the agents and depends on the
equilibrium notion (e.g., pure Nash or Bayes-Nash) and payment format
(e.g., winner-pays-bid or all-pay).


This paper shows that robust efficiency can be bounded by a
combination of the individual efficiencies of the agents and
competitive efficiency of the mechanism. The main welfare analysis of
this paper proves that, broadly, the fraction of the optimal welfare
of an auction in equilibrium is at least the product of the
mechanism's competitive efficiency and the individual efficiency of
the agents' best response problems.  For example, we will show that the
individual efficiency of the agent's response for pure Nash equilibria
in winner-pays-bid mechanisms is $\res = 1$.  Combined with the
competitive efficiency of $\rcr = 1$ for the first-price auction, we
see that pure Nash equilibria in the first-price auction are fully
efficient, i.e., they obtain a $\res\rcr = 1$ fraction of the optimal
welfare.  (More precisely, pure Nash tend not to exist in the
first-price auction, but the same result approximately holds for
approximate pure Nash which exist; details given subsequently.)
Turning to revenue analysis, the main result paper shows that the
fraction of the optimal revenue of an auction with appropriate reserve
prices in equilibrium is at least half the product of its competitive
efficiency and the agents' individual efficiency.

Individual efficiency is a property of the agents' best response
problem that takes into account the incompleteness of information and
the payment format.  Thus, it is sufficient to analyze a few canonical
models of incomplete information and payment formats.  We provide the
following individual efficiency results:
\begin{itemize}
\item $\res = (1-\epsilon)$ for the winner-pays-bid payment format and
 $(1-\epsilon)$-approximate deterministic best-response
  problems (for any $\epsilon \geq 0$).  This result is applicable to
  pure $(1-\epsilon)$-Nash equilibria in deterministic auctions for
  allocating indivisible goods such as the first-price auction.  This
  result is written for approximate Nash equilibrium rather than exact
  Nash equilibrium because in winner-pays-bid auctions the latter
  tends not to exist.  
  \item $\res = 1-1/e \approx 0.63$ for the winner-pays-bid payment
    format and stochastic best-response problems.  This
    result is applicable to Bayes-Nash equilibria and, more generally, Bayes correlated
    equilibria with arbitrary information structures.  
  \item $\res = 1/2$ for the all-pay payment format and stochastic best-response problems.  This result is applicable to
    Bayes-Nash equilibria.
\end{itemize}
These results quantify the role of incomplete information in the loss
of efficiency.  When information is compete (in payoffs, actions of
opponents, and rules of the mechanism), an agent's best response
problem is fully efficient.  On the other hand we see that with
incomplete information the agent's best response problem can be
inefficient, but that inefficiency is bounded by a constant.  As
mentioned above, the best response problem of winner-pays-bid payment
formats mitigates informational inefficiencies more than that of the
all-pay payment formats.

As an example, combining these individual efficiency bounds with the competitive
efficiency of the first-price auction ($\rcr=1$), the equilibrium
welfare is at least an $\res\rcr = 0.63$ fraction of the optimal
welfare in equilibrium.  The revenue of the first-price auction with
per-agent monopoly reserves is at least a $\res\rcr/2 = 0.31$ fraction of the
optimal revenue in equilibrium.  Recall that these auctions are
optimal for welfare and revenue when the agents values are identically
distributed, the above robust welfare guarantee holds for all
correlated distributions and the above robust revenue guarantee holds
for all non-identical product distributions satisfying a standard
regularity property.  We see from this analysis that, while the
first-price auction can be inefficient, it can never be extremely
inefficient.



Competitive efficiency is a property of the rules of the auction that
map bids to winners, a.k.a., bid allocation rules.  Given the
individual efficiencies of standard best response problems of the
agents, the robust efficiency of a mechanism is approximately
governed by its competitive efficiency.  Therefore, analysis of robust
welfare and revenue of an auction approximately reduces to analysis of its
competitive efficiency.

To aid in the analysis of the competitive
efficiency we develop a number of closure properties.
\begin{itemize}
  \item Competitive efficiency is closed under reserve pricing, i.e.,
    the competitive efficiency of mechanism without reserve prices is
    equal to the competitive efficiency of the mechanism with the
    worst reserve prices.

  \item Competitive efficiency is closed under randomizations of
    mechanism, agent selection, and correlated signaling, i.e., the
    competitive efficiency of any randomized mechanism on any
    randomized sets of agents and with any information structure is
    equal to the competitive efficiency of the worst mechanism in the
    combination, set of agents, and information structure.
    As a consequence, the worst-case \rc\ is attained at degenerate randomizations over mechanisms and degenerate information structures. To characterize the \rc\ of such families of randomizations, it consequently suffices to analyze the \rc\ of simple mechanisms under full information.

  \item Competitive efficiency is closed under simultaneous
    composition (when bids are independently distributed), i.e., the
    competitive efficiency of the worst composite mechanism where a
    set of mechanisms are run in parallel is equal to the competitive
    efficiency of the worst mechanism in the set.  (In this
    composition, agents are assumed to be unit-demand, but can bid in
    multiple mechanisms at once if it is in their best interest.)
\end{itemize}
These closure properties imply that it is generally sufficient to
analyze competitive efficiency of deterministic mechanisms with
deterministic selection.  Such analyses are fairly straightforward
compared to more technically involved analyses of stochastic
equilibrium concepts.

The following mechanisms have competitive efficiency $\rcr=1$.
Intuitively, in these environments the externalities are only those of
one-for-one substitution:
\begin{itemize}
  \item Single-item multi-unit unit-demand highest-bids-win
    mechanisms, i.e., there are $k$ identical units for sale and they
    are allocated to the $k$-highest bidders. (Note that the
    first-price auction is the special case where $k=1$.)
  \item Rank-by-bid position auctions, i.e., there are $k$ positions
    with descending weights, and the highest $k$ bidders are assigned
    to these $k$ positions in order of bid.  The agent in the $j$th
    highest position receives a unit with probability equal to the
    $j$th position weight.  (The $k$-unit auction is a special case
    where the position weights are all 1.)  The position auction model
    was popularized by the studies of \citet{var-07} and
    \citet{EOS-07} of auctions for advertising on Internet search
    engines.
  \item (Single-bid) highest-bids-win matching markets, i.e., there
    are $m$ items and $n$ bidders  who each desire a single item from
    a known subset of the items.  The highest-bid-wins rule selects the
    bidders to match to maximize the sum of the matched bids.
    
  \item (Multiple-bid) per-item highest-bids-win matching markets,
    i.e., there are $m$ items and $n$ bidders who each desire a single
    item from a unknown subset of items, bids are submitted for each
    item, and the highest bidder for each item wins it.  (For this result,
    bids are required to be independently distributed in equilibrium.)
\end{itemize}

These results all follow from a single analysis of the competitive
efficiency of bid allocation rules based on greedy algorithms and the
above closure properties.  Greedy algorithms order the agents by a
function of their bids and then allocate to each one in turn, if doing
so is feasible.  For complicated
optimization problems, greedy algorithms are not generally optimal.
Our main analysis of competitive efficiency shows that the
non-strategic efficiency of the greedy algorithm, i.e., the fraction
of the optimal welfare it obtains in non-strategic environments, is
equal to its competitive efficiency. The results listed above follow
because the greedy algorithm is non-strategically efficient for these
environments. Notice that the $k$-unit highest-bids-win allocation
rule is given by the greedy-by-bid algorithm. The highest-bids-win rule for (single-bid) highest-bids-win matching markets can also be implemented greedily. The result for rank-by-bid position auctions
additionally views the position auction as a convex combination of
multi-unit auctions and invokes the closure of competitive efficiency
under convex combinations.  The result for  multiple-bid per-item matching
markets follows from the closure under simultaneous composition of the
first-price auction.

The competitive efficiency of mechanisms can be very bad when
externalities between agents are ones of many-for-one
complementarities.  The classic environment where agents are
complements is the single-minded combinatorial auction.  Here there
are $n$ agents and $m$ items and agents each desire the entirety of
known subsets of items.  The competitive efficiency of highest-bids-win
single-minded combinatorial auctions is $1/m$ and indeed
winner-pays-bid highest-bids-win auctions possess equilibria that are
$1/m$ efficient.  A classical result from the computer science
literature on algorithm design, however, shows that there is a greedy
algorithm for the single-minded combinatorial allocation problem that
is $1/\sqrt{m}$ efficient;\footnote{Computer scientists study greedy
algorithms like this one in part because computing the allocation that
maximizes the sum of winning bids is believed to be computationally
intractable.  Thus, we see that greedy algorithms should be preferred
as allocation rules for auctions both for their computational
tractability and for there competitive efficiency.}  thus the
winner-pays-bid auction with this greedy bid allocation rule is
$1/\sqrt{m}$ competitively efficient.  Explicitly designing mechanisms
to maximize competitive efficiency can significantly reduce the impact
of externalities.

\subsection{Related Work}

In symmetric environments in the canonical independent private value
model second-price, first-price, and all-pay auctions with or without
reserves are both welfare and revenue equivalent \citep{M81}.  In
asymmetric models, they are not equivalent, and their equilibria, even
under the simplifying assumption that the agents' values are uniformly
distributed, (but asymmetrically with different supports) are very
difficult to solve for \citep{KZ12}.

An approach from computer science for understanding the potentially
complex equilibria of auctions is to give robust bounds
on equilibria that do not require exactly identifying an equilibrium.
This literature analyzes the robust efficiency, a.k.a., the price of
anarchy, of games and mechanisms.  Within this literature, this paper
builds on the ``smooth games'' framework of \citet{R09} and the ``smooth
mechanisms'' extension of \citet{ST13}.  In this context, this paper
refines the smoothness framework for Bayesian games in two notable
ways. First, it decomposes smoothness into two components, separating
the consequences of best-response (individual efficiency) from the
specifics of a mechanism (competitive efficiency).  The former
quantifies losses due to incomplete information and the latter
quantifies losses due to externalities between agents.  Second, the
framework is compatible with the analysis of auction revenue by
\citet{M81} and allows for robust bounds on the equilibrium revenue of
auctions.

There are two subsequent works with strong connections to our
decomposition of robust efficiency into competitive efficiency and individual
efficiency.  First, \citet{DK15} show that competitive efficiency,
which they call ``permeability,'' is in fact a necessary condition (we
show sufficiency) for the equilibrium welfare of a mechanism to be
proven to be good via the smoothness framework.  Second, \citet{HNS15}
show how to derive empirical welfare bounds by measuring the degree to
which competitive efficiency and individual efficiency hold, without
needing to infer the agents' true values.

Recently, the economics literature has also seen a number of robust
treatments of mechanisms.  The main distinction between this
literature and the present paper is that this literature focuses on
absolute performance whereas the present paper (and the computer
science literature discussed previously) focuses on robust performance
relative to the optimal mechanism.  This difference is significant as
absolute robust analysis generally focuses on the settings were the
optimal performance is the lowest.  On the other hand, analysis
relative to the optimal performance requires that the performance is
close to what is possible absent concerns of robustness and is
invariant to the scale.  The relative framework should be preferred
when the range of the optimal performance varies significantly or
robustness is required at all scales.

The robust analyses in economics identify
mechanisms with good worse-case revenue when there is ambiguity with
respect to key aspects of the models.  For example, \citet{car-17}
considers a revenue maximizing seller with multiple items and a buyer
with values distributed with known marginals.  The correlation
structure is ambiguous.  He shows that the max-min mechanism is a linear
pricing.  \citet{BD-21} consider the design of revenue optimal
common-value auctions that are robust to information structures.

The most related works from the robust mechanism design literature give lower
bounds on the performance of standard mechanisms under permissive
assumptions on the beliefs of agents.  \citet{BBM-17} consider a
robust analysis of the revenue in a first-price auction with respect
to the knowledge of the agents.  They derive tight lower bounds on
this revenue based on a characterization of the minimum distribution
over winning bids.  In contrast, though our welfare bounds are tight,
our revenue bounds are loose.  \citet{BBM-19} give worst-case, over
information structure, revenue rankings of standard auction formats.
In contrast, the bounds of this paper are relative to the optimal
performance and apply broadly beyond single-item auctions.

Another strand of literature derives revenue guarantees for the
welfare-optimal Vickrey-Clarke-Groves (VCG) mechanism in asymmetric
environments. \citet{HR09} show that VCG with monopoly reserves, a
carefully chosen anonymous reserve, or duplicate bidders achieves
revenue that is a constant approximation to the revenue optimal
auction. \citet{DRY10} show that the single-sample mechanism,
essentially VCG using a single sample from the distribution as a
reserve, achieves approximately optimal revenue in broader
environments. \citet{RTY12} showed that in broader environments,
including matching environments, limiting the supply of items in
relation to the number of bidders gives a constant approximation to
the optimal auction.  See \citet{har-13} for a survey of results in
this area.

In the computer science literature a canonical robust mechanism design problem is that of maximizing the
ratio of the revenue from a single-item auction to the optimal auction
in worst case over valuation distributions of the agents.  This
problem was posed by \citet{DRY-15}, further considered by
\citet{FILS-15} and \citet{AB-20}, and optimally solved by
\citet{HJL-20}.  These papers aim to design optimal incentive
compatible mechanisms.  In contrast, the present paper gives robust
analyses of standard mechanisms (which are not incentive compatible).

\section{Preliminaries}
\label{sec:prelim}

This paper studies mechanisms for single-parameter agents.
A mechanism consists of action spaces $\actspaceagent$ for each agent $\agent$ (and joint action space $\actspaces = \prod_\agent
\actspaceagent$), an allocation rule $\bidalloc$, and a payment
rule $\bidpayment$.
Given the profile $\actions$ of actions
selected by each agent, the
mechanism computes an allocation level $\bidallocagent(\actions)\in[0,1]$ and payment $\bidpaymentagent(\actions)\in\mathbb R$ for each agent $\agent$, with 
$\bidalloc(\actions)$  and 
$\bidpayment(\actions)$ describing the full profiles of allocations and payments, respectively.
For indivisible goods, $\bidallocagent(\actions)$ captures agent $\agent$'s probability of service.
Each agent $\agent$ has a value $\valagent$ for service, and linear utility function $\bidutilagent^{\valagent}(\actions) = \valagent\bidallocagent(\actions)-\bidpaymentagent(\actions)$. We omit the superscript of $\valagent$ when it is clear from context.

Allocation for an $n$-agent mechanism is constrained by a {\em feasibility environment} $\feasible$.
For example, in selling a single item, $\feasible=\{\alloc\in[0,1]^{\numagents}\,|\,\sum_i \allocagent\leq 1\}$.
Settings we consider will be {\em downward-closed}, in the sense that for any $\alloc\in\feasible$ and any index $\agent$, $(0,\alloc_{-\agent})\in\feasible$.
Even fixing a mechanism's payment format (e.g.\ winner-pays-bid), rich feasibility settings admit a variety of allocation rules.
In our analysis framework, \trc\ (Section~\ref{sec:rc}) quantifies the consequences of this variation.

We allow a broad range of solution concepts within the private values model,\footnote{In particular, values are not interdependent, but may be correlated unless explicitly stated otherwise.} and take an equilibrium to be a joint distribution over actions and values.
Formally, let $\values=\valueset_1\times\ldots\times\valueset_n\subseteq\mathbb R^n_+$ be the space of $\numagents$-agent value profiles.
An equilibrium is a joint distribution $\jointcdf$ over $\values\times\actspaces$, which may be correlated across agents' actions, agents' values, or between actions and values.
Different concepts will impose different best response and independence conditions on these distributions.
We use $\bidcdfs$ and $\valuecdfs$ to denote the marginal distributions over the action profile $\actions$ and value profile $\vals$ respectively, with marginal distributions $\bidcdfagent$ and $\valuecdf$ for agent $\agent$.
Throughout, we maintain the agent-normal form interpretation of the distributions: each agent $\agent$ is selected from a population with distribution $\valuecdf$, and given selection $\vals$ the value $\valagent$ agents in population $\agent$ play according to $\bidcdf~|~\vals$.
Given mechanism $\mech$ and equilibrium $\jointcdf$, our two objectives of interest are revenue, given by $\rev(\mech,\jointcdf)=\mathbb E_{\actions}[\sum_\agent\bidpaymentagent(\actions)]$, and welfare, given by $\WEL(\mech,\jointcdf)=\mathbb E_{\vals,\actions}[\sum_\agent\bidutilagent^{\valagent}(\actions)]+\rev(\mech,\jointcdf)=\mathbb E_{\vals,\actions}[\sum_\agent\valagent\bidallocagent(\actions)]$.
Separating the bid and value distributions explicitly, we also write $\rev(\mech,\valuecdfs,\bidcdfs)$ and $\WEL(\mech,\valuecdfs,\bidcdfs)$.

The mechanisms we study are non-truthful and depend only minimally on priors.
Consequently, these mechanisms often have equilibria with suboptimal welfare or revenue.
We therefore pursue robust, or worst-case, approximation analyses, and ask how far from optimal a mechanism can be, quantified over equilibria in a family. 
Formally, given value distribution $\valuecdfs$, denote the optimal expected welfare by $\WEL(\OPTmech,\valuecdfs)=\Ex[\vals\sim\valuecdfs]{\max_{\allocopt\in \feasible} \sum_\agent\valagent\allocoptagent}$.
Fixing a family of value distributions $\mathbb F$ (e.g.\ degenerate, product, or unrestricted), and family of equilibria $\textsc{EQ}_{\valuecdfs}(\mech)$ for each $\valuecdfs\in \mathbb F$ and mechanism $\mech$ (e.g. $\epsilon$-Nash, Bayes-Nash, Bayes coarse correlated), we study the {\em robust efficiency}
\begin{equation*}
	\min_{\valuecdfs\in\mathbb F,\jointcdf\in \textsc{EQ}_{\valuecdfs}(\mech)}\frac{\WEL(M,\jointcdf)}{\WEL(\OPTmech,\valuecdfs)}.
\end{equation*}
A high robust efficiency (close to $1$) indicates that $\mech$ is always nearly efficient, whereas a very small value suggests a pathology that might rule out $\mech$ in practice. By exposing the structural characteristics that influence worst-case performance, our framework can inform robust design. We also show that several commonly-observed formats such as the single-item first-price auction can be explained by their high robust efficiency.

For revenue, we assume independently-distributed values, and compare to the revenue-optimal Bayesian incentive compatible mechanism.
Per \citet{M81}, this optimal mechanism can depend intricately on the prior $\valuecdfs$ over values. Denote the expected revenue of the optimal mechanism $\OPTmech_\valuecdfs$ for prior $\valuecdfs$ by $\rev(\OPTmech_\valuecdfs,\valuecdfs)$. We study mechanisms which are prior-independent except for {\em monopoly reserves} (defined formally in Section~\ref{sec:resolution}), which depend on much less fine-grained information than the form of the Bayesian optimal mechanism. For these mechanisms, we study the ratio
\begin{equation*}
	\min_{\valuecdfs\in\mathbb F, \jointcdf\in \textsc{EQ}_{\valuecdfs}(\mech_\valuecdfs)}\frac{\REV(\mech_\valuecdfs,\jointcdf)}{\REV(\OPTmech_\valuecdfs,\valuecdfs)},
\end{equation*}
where $\mech_\valuecdfs$ denotes mechanism $\mech$ endowed with monopoly reserve prices for $\valuecdfs$.
\section{\RC}
\label{sec:rc}

In this section, we formally define \trc. 
\Rc\ depends only on the rules of the mechanism.
Importantly, it does not depend on preferences, information, or strategies
of agents; nor does it depend on relationships between these.  As we
will see in \cref{sec:resolution}, this parameter governs
the extent to which equilibria in the mechanism obtain good welfare
and revenue, and is robust to specifics that are typically critical for
equilibria such as preferences, information, and strategies.  

For the time being we restrict the analysis to {\em winner-pays-bid} mechanisms. 
In such a mechanism, each agent $\agent$'s action is a single real-valued bid $\bidagent$.\footnote{We use the notation $\bidagent$ for actions in single-bid mechanisms, and $\actionagent$ for actions in general mechanisms.} 
Given a single-bid allocation rule $\bidalloc$, the winner-pays-bid mechanism for $\bidalloc$ has payment rule $\bidpaymentagent(\bids)=\bidagent\bidallocagent(\bids)$.
Section~\ref{sec:beyondpyb} extends this section's analysis to general mechanisms. 
\Trc\ compares two quantities: {\em threshold surplus}, which quantifies the level of competition each agent faces, and the revenue of the mechanism. 
Hence, \trc\ measures the extent to which competition translates into revenue. 
We first consider deterministic allocation rules without reserve
prices under full information, where this comparison is particularly straightforward. 
Competition can be easily
quantified by the {\em threshold bid} that each agent's bid must exceed
to win.

\begin{definition}
  \label{def:intcumulative-det}
  In an (implicit) deterministic single-bid rule $\bidalloc$ and
  a profile of bids $\bids$, we summarize agent $\agent$'s competition by the {\em threshold bid}
  $\thresholdagent(\bidothers)=\inf\{\bidagent\,|\,\bidallocagent(\bidagent,\bidothers)
  = 1\}$ that the agent must outbid to win.  Denote the {\em
    threshold surplus} for allocation $\allocaltagent$ by
  $\cumuthreshagent{\agent}{\allocaltagent}{} =
  \thresholdagent\,\allocaltagent$. Threshold surplus for deterministic rules is depicted in Figure~\ref{fig:detrc}.
\end{definition}

\begin{definition}\label{def:detrc}
  The {\em \rc}\ of a winner-pays-bid mechanism with deterministic allocation rule $\bidalloc$ is the largest $\revpar$
  such that, for any profile of bids $\bids$ and any feasible
  allocation $\allocalt$, the revenue is at least a $\revpar$ fraction
  of the {\em threshold surplus}.
  \begin{equation}
    \label{eq:rcsimple}
    \rev(\mech,\bids)
    \geq\revpar\sum\nolimits_\agent
    \cumuthreshagent{\agent}{\allocaltagent}{}.
  \end{equation}
\end{definition}

Later in this section, we will identify a number of
deterministic mechanisms of interest for which the \rc\ defined by Definition~\ref{def:detrc} is easy to analyze.
This definition is tractable because threshold
surplus is a simple linear function of the allocation, i.e.,
$\sum\nolimits_\agent
\cumuthreshagent{\agent}{\allocaltagent}{} =\sum\nolimits_\agent
\thresholdagent\, \allocaltagent$.  
To illustrate, consider the single-item highest-bid-wins rule, which has \rc\ $\revpar = 1$.  Given bids, the threshold bids of losers
are the highest bid; the threshold bid of the winner is the second
highest bid.  Thus, both the optimal threshold surplus and the revenue are equal to the highest bid (see the formal treatment in \Cref{sec:rc-single}).

The definition of \trc\ extends naturally to allocation rules with reserves, as well as to uncertainty in the \mf. (Such uncertainty could come from randomized allocations or from random selection of participants in the auction, captured by the value distribution.) 
The general definition of \rc\ will enable quantification of welfare and revenue over the more complicated mechanisms and equilibria arising from reserves and randomness. Importantly, though, extension theorems given subsequently will allow us to characterize \trc\ in these environments via analysis of simpler mechanisms with full information and no reserves using Definition~\ref{def:detrc}.

Reserve prices make it harder for an agent to receive allocation, but
do not result in revenue when not met.
For the general definition of \rc\ to compare revenue and competition, it must therefore discount reserves.
We incorporate reserves to mechanism in the following generic way. Given profile of reserves $\reserves$ and a mechanism with single-bid rule $\bidalloc$, the new rule $\bidallocr$: (1) Solicits a bid $\bidagent$ from each agent $\agent$. (2) For each agent $\agent$, if $\bidagent < \reserveagent$, sets $\bidpostresagent = 0$, else $\bidpostresagent=\bidagent$. (3) Allocates according to $\bidalloc(\bidspostres)$. 
We typically consider adding reserves to rules where bidding $0$ guarantees an agent goes unallocated, i.e. $\bidallocagent(0,\bidothers)=0$.

With uncertainty in the \mf, the competition for an agent $\agent$ with value $\valagent$ now depends on the distribution of other agents' bids, which may be correlated with $\valagent$.
Because threshold bids are no longer deterministic, the competition faced by an agent now depends on the desired level of allocation.
For the general definition of \rc\ to quantify competition, it must account for this range of options.
To this end, define the interim allocation rule for value $\valagent$ by $\bidallocagent(\bid\,|\,\valagent)=\mathbb E_{\valothers,\bidothers\,|\,\valagent}[\bidallocagent(b,\bidothers)]$, and define its inverse as $\interthresh{\agent}{\allocation\,|\,\valagent}=\inf
\{\bid\,|\,\bidallocagent(\bid\,|\,\valagent)\geq \allocation\}$.
The inverse $\interthresh{\agent}{\allocation\,|\,\valagent}$ now quantifies agent $\agent$'s obstacles to allocation at every allocation level $\allocation$, and allows us to generalize threshold surplus.
Intuitively, an agent faces strong competition if bids above the reserve generally yield low allocation.
\begin{definition}
  \label{def:intthresh}
  \label{def:intcumulative}
  For an (implicit) single-bid allocation rule $\bidalloc$ with (implicit) reserves $\reserves$ and joint distribution $\jointcdf$ over bids and values, 
  the
  competition faced by agent $\agent$ with value $\valagent$ for obtaining allocation
  $\allocaltagent$ is summarized by the \emph{threshold surplus with
    discounted reserve}, defined as
  $\cumuthreshagent{\agent}{\allocaltagent\,|\,\valagent}{}=\int_{\bidallocagent(\reserveagent\,|\,\valagent)}^{\allocaltagent}\interthresh{\agent}{\allocation\,|\,\valagent}\,d\allocation$.
  With the
  reserve price denoted explicitly, define
  $\cumuthreshagent{\agent}{\allocaltagent\,|\,\valagent}{\reserveagent}$. 
  The general definition of threshold surplus is depicted in Figure~\ref{fig:exprc}.
\end{definition}

\begin{figure}[t]	\small
	\centering
	\subfloat[][Threshold surplus for a deterministic allocation rule with degenerate bid profile $\bids$. Agent $\agent$'s allocation steps from $0$ to $1$ at $\thresholdagent(\bidothers)$. For a desired allocation level $\allocaltagent$, the threshold surplus is $\thresholdagent(\bidothers)\allocaltagent$. \label{fig:detrc} ]{
	\begin{tikzpicture}[xscale=3.5, yscale=3.5, domain=0:0.9, smooth]
		\draw[-] (0,0) -- (0,1) node[left] {$1$};
		\draw[-][line width=1.3pt] (0,0) -- (.7,0);
		\draw[-][line width=1.3pt] (.7,0) -- (.7,.95);
		\draw[-][line width=1.3pt] (.7,.95) -- (1.3,.95);
		\draw[-] (0,0) -- (1.4,0) node[below] {Bid};
		\draw[pattern=north east lines, pattern color=lightgray] 
		( 0,0) rectangle ( 0.7, 0.5);
		\node at (-0.1,0.5) {\footnotesize $\allocaltagent$};
		\node [fill=white] at ( 0.35,0.25) {\footnotesize $ \cumuthreshagent{\agent}{\allocaltagent}{}$};
		\node at (0.7, -0.09) {\footnotesize $\thresholdagent(\bidothers)$};
		\node at (0.7,1.1) {Bid Allocation Rule};
	\end{tikzpicture}
}\IFECELSE{\hspace*{0.2cm}}{\hspace*{1cm}}
\subfloat[][Threshold surplus for a randomized allocation rule with a reserve price.
Agent $\agent$ receives no allocation for bids below $\reserveagent$. Threshold surplus is only counted for the fraction of allocations above $\bidallocation(\reserveagent)$.\label{fig:exprc} ]
{
	\begin{tikzpicture}[xscale=3.5, yscale=3.5, domain=0:0.9, smooth]
		\draw[ pattern=north west lines, pattern color=lightgray, ] (0,.1) -- (0,.7) -- (.9, .7) -- (.91,0.7) .. controls ( 0.68,0.4) and ( 0.6,0.32) .. (0.4,0.2) --(0,0.2);
		\node [fill=white] at ( 0.35,0.46) {\footnotesize $ \cumuthreshagent{\agent}{\allocaltagent}{\reserveagent}$};
		\draw[line width=1.3pt] (1.12,1) -- (.91,0.7) .. controls ( 0.68,0.4) and ( 0.6,0.32) .. (0.4,0.2) --(0.4,0);
		\draw[-] (0,0) -- (0,1) node[left] {$1$};
		\draw[-] (0,0) -- (1.4,0);
		\draw[-][line width=1.3pt] (0,0) -- (.4,0);
		\draw[-] (0,0) -- (1.4,0) node[below] {Bid};
				\node at (-0.1,0.7) {\footnotesize $\allocaltagent$};

		\node at (-0.16,0.20) {\footnotesize $\bidallocation(\reserveagent)$}; 
		\node at (0.4, -0.09) {\footnotesize $\reserveagent$}; 
		\node at (0.7,1.1) {Bid Allocation Rule};
	\end{tikzpicture}
}   
	\caption{ \label{fig:thresholds}}
\end{figure}

\Rc\ compares revenue to threshold surplus for feasible allocations.
Definition~\ref{def:detrc} defined \trc\ in terms of deterministic allocations. With randomized values, we now consider randomized, {\em interim} allocations. Taking the view of the value distribution as defining populations of agents, an interim allocation level defines a feasible allocation probability for each agent in each population. Formally, let $\allocalt$ be an ex post allocation function, mapping value profiles to feasible allocations $\allocalt(\vals)\in \feasible$. The interim allocation for agent $\agent$ with value $\valagent$ we denote by $\allocaltagent(\valagent)=\Ex[\valothers\,|\,\valagent]{\allocaltagent(\vals)\,|\,\valagent}$. Overloading notation, a profile of interim allocation functions $\allocalt=(\altallocation_1(\cdot),\ldots,\altallocation_\numagents(\cdot))$ is {\em feasible} if it is induced by some feasible ex post allocation function.

\begin{definition}\label{def:exprc}
  The {\em \rc} of a winner-pays-bid mechanism with (randomized)
  allocation rule $\bidalloc$ for value distribution $\valuecdfs$ is the largest
  $\revpar$ such that, for any joint distribution $(\valuecdfs,\bidcdfs)$ over values and bids where values follow $\valuecdfs$, and any feasible profile $\allocalt$ of interim allocation functions, the expected revenue is at least a
  $\revpar$ fraction of the {\em threshold surplus}:
  \begin{equation}
  	\label{eq:exprc}
  	\rev(\mech,\valuecdfs,\bidcdfs)
  	\geq\revpar \sum\nolimits_\agent\Ex[\valagent\sim\valuecdf]{\cumuthreshagent{\agent}{\allocaltagent(\valagent)\,|\,\valagent}{}}.
  \end{equation}
	The \rc\ for a family of allocation rules over a family of value distributions is the smallest \rc\ of any rule in the first family over any distribution in the second. We refer to a rule's \rc\ over all value distributions simply as its \rc.
\end{definition}

Definition~\ref{def:exprc} generalizes the simpler Definition~\ref{def:detrc} to randomized bid profiles over populations of participants. 
Though Definition~\ref{def:exprc} depends on the value distribution $\valuecdfs$, we show via closure properties that a rule's \rc\ on all distributions can be analyzed by considering any single distribution, including a degenerate one. 
It will therefore often suffice to work with the simpler Definition~\ref{def:detrc} to characterize the \rc\ of Definition~\ref{def:exprc}. 
Conceptually, this shows that \trc\ depends only on the rules of the mechanism.

We now derive two closure properties of \trc\ (Definition~\ref{def:exprc}). The first is that \trc\ is closed with respect to reserve
prices, i.e., if a mechanism has \rc\ $\revpar$
without reserves, then its \rc\ with
reserves is $\revpar$. Then, we consider the impact of two types of randomization on \trc. First, we consider randomization over mechanisms. In other words, if the lowest \rc\
of any rule in a family of winner-pays-bid mechanisms is $\revpar$ then the \rc\ of any convex combination of mechanisms in the family
is at least $\revpar$. Second, we study closure under mixture over priors. 
We start with closure under reserves.

\begin{lemma}
	\label{lem:dettoexp2}
	\Rc\ is closed under reserve pricing, i.e., given a mechanism $\mech$ with \rc\ $\revpar$ on $\valuecdfs$ (without reserves), \trc\ of $\mech$ with reserves on $\valuecdfs$ is also $\revpar$.
\end{lemma}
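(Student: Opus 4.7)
The plan is to reduce the claim for $\mech^{\reserves}$ on $(\valuecdfs,\bidcdfs)$ to the claim for $\mech$ without reserves on an auxiliary joint distribution $(\valuecdfs,\bidcdfs')$ obtained by clipping bids below reserve, and then to invoke the hypothesized \rc\ of $\mech$. Fix any joint $(\valuecdfs,\bidcdfs)$ with value marginal $\valuecdfs$ and any feasible interim allocation profile $\allocalt$. Let $\bidcdfs'$ be the distribution of the clipped profile $\bidspostres$ when $\bids\sim\bidcdfs$, where $\bidpostresagent=\bidagent$ if $\bidagent\geq\reserveagent$ and $\bidpostresagent=0$ otherwise; the joint $(\valuecdfs,\bidcdfs')$ preserves the value marginal $\valuecdfs$. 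Because $\mech^{\reserves}$ allocates via $\bidalloc(\bidspostres)$ and the zero-bid property $\bidallocagent(0,\cdot)=0$ makes its winner-pays-bid payment equal to $\bidpostresagent\,\bidallocagent(\bidspostres)$, revenues match: $\rev(\mech^{\reserves},\valuecdfs,\bidcdfs)=\rev(\mech,\valuecdfs,\bidcdfs')$.

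Next I would compare threshold surpluses. The key step is to note that the interim allocations $\bidallocragent(\cdot\mid\valagent)$ under $(\mech^{\reserves},\bidcdfs)$ and $\bidallocagent(\cdot\mid\valagent)$ under $(\mech,\bidcdfs')$ coincide for bids at or above $\reserveagent$: in both cases the opponent bids entering $\bidalloc$ are the clipped profile $\bidspostres_{-\agent}$, with the same conditional law given $\valagent$. For bids below $\reserveagent$, $\bidallocragent(\bid\mid\valagent)=0$ by the zero-bid property. Setting $\alpha=\bidallocragent(\reserveagent\mid\valagent)=\bidallocagent(\reserveagent\mid\valagent)$, the inverses $\interthreshres{\agent}{\allocation\mid\valagent}{\reserveagent}$ (under $(\mech^{\reserves},\bidcdfs)$) and $\interthresh{\agent}{\allocation\mid\valagent}$ (under $(\mech,\bidcdfs')$) therefore agree on $[\alpha,1]$, yielding
\[
\cumuthreshagent{\agent}{\allocaltagent(\valagent)\mid\valagent}{\reserveagent}
=\int_{\alpha}^{\allocaltagent(\valagent)}\interthresh{\agent}{\allocation\mid\valagent}\,d\allocation
\leq\int_{0}^{\allocaltagent(\valagent)}\interthresh{\agent}{\allocation\mid\valagent}\,d\allocation
=\cumuthreshagent{\agent}{\allocaltagent(\valagent)\mid\valagent}{},
\]
where the final equality reads the right-hand threshold surplus as the reserve-free version for $\mech$ under $\bidcdfs'$ (so $\bidallocagent(0\mid\valagent)=0$ is the lower integration limit).

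Applying the hypothesized $\revpar$-\rc\ of $\mech$ on $\valuecdfs$ to $(\valuecdfs,\bidcdfs')$ with the same feasible $\allocalt$ and chaining,
\[
\rev(\mech^{\reserves},\valuecdfs,\bidcdfs)
=\rev(\mech,\valuecdfs,\bidcdfs')
\geq\revpar\sum\nolimits_{\agent}\Ex[\valagent\sim\valuecdf]{\cumuthreshagent{\agent}{\allocaltagent(\valagent)\mid\valagent}{}}
\geq\revpar\sum\nolimits_{\agent}\Ex[\valagent\sim\valuecdf]{\cumuthreshagent{\agent}{\allocaltagent(\valagent)\mid\valagent}{\reserveagent}},
\]
which is the claimed $\revpar$-bound for $\mech^{\reserves}$. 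The main (mild) obstacle will be the bookkeeping in the second step --- verifying that the two interim allocations really coincide above the reserve --- but this follows because clipping opponent bids inside $\mech^{\reserves}$ yields the same opponent distribution as pre-clipping inside $\bidcdfs'$.
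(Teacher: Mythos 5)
Your proposal is correct and follows essentially the same route as the paper's proof: clip bids below the reserve to form an auxiliary distribution, observe that revenue is preserved (using the zero-bid property), show that the threshold surplus with discounted reserve under $\mechr$ is at most the reserve-free threshold surplus of $\mech$ on the clipped distribution, and invoke the hypothesized \rc. The paper merely splits your single threshold-surplus comparison into three smaller steps (moving the opponents' reserves into the bid distribution, discounting agent $\agent$'s own reserve, and restricting to allocation levels above $\bidallocagent(\reserveagent\,|\,\valagent)$), but the substance is identical.
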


\begin{proof}[Proof of \cref{lem:dettoexp2}]
	\newcommand{\bidsres}{\bids^{\reserves}}
	\newcommand{\bidsresdist}{\actionsdist^{\reserves}}
	\newcommand{\bidsresdistothers}{\bidsresdist_{-\agent}}
	
	Let $\mech$ be a mechanism with \rc\ $\revpar$ without reserves (equivalently with reserves $\zeros$) on value distribution $\valuecdfs$.
  We will show that adding any profile of reserves $\reserves$ yields a mechanism $\mechr$ with \rc\ at least $\revpar$ with reserves $\reserves$ on $\valuecdfs$. That is, the worst-case \rc\ over  $\mech$ with any reserves is achieved by $\mech^{\zeros}$. Let $\bidalloc$ and $\bidallocr$ denote the allocation rules of $\mech$ and $\mechr$, respectively.
  
  For a bid distribution $\bidcdfs$, let $\bidcdfs^{\reserves}$ denote the bid distribution obtained by setting to $0$ all bids $\bidagent<\reserveagent$ failing to meet the reserves $\reserves$.
  The main ideas of the proof are that (a) outcomes are equivalent for
  $\bidalloc^{\zeros}$ on $\bidcdfs^{\reserves}$ and $\bidallocr$ on $\bidcdfs$, and (b) fixing the bid allocation rules, reserves only lower threshold surplus.
  Thus, a \rc\ without reserves implies the same \rc\ with reserves.
  We adopt the following notation that makes the allocation rule $\bidalloc$ and
  bid distribution $\bidcdfs$ explicit in our notation for threshold
  surplus with discounted reserve
  $\cumuthreshagent{\agent}{\allocation,\bidalloc,\bidcdfs\,|\,\valagent}{\reserveagent} = \cumuthreshagent{\agent}{\allocation\,|\,\valagent}{\reserveagent}$.
  Zero reserves will be explicitly designated as such.
  
  Consider the following analysis, with subsequent discussion:	
  \begin{align}
  	\label{eq:res-equiv1}
  	\rev(\mechr,\valuecdfs,\bidcdfs)&=\rev(\mech^{\zeros},\valuecdfs,\bidcdfs^{\reserves})
  	\\
  	\label{eq:res-revcov}
  	&\geq\revpar \sum\nolimits_{\agent}\Ex[\valagent]{\cumuthreshagent{\agent}{\allocaltagent(\valagent),\bidalloc^{\zeros},\bidcdfs^{\reserves}\,|\,\valagent}{0}}
  	\\
  	\label{eq:res-equiv2}
  	&=\revpar \sum\nolimits_{\agent}\Ex[\valagent]{\cumuthreshagent{\agent}{\allocaltagent(\valagent),\bidalloc^{(0,\reservesothers)},\bidcdfs\,|\,\valagent}{0}} 
  	\\
  	\label{eq:res-addres1}
  	&\geq \revpar\sum\nolimits_{\agent}\Ex[\valagent]{ \cumuthreshagent{\agent}{\allocaltagent(\valagent),\bidalloc^{(0,\reservesothers)},\bidcdfs\,|\,\valagent}{\reserveagent}}
  	\\
  	\label{eq:res-addres2}
  	&= \revpar\sum\nolimits_{\agent}\Ex[\valagent]{ \cumuthreshagent{\agent}{\allocaltagent(\valagent),\bidallocr,\bidcdfs\,|\,\valagent}{\reserveagent}}.
  \end{align}

Equations~\eqref{eq:res-equiv1} and~\eqref{eq:res-equiv2} follow by
the equivalence of outcomes from reserves in the allocation rule and
reserves in the bids. Equation~\eqref{eq:res-revcov}
follows from the assumed \rc\ of $\mech^{\zeros}$. Equation~\eqref{eq:res-addres1} follows
from the definition of threshold surplus with discounted reserves;
i.e., discounting reserves lowers the threshold surplus.
Equation~\eqref{eq:res-addres2} follows because $\cumuthreshagent{\agent}{\cdot\,|\,\valagent}{\reserveagent}$ considers only allocation levels above $\bidallocagent(\reserveagent\,|\,\valagent)$. Combining the sequence of inequalities we observe that $\mechr$ has
\rc\ at least $\revpar$ on bid distribution $\bidcdfs$.
\end{proof}

Closure under convex combination of both mechanisms and value distributions will follow from a single convexity argument. The definitions of each are below, followed by a proof encompassing both.

\begin{definition}
	Let $\convcomb\sim U[0,1]$ be a uniform random variable indexing over mechanisms $\mech^{\convcomb}$ with allocation rules $\bidalloc^\convcomb$ and feasibility environments $\feasible^\convcomb$.  The
	convex combination of these mechanisms has allocation rule $\bidalloc(\bids) =
	\expect[\convcomb]{\bidalloc^\convcomb(\bids)}$.  The corresponding feasibility environment is $\feasible = \{ \alloc =
	\expect[\convcomb]{\alloc^\convcomb} : \forall \convcomb \in[0,1],\ \alloc^\convcomb \in
	\feasible^\convcomb \}$.
\end{definition}

\begin{definition}
	Let $\convcombdist\sim U[0,1]$ be a uniform random variable indexing over value distributions $\valuecdfs^{\convcombdist}$. To draw from the convex combination of these distributions $\valuecdfs$, draw $\convcombdist\sim U[0,1]$ then draw $\vals\sim\valuecdfs^{\convcombdist}$.
\end{definition}

\begin{lemma}
	\label{lem:convandnorm}
	\label{lem:convex-combination}
	\Rc\ is closed under convex combination of allocation rules and of value distributions, i.e.\ (i) if a rule $\bidalloc$ has \rc\ $\revpar$ on a family of priors $\mathscr F$, then it has \rc\ $\revpar$ on the family of convex combinations over $\mathscr F$; (ii) if the family of rules $\mathscr X$ has \rc\ $\revpar$ on prior $\valuecdfs$, then the family of convex combinations over $\mathscr X$ also has \rc\ $\revpar$ on $\valuecdfs$.
\end{lemma}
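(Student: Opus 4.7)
The plan is to unify both parts of the lemma through a single convexity property of threshold surplus. Fix agent $\agent$ and value $\valagent$. Since $\bidallocagent(\cdot\,|\,\valagent)$ is nondecreasing in the bid and $\interthresh{\agent}{\cdot\,|\,\valagent}$ is its generalized inverse, a layer-cake swap of the order of integration in Definition~\ref{def:intthresh} gives the representation
\begin{equation*}
	\cumuthreshagent{\agent}{\allocaltagent\,|\,\valagent}{} \;=\; \int_{0}^{\infty} \bigl(\allocaltagent - \bidallocagent(b\,|\,\valagent)\bigr)^{+} \, db.
\end{equation*}
The integrand is jointly convex in the pair $\bigl(\allocaltagent,\bidallocagent(b\,|\,\valagent)\bigr)$, so Jensen's inequality yields the key estimate: whenever the interim rule $\bidallocagent(\cdot\,|\,\valagent) = \Ex[\omega]{\bidallocagent^{\omega}(\cdot\,|\,\valagent)}$ and the target $\allocaltagent = \Ex[\omega]{\allocaltagent^{\omega}}$ decompose over a common index $\omega$,
\begin{equation*}
	\cumuthreshagent{\agent}{\allocaltagent\,|\,\valagent}{} \;\leq\; \Ex[\omega]{\cumuthreshagent{\agent}{\allocaltagent^{\omega}\,|\,\valagent,\omega}{}}.
\end{equation*}
That is, mixing only shrinks threshold surplus, so the right-hand side of the \rc\ inequality \eqref{eq:exprc} can only weaken under aggregation.

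For part (i), fix $\valuecdfs = \Ex[\convcombdist]{\valuecdfs^{\convcombdist}}$, any joint $\jointcdf$ over $(\vals,\bids)$ with value marginal $\valuecdfs$, and a feasible interim $\allocalt$ induced by an ex post $\allocalt(\vals)\in\feasible$. I would enlarge the probability space to include $\convcombdist$, so that conditionally the value marginal is $\valuecdfs^{\convcombdist}$ and the conditional interim $\allocaltagent^{\convcombdist}(\valagent) = \Ex[\valothers\,|\,\valagent,\convcombdist]{\allocaltagent(\vals)}$ is still induced by a feasible ex post allocation. Revenue decomposes linearly as $\rev(\mech,\valuecdfs,\bidcdfs)=\Ex[\convcombdist]{\rev(\mech,\valuecdfs^{\convcombdist},\bidcdfs^{\convcombdist})}$, so applying Definition~\ref{def:exprc} to each $\valuecdfs^{\convcombdist}$ and averaging produces $\rev(\mech,\valuecdfs,\bidcdfs)\geq \revpar \sum_{\agent} \Ex[\convcombdist]{\Ex[\valagent\,|\,\convcombdist]{\cumuthreshagent{\agent}{\allocaltagent^{\convcombdist}(\valagent)\,|\,\valagent,\convcombdist}{}}}$. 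Commuting the two expectations by Fubini and applying the key estimate to the inner expectation (noting that $\Ex[\convcombdist\,|\,\valagent]{\allocaltagent^{\convcombdist}(\valagent)} = \allocaltagent(\valagent)$ and that the unconditional interim allocation rule is exactly the $\convcombdist$-mixture of the conditional ones) delivers the target inequality.

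Part (ii) runs in parallel. Fix prior $\valuecdfs$ and joint $\jointcdf$ under the mixed mechanism $\mech$, with ex post allocation $\allocalt(\vals)\in\feasible = \{\Ex[\convcomb]{\alloc^{\convcomb}}:\alloc^{\convcomb}\in\feasible^{\convcomb}\}$. By definition of $\feasible$ one can write $\allocalt(\vals) = \Ex[\convcomb]{\allocalt^{\convcomb}(\vals)}$ with each $\allocalt^{\convcomb}(\vals)\in\feasible^{\convcomb}$; each $\allocalt^{\convcomb}$ is thus a feasible interim allocation under $\mech^{\convcomb}$. Revenue again decomposes linearly as $\rev(\mech,\valuecdfs,\bidcdfs)=\Ex[\convcomb]{\rev(\mech^{\convcomb},\valuecdfs,\bidcdfs)}$, and at every $\valagent$ the interim allocation curve of $\mech$ is the $\convcomb$-mixture of those of the components. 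Applying the \rc\ of each $\mech^{\convcomb}$ on $\valuecdfs$ with interim allocation $\allocaltagent^{\convcomb}$ and invoking the key estimate with $\omega = \convcomb$ completes the argument.

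The main obstacle is the convexity representation itself: without the layer-cake rewrite it is not obvious how mixing underlying bid distributions interacts with inverting the allocation curve and integrating, and one might fear the inequality goes in the wrong direction. Once the representation is in hand, both parts reduce to the same Jensen calculation, with the only remaining work being Fubini-style bookkeeping to match ex post feasible decompositions against interim mixtures of allocation curves.
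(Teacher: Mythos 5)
Your proposal is correct and follows essentially the same route as the paper's proof: the layer-cake representation $\cumuthreshagent{\agent}{\allocaltagent\,|\,\valagent}{}=\int_0^\infty(\allocaltagent-\bidallocagent(b\,|\,\valagent))^+\,db$ together with convexity of $(\cdot)^+$ and Jensen to show mixing only increases threshold surplus, then linearity of revenue plus the component-wise \rc. The only cosmetic difference is that the paper treats the two convex combinations with a single joint index $(\convcomb,\convcombdist)$ rather than in two parallel arguments.
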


\begin{proof}
	Let $\mech$ be a convex combination of mechanisms indexed by $\convcomb$ (with corresponding allocation rules $\bidalloc^{\convcomb}$, feasibility settings $\feasible^\convcomb$, and combinations $\bidalloc$ and $\feasible$). Further let $\valuecdfs$ be a convex combination of distributions, indexed by $\convcombdist$. Assume for all $\convcomb$ and $\convcombdist$, $\mech^\convcomb$ has \rc\ at least $\revpar$ on $\valuecdfs^{\convcombdist}$. We will argue that $\mech$ has \rc\ at least $\revpar$ on $\valuecdfs$. This implies both stated claims, as either convex combination could be trivial.

Let $\allocalt$ map value profiles to allocations feasible for the convex combination environment $\feasible$.
Then for each profile $\vals$, we can write $\allocalt(\vals) = \Ex[\convcomb]{\allocalt^{\convcomb}(\vals)}$ for some collection of allocation functions $\allocalt^{\convcomb}$ respectively feasible for $\feasible^{\convcomb}$. 
Moreover, for an agent $\agent$ with value $\valagent$ the interim allocation $\allocaltagent(\valagent)=\Ex[\vals]{\allocaltagent(\vals)\,|\,\valagent}$ satisfies $\allocaltagent(\valagent)=\expect[\convcomb,\convcombdist]{\allocaltagent^{\convcomb,\convcombdist}(\valagent)\,|\,\valagent}$, where $\allocaltagent^{\convcomb,\convcombdist}(\valagent)$ is the interim allocation with respect to $\allocalt^{\convcomb}$ under distribution $\valuecdfs^{\convcombdist}$. 
Similarly, let $\bidcdfs$ be a distribution of bids (that may be correlated with $\valuecdfs$).
We may draw from the joint distribution of bids and values by first drawing $\convcombdist$, then drawing $\vals$ and $\bids$ jointly from $\valuecdfs^{\convcombdist}$ and $\bidcdfs^{\convcombdist}$ for suitably defined bid distributions $\bidcdfs^{\convcombdist}$.

Now consider an agent $\agent$ with value $\valagent$, and consider an allocation level $\allocaltagent(\valagent) = \Ex[\convcomb,\convcombdist]{\allocaltagent^{\convcomb,\convcombdist}(\valagent)\,|\,\valagent}$.
Let $\cumuthreshagent{\agent}{\allocaltagent^{\convcomb,\convcombdist}(\valagent)\,|\,\valagent}{\convcomb,\convcombdist}$ denote $\agent$'s threshold surplus with respect to $\mech^\convcomb$, $\bidcdfs^{\convcombdist}$, and $\valuecdfs^{\convcombdist}$, and $\cumuthreshagent{\agent}{\allocaltagent(\valagent)\,|\,\valagent}{}$ the threshold surplus with respect to $\mech$, $\bidcdfs$, and $\valuecdfs$.
 The following inequalities show that $\expect[\convcomb,\convcombdist]{\cumuthreshagent{\agent}{\allocaltagent^{\convcomb,\convcombdist}(\valagent)\,|\,\valagent}{\convcomb,\convcombdist}\,|\,\valagent}\geq \cumuthreshagent{\agent}{\allocaltagent(\valagent)\,|\,\valagent}{}$, explained after their statement:
\begin{align*}
	\cumuthreshagent{\agent}{\allocaltagent(\valagent)\,|\,\valagent}{}
	&= \int_0^{\infty} \max(\allocaltagent(\valagent) - \bidallocagent(z\,|\,\valagent),0)\,dz\\
	&= \int_0^{\infty} \max\Big(\expect[\convcomb,\convcombdist]{\allocaltagent^{\convcomb,\convcombdist}(\valagent) - \bidallocagent^{\convcomb,\convcombdist}(z\,|\,\valagent)\,\big|\,\valagent},0\Big)\,dz\\
	& \leq \expect[\convcomb,\convcombdist]{\int_0^{\infty} \max(\allocaltagent^{\convcomb,\convcombdist}(\valagent)-\bidallocagent^{\convcomb,\convcombdist}(z\,|\,\valagent),0)\,dz\,\Big|\,\valagent}\\
	& = \expect[\convcomb,\convcombdist]{\cumuthreshagent{\agent}{\allocaltagent^{\convcomb,\convcombdist}(\valagent)\,|\,\valagent}{\convcomb,\convcombdist}\,\big|\,\valagent}.
\end{align*}
The first and last equalities are another way to write the integral defining the threshold surplus.  The second equality is from the definitions of convex combination, and linearity of expectation.  The inequality follows from convexity of the function $\max(\cdot,0)$ and linearity of integration.

Now we write out the definition of the \rc\ for feasible allocation $\allocalt$:
\begin{align*}
	\rev(\mech,\valuecdfs,\bidcdfs)
	&=\Ex[\convcomb,\convcombdist]{\rev(\mech^\convcomb,\valuecdfs^{\convcombdist},\bidcdfs^{\convcombdist})}\\
	&\geq \mathbb
	 E_{\convcomb,\convcombdist}\left[\revpar\sum\nolimits_{\agent}\Ex[\valagent\sim\valuecdf^{\convcombdist}]{\cumuthreshagent{\agent}{\allocaltagent^{\convcomb,\convcombdist}(\valagent)\,|\,\valagent}{\convcomb,\convcombdist}}\right]\\
	 &= \revpar
	 \sum\nolimits_{\agent}\Ex[\valagent\sim\valuecdfs]{\mathbb E_{\convcomb,\convcombdist\,|\,\valagent}\left[\cumuthreshagent{\agent}{\allocaltagent^{\convcomb,\convcombdist}(\valagent)\,|\,\valagent}{\convcomb,\convcombdist}\right]}\\
	 &\geq\revpar
	 \sum\nolimits_{\agent}\Ex[\valagent\sim\valuecdfs]{\cumuthreshagent{\agent}{\allocaltagent(\valagent)\,|\,\valagent}{}}.
\end{align*}

The first line follows from the definitions of convex combination and revenue. The second follows from applying \trc. The third line follows from properties of expectation, and the final line follows because $\expect[\convcomb,\convcombdist]{\cumuthreshagent{\agent}{\allocaltagent^{\convcomb,\convcombdist}(\valagent)\,|\,\valagent}{\convcomb,\convcombdist}\,|\,\valagent}\geq \cumuthreshagent{\agent}{\allocaltagent(\valagent)\,|\,\valagent}{}$, as argued above. We conclude that $\mech$ has \rc\ at least $\revpar$ on $\valuecdfs$. 
\end{proof}

Definition~\ref{def:exprc} formulates \rc\ with an explicit dependence on the value distribution. To analyze mechanisms robustly, we seek to understand \rc\ in the worst case across distributions. The following discussion shows that as a consequence of closure under convex combination, it will suffice to analyze only degenerate distributions.

First, notice that under a degenerate value distribution, the definition of \rc\ simplifies, and is equivalent to the statement that for any bid distribution $\bidcdfs$ and fixed allocation profile $\allocalt\in\feasible$,
\begin{equation*}
	\rev(\mech,\bidcdfs)\geq\revpar \sum\nolimits_\agent\cumuthreshagent{\agent}{\allocaltagent}{},
\end{equation*}
where degeneracy of the value distribution allows us to omit conditioning from $\cumuthreshagent{\agent}{\allocaltagent}{}$. This inequality no longer depends on the value distribution, and when $\bidcdfs$ is also degenerate (i.e.\ a fixed bid profile), matches Definition~\ref{def:detrc} for deterministic allocation rules. Since the \rc\ for degenerate value distributions does not depend on the value profile chosen, we obtain, it must be the same for all choices of value profile. Hence:
\begin{lemma}\label{lem:degen}
	The \rc\ of any winner-pays-bid mechanism is the same on all degenerate distributions.
\end{lemma}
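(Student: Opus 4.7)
The plan is to unpack Definition~\ref{def:exprc} when $\valuecdfs$ is degenerate, and verify that none of the resulting quantities actually reference the specific value profile at which the distribution is supported. Fix any value profile $\vals^0 \in \mathbb R_+^n$, and let $\valuecdfs^{\vals^0}$ denote the point mass at $\vals^0$. I would show that the set of $(\bidcdfs, \allocalt)$ pairs entering the defining inequality, along with the two quantities being compared, are the same for every choice of $\vals^0$.

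First I would handle the feasibility side. Under $\valuecdfs^{\vals^0}$, an interim allocation profile is just a feasible ex post allocation: $\allocaltagent(\valagent^0_\agent)$ is the marginal of $\allocalt(\vals)$ at the sole value profile $\vals^0$, so the set of feasible interim profiles is precisely $\feasible$, independent of $\vals^0$. Similarly, an arbitrary joint distribution $\bidcdfs$ over bids is admissible for any $\vals^0$ (conditioning on $\valagent=\valagent^0_\agent$ is vacuous in the degenerate case). Second, I would verify that both sides of the inequality
\begin{equation*}
	\rev(\mech,\valuecdfs^{\vals^0},\bidcdfs)\ \geq\ \revpar \sum\nolimits_\agent\Ex[\valagent\sim\valuecdfs^{\vals^0}_\agent]{\cumuthreshagent{\agent}{\allocaltagent(\valagent)\,|\,\valagent}{}}
\end{equation*}
do not depend on $\vals^0$. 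For the left side, the winner-pays-bid revenue $\mathbb E_{\bids\sim\bidcdfs}[\sum_\agent \bidagent\bidallocagent(\bids)]$ depends only on $\bidcdfs$ and the rule $\bidalloc$. For the right side, the threshold surplus $\cumuthreshagent{\agent}{\allocaltagent\,|\,\valagent^0_\agent}{}=\int_{\bidallocagent(\reserveagent\,|\,\valagent^0_\agent)}^{\allocaltagent}\interthresh{\agent}{\allocation\,|\,\valagent^0_\agent}\,d\allocation$ is defined via the interim allocation $\bidallocagent(\bid\,|\,\valagent^0_\agent)=\mathbb E_{\bidothers\,|\,\valagent^0_\agent}[\bidallocagent(\bid,\bidothers)]$; because $\valagent$ is degenerate the conditioning is vacuous and this reduces to $\mathbb E_{\bidothers\sim\bidcdfs_{-\agent}}[\bidallocagent(\bid,\bidothers)]$, again only a function of $\bidalloc$ and $\bidcdfs$.

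Putting these observations together, the collection of inequalities that must hold to certify a value of $\revpar$ is literally identical across choices of $\vals^0$, so the supremum over feasible $\revpar$ is the same. Hence the \rc\ of $\mech$ is the same on every degenerate distribution. The only subtlety to watch is bookkeeping around conditioning on a degenerate variable — all conditional expectations collapse to unconditional ones, and the proof is otherwise a direct identification of terms; I do not anticipate a genuine obstacle beyond this unpacking.
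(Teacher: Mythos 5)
Your proposal is correct and follows essentially the same route as the paper: the paper likewise observes that under a degenerate prior the conditioning is vacuous, the feasible interim allocations reduce to $\feasible$, and the defining inequality $\rev(\mech,\bidcdfs)\geq\revpar\sum_\agent\cumuthreshagent{\agent}{\allocaltagent}{}$ involves only $\bidcdfs$ and the allocation rule, hence is independent of the supporting value profile. Your write-up just makes the bookkeeping (identical admissible pairs $(\bidcdfs,\allocalt)$ and identical quantities on both sides) slightly more explicit.
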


Any distribution can be written as a convex combination over degenerate ones. As a consequence of Lemma~\ref{lem:convex-combination}, we therefore obtain:

\begin{corollary}\label{cor:degen}
	A winner-pays-bid mechanism's \rc\ is equal to its \rc\ on only degenerate distributions.
\end{corollary}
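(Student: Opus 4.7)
The plan is to sandwich the general \rc\ between two bounds coming from the two lemmas just established. Write $\revpar^{\text{gen}}$ for the \rc\ over all value distributions and $\revpar^{\text{deg}}$ for the \rc\ restricted to degenerate distributions. By \cref{lem:degen}, $\revpar^{\text{deg}}$ is a single well-defined number, not dependent on which degenerate distribution is chosen.

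First I would establish the easy direction, $\revpar^{\text{gen}} \leq \revpar^{\text{deg}}$. This is immediate from the definition: the \rc\ is the largest constant for which inequality~\eqref{eq:exprc} holds universally, so enlarging the family of allowable value distributions from degenerate ones to all distributions can only make the constraint tighter and thus the supremum smaller.

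Second, for the reverse direction $\revpar^{\text{gen}} \geq \revpar^{\text{deg}}$, I would invoke \cref{lem:convex-combination}(i). Any value distribution $\valuecdfs$ can be written as a convex combination of degenerate distributions (take $\convcombdist \sim U[0,1]$ indexing point masses $\valuecdfs^{\convcombdist} = \delta_{\vals(\convcombdist)}$ so that the induced mixture equals $\valuecdfs$). By \cref{lem:degen}, the mechanism has \rc\ at least $\revpar^{\text{deg}}$ on every such degenerate component. \Cref{lem:convex-combination}(i) then lifts this to the convex combination, yielding \rc\ at least $\revpar^{\text{deg}}$ on $\valuecdfs$ itself. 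Since $\valuecdfs$ was arbitrary, $\revpar^{\text{gen}} \geq \revpar^{\text{deg}}$.

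Combining the two directions gives $\revpar^{\text{gen}} = \revpar^{\text{deg}}$, which is the claim. There is no real obstacle here beyond carefully observing that the lemmas are set up exactly to produce both inequalities; the only mild subtlety is recognizing that the mixture representation of an arbitrary distribution as a mixture over point masses is precisely the setting of \cref{lem:convex-combination}(i), so no new convexity argument is needed.
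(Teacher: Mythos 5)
Your proposal is correct and follows essentially the same route as the paper: the paper's (terse) justification is precisely that any distribution is a convex combination of degenerate ones, so \cref{lem:degen} and \cref{lem:convex-combination}(i) give the nontrivial direction, while the trivial direction follows since degenerate distributions form a subset of all distributions. You have simply made both inequalities explicit, which the paper leaves implicit.
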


Lemma~\ref{lem:degen} and Corollary~\ref{cor:degen} together imply that characterizing a mechanism's \rc\ on any fixed, degenerate value distribution will also tightly characterize its worst-case \rc\ over all distributions.
As we analyze the \rc\ of specific mechanisms in Sections~\ref{sec:rc-single}-\ref{sec:rc-gfp}, this tightness will greatly simplify the analysis.

 In
Section~\ref{sec:rc-single}, we consider the highest-bid-wins mechanism for single-item multi-unit auctions, and prove a \rc\ of $1$. For contrast, we then turn the multi-item
setting of single-minded combinatorial auctions in
Section~\ref{sec:rc-hbw}. We show that for multiple items,
highest-bids-win, which is welfare-optimal in the absence of incentives, has an
undesirable \rc. In Section~\ref{sec:rc-greedy}, we
consider the \rc\ of greedy
mechanisms. We observe that they generally lack the pathology of highest-bids-win single-minded combinatorial auctions. Finally, we demonstrate the usefulness of closure under convex combination by
considering position auctions in Section~\ref{sec:rc-gfp}. 

In
Section~\ref{sec:resolution}, we show how these \rc\ bounds imply a variety of robust welfare and revenue guarantees under several standard notions of equilibrium. We define individual efficiency to measure the impact of the equilibrium concept on robust welfare. Together, \rc\ and individual efficiency govern a mechanism's robust performance.

\subsection{Single-item Multi-unit Auctions}
\label{sec:rc-single}

This section considers the $n$-agent single-item and multi-unit highest-bids-win mechanisms for unit-demand agents.  Under the $k$-unit highest-bids-win rule, each agent $\agent$
submits a bid $\bidagent$, the and the $k$ highest bidders win a unit. This generalizes the standard allocation rule for single-item auctions. By Lemma~\ref{cor:degen}, it suffices to analyze degenerate priors. Since the highest-bids-win rule and $k$-unit environment are both deterministic, we may consider the simpler Definition~\ref{def:detrc}. We first give the proof for single-item environments.

\begin{theorem}
  \label{thm:first-price-rc}
  The highest-bids-win winner-pays-bid mechanism has \rc\ $1$ in single-item environments.
\end{theorem}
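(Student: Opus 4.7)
The plan is to reduce to the simplest possible case and then perform a direct, one-line calculation. By Corollary~\ref{cor:degen}, the \rc\ of the highest-bids-win winner-pays-bid mechanism is determined by its \rc\ on an arbitrary degenerate value distribution, and on such a distribution the condition of Definition~\ref{def:exprc} collapses to the deterministic condition of Definition~\ref{def:detrc}. So it suffices to fix an arbitrary bid profile $\bids$ and an arbitrary feasible allocation $\allocalt \in \feasible$ (with $\sum_\agent \allocaltagent \leq 1$), and verify
\[
  \rev(\mech, \bids) \;\geq\; \sum\nolimits_\agent \thresholdagent(\bidothers)\, \allocaltagent.
\]

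First I would compute the two sides for the highest-bids-win rule. Let $b^\star = \max_\agent \bidagent$ denote the winning bid; breaking ties in any fixed way, the winner pays $b^\star$ and the mechanism's revenue is $\rev(\mech,\bids) = b^\star$. For the threshold $\thresholdagent(\bidothers) = \inf\{\bidagent : \bidallocagent(\bidagent, \bidothers) = 1\}$, the key observation is that to win against $\bidothers$ one must outbid every other bid, so for any agent $\agent$ the threshold is the maximum of the bids other than $\bidagent$. In particular, every losing agent has $\thresholdagent(\bidothers) = b^\star$, and the winning agent has a threshold equal to the second-highest bid, which is at most $b^\star$. Thus $\thresholdagent(\bidothers) \leq b^\star$ for every agent $\agent$.

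Combining this with feasibility gives
\[
  \sum\nolimits_\agent \thresholdagent(\bidothers)\, \allocaltagent
  \;\leq\; b^\star \sum\nolimits_\agent \allocaltagent
  \;\leq\; b^\star
  \;=\; \rev(\mech,\bids),
\]
which is exactly \eqref{eq:rcsimple} with $\revpar = 1$. Since this holds for every $\bids$ and every feasible $\allocalt$, the mechanism has \rc\ at least $1$; and since \rc\ is at most $1$ by definition, it equals $1$.

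There is essentially no obstacle here: the argument is just that in a single-item highest-bid-wins environment, no agent's threshold can exceed the top bid, and feasibility forces the allocation weights to sum to at most one, so the total threshold surplus is bounded by the top bid, which is precisely the revenue. The only minor subtlety is treating the winner's threshold correctly (it is the second-highest bid, not $b^\star$), but this only makes the inequality easier.
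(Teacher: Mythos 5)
Your proof is correct and essentially identical to the paper's: reduce to a fixed bid profile via the degenerate-distribution closure, note revenue equals the top bid while every threshold is at most the top bid, and apply feasibility $\sum_\agent \allocaltagent \leq 1$. One tiny quibble: the upper bound $\revpar \leq 1$ is not literally ``by definition'' --- the paper certifies it by choosing $\allocaltagent = 1$ for a losing agent, whose threshold equals the top bid, making the chain of inequalities tight; your own observation that losers' thresholds equal $b^\star$ already supplies this.
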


\begin{proof}
  The revenue under bid profile $\bids$ is
  the highest bid, i.e.,  $\sum_\agent \bidagent\bidallocagent(\bids) = \max_\agent \bidagent$.  Each agent's threshold bid
  $\thresholdagent(\bidothers)$ is at most the highest bid.  Thus, for any feasible allocation $\allocalt$, i.e., with $\sum_\agent \allocaltagent \leq 1$, we can bound the threshold surplus by the bid surplus. 
  \begin{align*}
    \sum\nolimits_\agent \thresholdagent(\bidothers)\,\allocaltagent
    &\leq \max\nolimits_\agent \bidagent \sum\nolimits_\agent \allocaltagent\\
    &\leq \sum\nolimits_\agent \bidagent\bidallocagent(\bids).
  \end{align*}
In particular, choosing $\allocaltagent=1$ for an $\agent$ other than the highest bidder causes the above inequalities to hold with equality.
\end{proof}

A very similar proof shows that the highest-bids-win rule for selling $k$-units to unit-demand agents has \rc\ $1$.  Rather than give the elementary proof here, we will
observe it as a corollary of \cref{thm:matroid-rc}, given subsequently.

\begin{theorem}
	\label{thm:multi-unit-rc}
	For any multi-unit environment, the
        highest-bids-win winner-pays-bid mechanism has \rc\ 1.
\end{theorem}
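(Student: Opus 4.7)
The plan is to mimic the proof of \cref{thm:first-price-rc}, using \cref{cor:degen} to reduce to the simpler deterministic setting of \cref{def:detrc}, and then exploiting the uniform bound on threshold bids imposed by the $k$-unit feasibility constraint.

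First, by \cref{cor:degen}, it suffices to verify that the $k$-unit highest-bids-win mechanism satisfies \eqref{eq:rcsimple} with $\revpar = 1$ for every fixed bid profile $\bids$ and every feasible allocation $\allocalt$, where feasibility here means $\allocaltagent \in [0,1]$ and $\sum_\agent \allocaltagent \leq k$. Order the bids as $\bid_{(1)} \geq \bid_{(2)} \geq \cdots \geq \bid_{(\numagents)}$. The winner-pays-bid revenue under the $k$-unit rule is exactly the sum of the top $k$ bids, i.e.\ $\rev(\mech,\bids) = \sum_{j=1}^k \bid_{(j)}$.

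Next, I would bound each threshold bid. For an arbitrary agent $\agent$, the threshold $\thresholdagent(\bidothers) = \inf\{\bid \,|\, \bidallocagent(\bid,\bidothers) = 1\}$ is the $k$-th highest bid among $\bidothers$. Whether $\agent$ is currently a winner or a loser, this quantity is at most $\bid_{(k)}$ (if $\agent$'s own bid is $\bid_{(k)}$ or above, removing it leaves $\bid_{(k)}$ or something smaller as the $k$-th order statistic; if $\agent$'s bid is below $\bid_{(k)}$, the $k$-th highest among the others is exactly $\bid_{(k)}$). Hence $\thresholdagent(\bidothers) \leq \bid_{(k)}$ uniformly in $\agent$.

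Combining these two observations gives the desired chain: for any feasible $\allocalt$,
\begin{align*}
\sum\nolimits_\agent \thresholdagent(\bidothers)\,\allocaltagent
&\leq \bid_{(k)} \sum\nolimits_\agent \allocaltagent \\
&\leq k\,\bid_{(k)} \\
&\leq \sum\nolimits_{j=1}^k \bid_{(j)} = \rev(\mech,\bids),
\end{align*}
which establishes \rc\ at least $1$, and choosing $\allocaltagent = 1$ for the $k$ losers (assuming $n > k$) and the $k$-th highest bid equal to the top $k$ bids shows this is tight. The only real subtlety—and what I would flag as the one place to be careful—is the case analysis for the threshold bid of a winner versus a loser in the ordered bid sequence; everything else is an elementary application of the fact that the top $k$ bids each dominate the $k$-th highest.
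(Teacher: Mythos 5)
Your proof is correct, but it is not the route the paper actually takes. The paper explicitly declines to give the elementary argument ("Rather than give the elementary proof here, we will observe it as a corollary of \cref{thm:matroid-rc}") and instead derives \cref{thm:multi-unit-rc} from the greedy machinery of Section~\ref{sec:rc-greedy}: $k$-unit environments are (uniform) matroids, greedy-by-bid is a $1$-approximation on matroids (\cref{lem:matroid}), and any $\alpha$-approximate coalitionally non-bossy rule has \rc\ at least $\alpha$ (\cref{thm:greedy}, via \cref{lem:bossy,lem:bossapprox}). What you do instead is write out the direct generalization of the single-item proof of \cref{thm:first-price-rc}: revenue equals the sum of the top $k$ bids, every threshold is at most $\bid_{(k)}$, and feasibility caps $\sum_\agent \allocaltagent$ at $k$, so the threshold surplus is at most $k\,\bid_{(k)} \leq \sum_{j=1}^k \bid_{(j)}$. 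Your case analysis for the threshold of a winner versus a loser is right, and your use of \cref{cor:degen} to reduce to the deterministic \cref{def:detrc} matches exactly how the paper itself justifies that reduction at the start of Section~\ref{sec:rc-single} (one could quibble that \cref{cor:degen} only degenerates the value distribution, not the bid distribution, but the paper makes the identical move, so you are on equal footing). The trade-off between the two approaches: yours is self-contained and elementary, whereas the paper's buys the result for all matroid environments at once (e.g.\ transversal matroids for matching markets) from a single lemma, at the cost of first developing coalitional non-bossiness. One small caveat: your tightness instance needs $n\geq 2k$ so that $k$ losers exist, but tightness is a one-line afterthought in the paper's single-item proof as well.
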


\subsection{Single-Minded Combinatorial Auctions: Highest Bids Win}
\label{sec:rc-hbw}

In Section~\ref{sec:rc-single}, we saw that in single-item and $k$-unit settings, the mechanism that allocates the highest bidders has \rc\ $1$. We now present a negative example, and show that in another natural multi-item setting, a generalization of this same rule has poor competitive efficiency. That is, competition corresponds less directly to revenue. This multi-item setting and allocation mechanism are as follows:

\begin{definition}
	A \emph{single-minded combinatorial auction} feasibility environment
	is defined by $\numitems$ indivisible items, $n$ agents that each desire
	a bundle of items, and the constraint that no item can be allocated
	more than once.  Agent $\agent$ desires the set of items
	$\wantsagent$, she receives value $\valagent$ for receiving any
	superset of $\wantsagent$ and value $0$ otherwise.  An allocation vector
	$\alloc\in\{0,1\}^n$ is feasible if and only if for all agents $\agent \neq \agentalt$, simultaneous allocation $\allocagent = \allocagent[\agentalt] = 1$ implies disjoint demands $\wantsagent \cap \wantsagent[\agentalt] = \emptyset$.
\end{definition}

\begin{definition}
	The \emph{highest-bids-win winner-pays-bid mechanism} allocates the feasible set of bidders with the highest total bid, and charges each winner their bid.
\end{definition}

In a single-minded combinatorial auction under the highest-bids-win mechanism, serving a single bidder may simultaneously block many others. As an example, consider an $\numitems+1$-agent environment, where $\wantsagent[\agent]=\{\agent\}$ for $\agent\in\{1,\ldots,\numitems\}$ and $\wantsagent[\numitems+1]=\{1,\ldots,\numitems\}$.
 Agent $\numitems+1$ is mutually exclusive with any subset of the other agents. The impact of agent $\numitems+1$ on the competition experienced by agents $1,\ldots,\numitems$ far exceeds their impact on revenue. Under the bid profile $\bids=(0,\ldots,0,1)$, the revenue is $1$, while each agent $\agent\in\{1,\ldots,\numitems\}$ faces a threshold bid of $\thresholdagent(\bidothers)=1$. Since the allocation vector $\allocalt=(1,\ldots,1,0)$ is feasible, this immediately implies:

\begin{lemma}
	\label{lem:combbad}
	There exists a single-minded combinatorial auction environment where the highest-bids-win winner-pays-bid mechanism has competitive efficiency at most $1/\numitems$.
\end{lemma}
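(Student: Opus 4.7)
The plan is to use the very environment and bid profile already described in the paragraph preceding the lemma, namely the $(m{+}1)$-agent setup with singleton demands $\wantsagent[\agent]=\{\agent\}$ for $\agent\le m$ and grand-bundle demand $\wantsagent[m+1]=\{1,\ldots,m\}$, together with the bid profile $\bids=(0,\ldots,0,1)$. Since this environment is deterministic and no reserves are involved, by Corollary~\ref{cor:degen} it suffices to verify Definition~\ref{def:detrc} on a single degenerate profile; that is, I just need to exhibit a feasible allocation $\allocalt$ for which $\sum_\agent \thresholdagent(\bidothers)\,\allocaltagent \ge m \cdot \rev(\mech,\bids)$.

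First I would compute the revenue. Under the highest-bids-win rule with $\bids = (0,\ldots,0,1)$, agent $m+1$ is the unique feasible winning set of positive bid, so $\bidalloc(\bids) = (0,\ldots,0,1)$ and $\rev(\mech,\bids) = 1$. Next I would compute threshold bids: for each $\agent \le m$, winning requires displacing agent $m+1$ (since $\wantsagent[\agent]$ intersects $\wantsagent[m+1]$), and given the other bids are $0$, the infimum bid that makes $\agent$'s allocation $1$ is exactly the bid of agent $m+1$, namely $1$. Hence $\thresholdagent(\bidothers)=1$ for every $\agent\in\{1,\ldots,m\}$.

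Finally I would pick the feasible allocation $\allocalt=(1,\ldots,1,0)$; this is feasible because the singleton demands $\{1\},\ldots,\{m\}$ are pairwise disjoint. Then
\begin{equation*}
\sum\nolimits_\agent \thresholdagent(\bidothers)\,\allocaltagent \;=\; \sum\nolimits_{\agent=1}^{m} 1\cdot 1 \;=\; m,
\end{equation*}
while $\rev(\mech,\bids)=1$. So any $\revpar$ satisfying inequality~\eqref{eq:rcsimple} must satisfy $\revpar \le 1/m$, which is exactly the claim.

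There is essentially no obstacle here — the proof is a verification on a single instance. The only small thing to be careful about is to check feasibility of the chosen $\allocalt$ (which is immediate from disjointness of the singleton demands) and to argue the threshold bids are genuinely $1$ rather than something strictly smaller, which uses the fact that the other singleton bidders bid $0$ so they do not block agent $\agent$, and that agent $m+1$'s bundle overlaps with $\wantsagent[\agent]$.
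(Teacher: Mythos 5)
Your proposal is correct and is essentially identical to the paper's own argument: the paper establishes the lemma via exactly this $(\numitems+1)$-agent instance with singleton demands, the bid profile $(0,\ldots,0,1)$, revenue $1$, threshold bids $\thresholdagent(\bidothers)=1$ for agents $1,\ldots,\numitems$, and the feasible allocation $(1,\ldots,1,0)$. Your added care in verifying feasibility and the exact value of the thresholds is a harmless elaboration of the same verification.
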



\subsection{Greedy Auctions}
\label{sec:rc-greedy}

The previous section demonstrates the inability of the highest-bids-win mechanism to effectively convert competition into revenue in combinatorial multi-item settings. We now show that other mechanisms may manage this relationship more effectively. In particular, this section considers mechanisms with {\em greedy} allocation rules, defined below.

\begin{definition}
	The \emph{greedy by priority} rule is given by a profile $\boldsymbol\psi=(\psi_1,\ldots,\psi_\numagents)$ of nondecreasing priority functions mapping bids for each agent $\agent$ to real numbers. It proceeds in the following way:
	\begin{enumerate}
		\item Sort agents in nonincreasing order of priority $\psi_\agent(\bidagent)$.
		\item Initialize the set of winners $S=\emptyset$.
		\item For each agent $\agent$ in sorted order: if $S\cup\{\agent\}$ is feasible, set $S=S\cup\{\agent\}$.
		\item Return $S$.
	\end{enumerate}
\end{definition}

For example, the greedy by bid rule is given by priority functions $\psi_\agent(\bidagent)=\bidagent$ for all $\agent$. Greedy by priority rules may be defined in any feasibility environment. In many settings, including the single-minded combinatorial auction, greedy rules will be suboptimal in the absence of incentives --- they may not select a set of winners with highest total bid. We will show that this suboptimality completely governs the \rc\ of greedy auctions. Since the design of approximately optimal greedy algorithms is well-studied, we obtain several \rc\ bounds as immediate corollaries. We first define the measure of approximate optimality.

\begin{definition}
	A bid allocation rule $\bidalloc$ is an {\em$\alpha$-approximation} for a feasibility environment $\feasible$ if for any bid profile $\bids$ and feasible allocation $\allocalt$, we have:
	\begin{equation}
	\sum\nolimits_\agent \bidagent\bidallocagent(\bids)\geq\alpha\sum\nolimits_\agent\bidagent\allocaltagent.
	\notag
	\end{equation}
\end{definition}

We formalize the relationship between approximation and the \rc\ as follows.

\begin{theorem}
	\label{thm:greedy}
	For any feasibility environment $\feasible$, if $\bidalloc$ is a $\alpha$-approximation $\bidalloc$ for $\feasible$, then the winner-pays-bid mechanism for $\bidalloc$ has \rc\ at least $\alpha$.
\end{theorem}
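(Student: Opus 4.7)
By \cref{cor:degen}, it suffices to establish the \rc\ lower bound on degenerate value distributions, in which case the simpler \cref{def:detrc} applies. Fix an arbitrary bid profile $\bids$ and a feasible allocation $\allocalt$; the goal is to show
\[ \sum_\agent \bidagent\,\bidallocagent(\bids) \;\geq\; \alpha \sum_\agent \thresholdagent(\bidothers)\,\allocaltagent. \]

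The plan is to invoke the $\alpha$-approximation property on an auxiliary bid profile that encodes the original threshold surplus. Define $\bids^\dagger$ by setting $\bid^\dagger_\agent = \thresholdagent(\bidothers)$ whenever $\allocaltagent = 1$, and $\bid^\dagger_\agent = 0$ otherwise. Applying the $\alpha$-approximation inequality to the pair $(\bids^\dagger,\allocalt)$, and using the winner-pays-bid payment rule to identify the bid-weighted allocation with revenue, I obtain
\[ \rev(\mech,\bids^\dagger) \;=\; \sum_\agent \bid^\dagger_\agent\,\bidallocagent(\bids^\dagger) \;\geq\; \alpha \sum_\agent \bid^\dagger_\agent\,\allocaltagent \;=\; \alpha \sum_\agent \thresholdagent(\bidothers)\,\allocaltagent, \]
so the target right-hand side is already lower-bounded by the revenue of $\mech$ on the surrogate profile $\bids^\dagger$.

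The main obstacle will be to bridge from $\rev(\mech,\bids^\dagger)$ to $\rev(\mech,\bids)$, since revenue need not be monotone along the substitution (losers in $\allocalt$ are raised up to their thresholds while agents outside $\allocalt$ are zeroed out). To close this gap I would argue by a per-winner charging scheme: for each agent $\agent$ with $\allocaltagent = 1$ and $\bidallocagent(\bids)=0$, its threshold $\thresholdagent(\bidothers)$ is witnessed by some winner $w\in\bidalloc(\bids)$ whose bid certifies the threshold, while the $\alpha$-approximation ratio precisely controls the maximum multiplicity with which any single winner's bid can be charged against blocked agents' thresholds. Summing the charges and using that $\bidagent\geq \thresholdagent(\bidothers)$ for every winning agent should recover the desired inequality $\rev(\mech,\bids)\geq \alpha\sum_\agent \thresholdagent(\bidothers)\,\allocaltagent$.
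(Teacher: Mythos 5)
Your reduction to degenerate distributions via \cref{cor:degen} matches the paper's setup, but the step you yourself flag as ``the main obstacle'' is where the entire proof lives, and the charging scheme you sketch to close it cannot work. The claim that the $\alpha$-approximation ratio ``precisely controls the maximum multiplicity with which any single winner's bid can be charged against blocked agents' thresholds'' is refuted by the paper's own example in \cref{sec:rc-hbw}: the highest-bids-win rule for single-minded combinatorial auctions is a $1$-approximation, yet by \cref{lem:combbad} its \rc\ is at most $1/\numitems$ --- under bids $(0,\ldots,0,1)$ the lone winner's bid of $1$ witnesses the threshold of every one of the $\numitems$ losers in the feasible allocation $(1,\ldots,1,0)$, so the charging multiplicity is $\numitems$ while $\alpha=1$. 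This shows that no bridge from $\rev(\mech,\bids^\dagger)$ to $\rev(\mech,\bids)$ can exist at the level of generality in which the statement is phrased: the theorem is really about greedy-by-priority rules (hence its label and the surrounding text), and any correct proof must invoke a structural property of the rule beyond its approximation ratio.

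The paper supplies exactly that property: coalitional non-bossiness, which greedy rules satisfy (\cref{lem:bossy}) --- raising each \emph{loser's} bid up to its threshold, while leaving all winners' bids untouched, does not change the allocation. Note this surrogate $\bidsalt$ differs from your $\bids^\dagger$ in two ways that matter: you zero out winners outside $\allocalt$ and lower the bids of winners inside $\allocalt$ down to their thresholds, and either modification can change the allocation even of a greedy rule, severing the link to $\rev(\mech,\bids)$. With the paper's surrogate, \cref{lem:bossapprox} gives the chain $\rev(\mech,\bids)=\sum_\agent \bidagent\bidallocagent(\bids)=\sum_\agent\bidaltagent\bidallocagent(\bids)=\sum_\agent\bidaltagent\bidallocagent(\bidsalt)\geq \alpha\sum_\agent \bidaltagent\allocaltagent\geq \alpha\sum_\agent \thresholdagent(\bidothers)\,\allocaltagent$, where the second equality holds because only losers' (zero-allocation) bids changed, the third is non-bossiness, the first inequality applies the $\alpha$-approximation at $\bidsalt$, and the last uses $\bidagent\geq\thresholdagent(\bidothers)$ for winners and $\bidaltagent=\thresholdagent(\bidothers)$ for losers. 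The revenue comparison across profiles that sinks your approach is thereby avoided entirely: non-bossiness pins the allocation down across the substitution, so the original revenue is literally the surrogate bid-weighted allocation to which the approximation guarantee applies.
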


For single-minded combinatorial auctions, a greedy $1/\sqrt{\numitems}$-approximation is well-known.

\begin{lemma}[\citealp{LOS02}]
	\label{lem:greedyapx}
	For any single-minded combinatorial auction environment, greedy by priority with $\psi_\agent(\bidagent)=\bidagent/\sqrt{|\wantsagent|}$ for all agents $\agent$ is a $1/\sqrt{\numitems}$-approximation.
\end{lemma}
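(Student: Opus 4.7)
The plan is to follow the classical analysis of Lehmann--O'Callaghan--Shoham. Let $G$ denote the set of winners selected by the greedy-by-priority rule with $\psi_\agent(\bidagent) = \bidagent/\sqrt{|\wantsagent|}$, and let $O$ denote any feasible allocation (e.g.\ the bid-maximizing one), realized as a set of agents with disjoint demand sets. I will partition $O$ by charging each $j \in O$ to some $i \in G$: if $j \in G$, charge $j$ to itself; otherwise $j$ must have been rejected because some previously-accepted $i \in G$ with $\wantsagent[i] \cap \wantsagent[j] \neq \emptyset$ was already chosen, so charge $j$ to that $i$. Call the resulting set $O_i \subseteq O$ for each $i \in G$.

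The key step is to bound $\sum_{j \in O_i} \bidagent[j]$ in terms of $\bidagent[i]$. First, every $j \in O_i$ was considered no earlier than $i$ in the priority order, which gives
\[
\frac{\bidagent[j]}{\sqrt{|\wantsagent[j]|}} \;\le\; \frac{\bidagent[i]}{\sqrt{|\wantsagent[i]|}},
\qquad\text{so}\qquad
\bidagent[j] \;\le\; \frac{\bidagent[i]}{\sqrt{|\wantsagent[i]|}}\,\sqrt{|\wantsagent[j]|}.
\]
Summing over $j \in O_i$ and applying Cauchy--Schwarz yields
\[
\sum_{j \in O_i} \bidagent[j] \;\le\; \frac{\bidagent[i]}{\sqrt{|\wantsagent[i]|}}\,\sum_{j \in O_i}\sqrt{|\wantsagent[j]|} \;\le\; \frac{\bidagent[i]}{\sqrt{|\wantsagent[i]|}}\,\sqrt{|O_i|\cdot\textstyle\sum_{j\in O_i}|\wantsagent[j]|}.
\]

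Now I need the two combinatorial bounds $|O_i| \le |\wantsagent[i]|$ and $\sum_{j \in O_i} |\wantsagent[j]| \le \numitems$. The second is immediate because $O$ is feasible, so the demand sets $\{\wantsagent[j]\}_{j\in O_i}$ are pairwise disjoint subsets of the $\numitems$ items. For the first, each $j \in O_i$ satisfies $\wantsagent[j]\cap\wantsagent[i]\neq\emptyset$ (by construction of the charging), and the $\wantsagent[j]$ are pairwise disjoint, so distinct $j$'s in $O_i$ claim distinct items of $\wantsagent[i]$, giving $|O_i| \le |\wantsagent[i]|$. Plugging these two bounds in collapses the right-hand side to $\bidagent[i]\sqrt{\numitems}$.

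Finally, summing the inequality $\sum_{j \in O_i}\bidagent[j] \le \bidagent[i]\sqrt{\numitems}$ over $i \in G$ and noting that the $O_i$ cover $O$,
\[
\sum_{j\in O}\bidagent[j] \;\le\; \sqrt{\numitems}\,\sum_{i\in G}\bidagent[i],
\]
which is exactly the $1/\sqrt{\numitems}$-approximation claim. The only real obstacle is the charging and the two combinatorial bounds on $O_i$; once those are set up, the priority inequality plus Cauchy--Schwarz finish the argument mechanically.
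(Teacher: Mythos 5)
The paper does not prove this lemma; it is stated as a citation to \citet{LOS02}. Your argument is a correct and complete reproduction of the classical Lehmann--O'Callaghan--Shoham analysis: the charging of each optimal winner to the greedy winner that blocked it (or to itself), the priority-order inequality, Cauchy--Schwarz, and the two combinatorial bounds $|O_i|\le|\wantsagent[i]|$ and $\sum_{j\in O_i}|\wantsagent[j]|\le \numitems$ are exactly the ingredients of the cited proof, so there is nothing to correct.
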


Another family of settings where greedy allocation rules are of particular interest are {\em matroids}, defined below, where the greedy by bid rule is known to be optimal i.e.\ a $1$-approximation, absent incentives. Notable examples of matroids include $k$-unit environments, discussed in Section~\ref{sec:rc-single}, and \emph{transversal matroids}, which are matchable subsets of vertices on one side of a bipartite graph.

\begin{definition}
	A feasibility environment $\feasible$ is a \emph{matroid} if the following two properties hold:
	\begin{itemize}
		\item[i.] (Downward Closure) For any $S\in \feasible$ and $i\in S$, $S\setminus\{i\}\in \feasible$.
		\item[ii.] (Augmentation Property) For any $S_1,S_2\in \feasible$ with $|S_1|>
		|S_2|$, there exists $i\in S_1\setminus S_2$ such that $S_2\cup\{i\}\in \feasible$.
	\end{itemize}
\end{definition}

\begin{lemma}
  \label{lem:matroid}
	For any matroid feasibility environment, greedy by priority with $\psi_\agent(\bidagent)=\bidagent$ for all agents $\agent$ is a $1$-approximation.
\end{lemma}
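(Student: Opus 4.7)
The result is the classical Edmonds--Rado theorem asserting greedy's optimality on matroids, so my plan is the standard exchange argument driven by the augmentation property. Throughout I would assume all bids are non-negative, which is without loss of generality since the inequality need only be verified against $\allocalt$ supported on the non-negative part of the ground set (and the restriction of $\feasible$ to that ground set is itself a matroid, on which greedy-by-bid produces the same output).

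Let $G = (g_1, \ldots, g_k)$ be the agents greedy selects, in order of selection, so $\bidagent[g_1] \geq \cdots \geq \bidagent[g_k]$; and let $O = (o_1, \ldots, o_\ell)$ be the support of any feasible $\allocalt$, sorted in decreasing bid. The aim is to prove pointwise dominance $\bidagent[g_i] \geq \bidagent[o_i]$ for every $i \leq \min(k,\ell)$, together with $k \geq \ell$; these immediately sum to
\[ \sum\nolimits_\agent \bidagent \bidallocagent(\bids) \;=\; \sum\nolimits_{i\leq k} \bidagent[g_i] \;\geq\; \sum\nolimits_{i\leq \ell} \bidagent[o_i] \;=\; \sum\nolimits_\agent \bidagent \allocaltagent, \]
which is the desired $1$-approximation.

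The heart of the argument—and the main obstacle—is the pointwise dominance. Suppose for contradiction that $\bidagent[g_i] < \bidagent[o_i]$ for some $i$. I would apply the augmentation property to $A = \{o_1, \ldots, o_i\} \in \feasible$ (size $i$) and $B = \{g_1, \ldots, g_{i-1}\} \in \feasible$ (size $i-1$) to obtain some $j \in A \setminus B$ with $B \cup \{j\} \in \feasible$. Since $\bidagent[j] \geq \bidagent[o_i] > \bidagent[g_i]$, greedy considers $j$ strictly before $g_i$; at that moment, its working set $T$ is contained in $B$, because every element already accepted has bid at least $\bidagent[j] > \bidagent[g_i]$ and so cannot be any of $g_i, g_{i+1}, \ldots$. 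By downward closure, $T \cup \{j\} \subseteq B \cup \{j\}$ is feasible, so greedy adds $j$; but then $j = g_{i'}$ for some $i' < i$, placing $j \in B$ and contradicting $j \in A \setminus B$.

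The size inequality $k \geq \ell$ follows from the same style of exchange: if $\ell > k$, augmentation applied to $O$ and $G$ produces some $j \in O \setminus G$ with $G \cup \{j\} \in \feasible$, and by downward closure greedy's state when it considered $j$ was a subset of $G$ and could have accommodated $j$, so greedy would have included $j$—contradicting $j \notin G$. Combining the two inequalities closes the proof.
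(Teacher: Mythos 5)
The paper states this lemma without proof, treating it as the classical Rado--Edmonds optimality of greedy on matroids, and your exchange argument is precisely the standard proof of that fact; it is correct. The pointwise-dominance step --- applying augmentation to the prefixes $\{o_1,\dots,o_i\}$ and $\{g_1,\dots,g_{i-1}\}$ and then using downward closure to conclude that greedy would have accepted the exchanged element --- is sound, as is the cardinality comparison $k\geq\ell$ (and your non-negativity caveat is the right one to flag, since the extra elements $g_{\ell+1},\dots,g_k$ must contribute non-negatively for the final summation to go through).
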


Combining Theorem~\ref{thm:greedy} with Lemmas~\ref{lem:greedyapx} and \ref{lem:matroid} yields the following:

\begin{theorem}
	\label{thm:single-minded-rc}
	For any single-minded combinatorial auction environment, the winner-pays-bid mechanism for the
	greedy-by-priority rule with priority
	function
	$\psi_\agent(\bidagent)=\bidagent/\sqrt{|\wantsagent|}$ for
	agent $\agent$, has \rc\ at least $1/\sqrt{\numitems}$.
\end{theorem}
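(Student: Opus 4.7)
The plan is to derive \cref{thm:single-minded-rc} as an immediate consequence of the two ingredients that immediately precede it in the excerpt. \cref{thm:greedy} gives a general transfer principle: any $\alpha$-approximation bid allocation rule yields a winner-pays-bid mechanism with \rc\ at least $\alpha$. \cref{lem:greedyapx}, due to \citet{LOS02}, instantiates such an approximation in the single-minded combinatorial auction setting: the greedy-by-priority rule with priority $\psi_\agent(\bidagent)=\bidagent/\sqrt{|\wantsagent|}$ is a $1/\sqrt{\numitems}$-approximation. So the entire proof is to chain these two results.

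Concretely, first I would invoke \cref{lem:greedyapx} to conclude that the specified greedy-by-priority rule $\bidalloc$ satisfies $\sum_\agent \bidagent\bidallocagent(\bids)\geq (1/\sqrt{\numitems})\sum_\agent \bidagent\allocaltagent$ for every bid profile $\bids$ and every feasible $\allocalt$ in the single-minded combinatorial auction environment. This is exactly the definition of $\alpha$-approximation with $\alpha = 1/\sqrt{\numitems}$, so the hypothesis of \cref{thm:greedy} is met.

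Next I would apply \cref{thm:greedy} with this $\alpha$ to conclude that the winner-pays-bid mechanism built on $\bidalloc$ has \rc\ at least $1/\sqrt{\numitems}$, which is exactly the statement of the theorem. By the closure results of \cref{lem:convandnorm} and \cref{cor:degen}, it is enough that this hold under degenerate priors, which is what \cref{thm:greedy} delivers, so no additional work is required to lift the bound to arbitrary value distributions.

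There is no real obstacle here: the conceptual content is entirely packaged in \cref{thm:greedy} (translating algorithmic approximation into competition-to-revenue conversion) and in the classical greedy construction of \citet{LOS02}. The proof of \cref{thm:single-minded-rc} is therefore a one-line composition of these two facts, and the only thing worth emphasizing is the payoff of the framework, namely that any off-the-shelf approximation algorithm from the combinatorial auction literature becomes, when wrapped in the winner-pays-bid format, a mechanism with a matching \rc\ guarantee.
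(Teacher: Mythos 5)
Your proposal is correct and matches the paper's own argument exactly: the paper derives \cref{thm:single-minded-rc} by combining \cref{thm:greedy} with \cref{lem:greedyapx}, precisely the one-line composition you describe. No further comment is needed.
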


\begin{theorem}
	\label{thm:matroid-rc}
	For any matroid environment, the highest-bids-win winner-pays-bid
	mechanism  has \rc\ 1.
\end{theorem}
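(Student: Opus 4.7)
The plan is to derive this theorem as an immediate corollary of Theorem~\ref{thm:greedy} and Lemma~\ref{lem:matroid}. The key observation is that in any matroid environment, the highest-bids-win winner-pays-bid mechanism coincides with the greedy-by-priority rule using priority $\psi_\agent(\bidagent) = \bidagent$. This is the classical equivalence between the maximum-weight independent set in a matroid and the output of the greedy algorithm sorting elements by weight; downward closure and the augmentation property jointly imply that a greedy exchange argument certifies optimality of the greedy-by-bid output against any other feasible set. Thus for any bid profile $\bids$, greedy-by-bid selects a feasible allocation $\bidalloc(\bids)$ maximizing $\sum_\agent \bidagent \bidallocagent(\bids)$, and this is precisely the allocation chosen by the highest-bids-win rule.

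With this identification in hand, I would proceed in two short steps. First, invoke Lemma~\ref{lem:matroid} to conclude that greedy-by-bid is a $1$-approximation in every matroid environment, i.e.\ for any bids $\bids$ and feasible $\allocalt$, $\sum_\agent \bidagent \bidallocagent(\bids) \geq \sum_\agent \bidagent \allocaltagent$. Second, apply Theorem~\ref{thm:greedy} with $\alpha = 1$ to the winner-pays-bid mechanism for this allocation rule, yielding \rc\ at least $1$. Since \rc, as defined in Definition~\ref{def:detrc}, is always bounded above by $1$ (taking $\allocalt = \bidalloc(\bids)$ in \eqref{eq:rcsimple} shows $\revpar \leq 1$ whenever the mechanism is nontrivial), we conclude \rc\ is exactly $1$.

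There is essentially no obstacle here beyond citing the right lemmas, but the one subtlety worth making explicit is that highest-bids-win in a matroid and greedy-by-bid produce the same allocation. I would either cite this as a standard matroid fact or, if desired for self-containedness, include a one-line remark that a matroid's exchange axiom implies that any lexicographically greedy selection of elements by weight yields a maximum-weight basis, so the two rules are identical. The result then follows immediately, and moreover \cref{thm:multi-unit-rc} is recovered as a special case since $k$-unit environments are a uniform matroid.
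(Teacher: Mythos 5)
Your proposal is correct and follows exactly the paper's route: the paper also obtains \cref{thm:matroid-rc} by combining \cref{thm:greedy} with \cref{lem:matroid}, i.e.\ greedy-by-bid is a $1$-approximation on matroids and any $\alpha$-approximate coalitionally non-bossy rule has \rc\ at least $\alpha$. Your added remark that the highest-bids-win rule coincides with greedy-by-bid on matroids makes explicit a standard fact the paper leaves implicit, which is a reasonable touch but not a departure from its argument.
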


To prove Theorem~\ref{thm:greedy}, it is helpful to compare the behavior of greedy mechanisms to the non-greedy highest bids win mechanism of \cref{sec:rc-hbw}, which had a poor \rc. In the example which proved \cref{lem:combbad}, the high bid of the $(\numitems+1)$th agent discouraged participation from the others --- individually, each agent would have needed to bid
$1+\epsilon$ to win. As a group, though, the losing agents could have
won by increasing each of their bids by a tiny amount. Greedy
rules lack this pathology. For any greedy
allocation rule, we could increase the bids of every losing agent
to their threshold without changing the outcome. We formalize this
property as follows.

\begin{definition}
	Bid allocation rule $\bidalloc$ is \emph{coalitionally non-bossy} if: for
	any profiles of bids $\bids$ and $\bidsalt$ where the bids in
	$\bidsalt$ are the same as $\bids$ for winners under $\bids$
	and at most their critical prices for losers under $\bids$, i.e.,
	if $\bidallocagent(\bids)=0$ then $\bidaltagent \leq
	\thresholdagent(\bidothers)$; then the allocations of $\bidalloc$ under $\bids$
	and $\bidsalt$ are the same, i.e.
	$\bidalloc(\bids)=\bidalloc(\bidsalt)$.
\end{definition}

\begin{lemma}\label{lem:bossy}
	Any greedy by priority allocation rule is coalitionally non-bossy.
\end{lemma}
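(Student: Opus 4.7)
The plan is to run the greedy-by-priority algorithm on $\bidsalt$ and show by induction on the processing order that it produces the same winning set $W = \{\agent : \bidallocagent(\bids)=1\}$ as on $\bids$. First I would reformulate thresholds in terms of priorities: for each loser $\agent \in L := \{\agent : \bidallocagent(\bids)=0\}$, set $p^*_\agent = \psi_\agent(\thresholdagent(\bidothers))$. By the sequential nature of greedy, the set $W^{>p} := \{\agent' \in W : \psi_{\agent'}(\bid_{\agent'}) > p\}$ of original winners with priority above $p$ is exactly the greedy output on the sub-profile of agents whose original priorities exceed $p$. Consequently, if $\agent$'s priority were raised to $p$, $\agent$ would win under $\bids$ iff $W^{>p} \cup \{\agent\} \in \feasible$, and so $p^*_\agent$ is the infimum of such $p$; in particular, for all $p < p^*_\agent$, the set $W^{>p} \cup \{\agent\}$ is infeasible.

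Next, I would enumerate agents in decreasing order of their new priorities $p'_\agent = \psi_\agent(\bid'_\agent)$ under $\bidsalt$ as $\agent_{\pi(1)}, \ldots, \agent_{\pi(\numagents)}$, and let $T_k$ denote the accumulated greedy winner set after $k$ steps. The induction claim, with trivial base case $T_0 = \emptyset$, is $T_k = W \cap \{\agent_{\pi(1)}, \ldots, \agent_{\pi(k)}\}$. The key observation is that winners keep their priorities ($p'_\agent = p_\agent$ for $\agent \in W$), so the inductive hypothesis together with this equality yields $T_{k-1} = W^{>p'_{\agent_{\pi(k)}}}$. If $\agent_{\pi(k)} \in W$, then $T_{k-1} \cup \{\agent_{\pi(k)}\} = W^{>p_{\agent_{\pi(k)}}} \cup \{\agent_{\pi(k)}\}$ is feasible because greedy added $\agent_{\pi(k)}$ under $\bids$, so $\agent_{\pi(k)}$ is again added to $T_k$. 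If $\agent_{\pi(k)} \in L$, then $p'_{\agent_{\pi(k)}} \leq p^*_{\agent_{\pi(k)}}$, so $T_{k-1} \cup \{\agent_{\pi(k)}\}$ is infeasible by the threshold characterization, and $\agent_{\pi(k)}$ is not added. After $\numagents$ steps, $T_\numagents = W$, giving $\bidalloc(\bidsalt) = \bidalloc(\bids)$.

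The main obstacle is the boundary case where $\bid'_\agent$ exactly equals $\thresholdagent(\bidothers)$, since the infimum in the definition of $\thresholdagent$ may not be attained. I would resolve this with the natural tie-breaking convention consistent with the infimum definition: at the threshold bid the loser remains a loser (otherwise the infimum could be taken lower). With this convention the inductive step extends to the equality case for losers. A minor additional concern is that nondecreasing priority functions need not be strictly monotonic, so different bids may yield identical priorities; however, since the greedy output depends only on priorities, bids yielding identical priorities induce identical allocations, and this poses no difficulty.
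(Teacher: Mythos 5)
Your proof is correct, and it reaches the paper's conclusion by a differently organized argument. The paper's proof is a one-at-a-time perturbation: it raises the losers' bids from $\bids$ to $\bidsalt$ one loser at a time and argues that each single change (i) leaves that loser losing, since an agent bidding at most $\thresholdagent(\bidothers)$ is passed over as infeasible when greedy reaches her, and (ii) leaves every other loser's threshold unchanged, since thresholds are determined only by the winners' bids. You instead simulate greedy directly on the fully modified profile $\bidsalt$ and induct on the processing order, maintaining the invariant that the accumulated set equals the set of original winners processed so far; the same two facts reappear in your argument, recast as the characterization of $p^*_\agent$ via feasibility of $W^{>p}\cup\{\agent\}$. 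Your route is longer but more self-contained, and you are more explicit than the paper about the boundary case $\bidaltagent=\thresholdagent(\bidothers)$, where the infimum in the threshold definition may be attained and a tie-breaking convention is genuinely required (the paper silently assumes the loser is still passed over there). One small inaccuracy: with ties in priority, your claimed equality $T_{k-1}=W^{>p'_{\agent_{\pi(k)}}}$ can fail, because winners tied at priority $p'_{\agent_{\pi(k)}}$ may already belong to $T_{k-1}$. This is harmless: the containment $T_{k-1}\supseteq W^{>p'_{\agent_{\pi(k)}}}$ is what the loser case needs (a superset of an infeasible set is infeasible, by downward closure of $\feasible$), and in the winner case $T_{k-1}\cup\{\agent_{\pi(k)}\}$ is a subset of the feasible set $W$ and hence feasible, again by downward closure.
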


\begin{proof}
	Imagine changing $\bids$ to $\bidsalt$ by increasing one loser's
	bid at a time. Each time we increase a bid, say, of agent $\agent$,
	two things remain true: (1) $\agent$ still loses: as long as
	$\bidaltagent\leq\thresholdagent(\bidothers)$, $\agent$ is
	passed over as infeasible when she is reached by the greedy
	rule; and (2) the threshold of every other losing agent $\agentalt$
	remains unchanged: each losing agent's
	threshold is only determined by the bids of the agents who win.
\end{proof}

\begin{lemma}\label{lem:bossapprox}
	Any winner-pays-bid mechanism with a coalitionally non-bossy allocation rule has \rc\ at least its approximation ratio.
\end{lemma}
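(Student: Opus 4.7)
The plan is to reduce to the deterministic, degenerate-prior setting and then exploit coalitional non-bossiness to rewrite losing bids as their thresholds. By \cref{cor:degen}, it suffices to verify Definition~\ref{def:detrc}: for an arbitrary bid profile $\bids$ and feasible allocation $\allocalt$, I need to show
\[
\rev(\mech,\bids) = \sum\nolimits_\agent \bidagent\bidallocagent(\bids) \;\geq\; \alpha \sum\nolimits_\agent \thresholdagent(\bidothers)\,\allocaltagent.
\]

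The key construction is a modified bid profile $\bidsalt$ obtained by raising each loser's bid to their threshold and leaving winners' bids intact: set $\bidaltagent = \thresholdagent(\bidothers)$ whenever $\bidallocagent(\bids) = 0$, and $\bidaltagent = \bidagent$ otherwise. Since losers now satisfy $\bidaltagent \leq \thresholdagent(\bidothers)$ and winners are unchanged, coalitional non-bossiness yields $\bidalloc(\bidsalt) = \bidalloc(\bids)$, so the sets of winners and losers are preserved. Applying the $\alpha$-approximation hypothesis to $\bidsalt$ against the feasible allocation $\allocalt$ gives
\[
\sum\nolimits_\agent \bidaltagent\bidallocagent(\bidsalt) \;\geq\; \alpha \sum\nolimits_\agent \bidaltagent\allocaltagent.
\]
The left-hand side equals $\rev(\mech,\bids)$, because losers contribute zero (as $\bidallocagent(\bidsalt)=0$) and winners contribute $\bidagent\bidallocagent(\bids)$ (their bids are unchanged). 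On the right-hand side, $\bidaltagent \geq \thresholdagent(\bidothers)$ holds agent-by-agent: for losers by construction, and for winners because $\bidallocagent(\bids)=1$ forces $\bidagent \geq \thresholdagent(\bidothers)$ directly from the definition $\thresholdagent(\bidothers)=\inf\{\bid:\bidallocagent(\bid,\bidothers)=1\}$. Chaining these two facts delivers the required inequality.

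The proof has essentially no obstacle, since the definition of coalitional non-bossiness is tailor-made for this maneuver; the only point requiring care is that thresholds must always be computed relative to the \emph{original} opponent profile $\bidothers$ (not relative to $\bidsalt$), which is exactly what Definition~\ref{def:detrc} asks for, and which is also what coalitional non-bossiness uses in its hypothesis.
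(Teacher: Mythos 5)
Your proof is correct and follows essentially the same route as the paper's: construct $\bidsalt$ by raising losers' bids to their thresholds, invoke coalitional non-bossiness to preserve the allocation, apply the $\alpha$-approximation to $\bidsalt$, and lower-bound $\bidaltagent$ by $\thresholdagent(\bidothers)$ agent-by-agent. The explicit appeal to \cref{cor:degen} at the outset is a harmless addition; the paper works directly with Definition~\ref{def:detrc} in this section with that reduction left implicit.
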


\begin{proof}
		Let $\allocalt$ be a feasible allocation, $\bids$ a profile of bids, and let $\bidsalt$ be a vector of bids where losers under $\bids$ bid $\threshpt{\bidothers}$, while winners bid as before. The following inequalities hold, with justifications after.
	\begin{align*}
		\sum\nolimits_\agent \bidagent\bidallocagent(\bids) &= \sum\nolimits_\agent \bidaltagent\bidallocagent(\bids)&\\
		&= \sum\nolimits_\agent \bidaltagent\bidallocagent(\bidsalt)\\
		&\geq\alpha\sum\nolimits_\agent\bidaltagent\allocaltagent\\
		&\geq\alpha\sum\nolimits_\agent \threshpt{\bidothers}\allocaltagent.
	\end{align*}
	The first line holds because $\bidsalt$ differs from $\bids$ only on the bids of losing agents. The second follows from the coalitional non-bossiness of greedy rules, and the third from the assumption that the greedy rule is an $\alpha$-approximation. The last line follows from the fact that $\bidsalt$ doesn't change the bids of winners under $\bids$, and for those agents, $\bidagent\geq\threshpt{\bidothers}$.
\end{proof}

Theorem~\ref{thm:greedy} follows from combining Lemma~\ref{lem:bossy} with Lemma~\ref{lem:bossapprox}.

\subsection{Position Auctions}
\label{sec:rc-gfp}
	\newcommand{\permdist}{\Pi}
\newcommand{\allocaltup}{\allocalt'}
\newcommand{\allocaltagentup}{\allocaltagent'}
\newcommand{\posthreshagent}[1]{\threshold_{\agent}^{#1}}

The mechanisms and environments in
\cref{sec:rc-single}-\cref{sec:rc-greedy} have been deterministic. We
now consider the canonical and inherently randomized allocation
environment of position auctions and show that the most natural
single-bid mechanism
has \rc\ $1$.  This result follows directly
from the \rc\ of multi-unit highest-bids-win mechanism
(\cref{thm:multi-unit-rc}) and closure of the \rc\ under
convex combination (\cref{lem:convex-combination}).

Position environments are a standard model for internet advertising
auctions, e.g., \citet{var-07} and \citet{EOS07}. Advertisers (agents)
compete for ad placement in positions in a list on a webpage. Each
position has an associated clickthrough probability, and an agent is
considered allocated when clicked. Feasible allocations are
assignments of ads to positions, with the possibility for the auctioneer
to exclude any agent from the auction. 

\begin{definition}
A {\em position environment} is given by position allocation probabilities
$1\geq\slotprob_1\geq \ldots\geq \slotprob_\numagents\geq 0$. An allocation
vector $\bidalloc$ is feasible if there exists a permutation $\perm$
over $\{1,\ldots, \numagents\}$ such that for all agents $\agent$
$\bidallocagent\in\{0,\slotprob_{\perm(\agent)}\}$, or if it is a convex combination of such vectors.
\end{definition}

The natural extension of highest-bids-win to position environments is the following:

\begin{definition}
	The {\em rank-by-bid} rule for position auctions assigns agents to positions in order of their bid. The
	agent in position $j$ wins with probability $\slotprob_j$.
\end{definition}


The following interpretation of position auctions as a convex
combination of multi-unit auctions is well known in the literature,
e.g., \citet{DHY15}.  The subsequent theorem combines this
interpretation with the closure under convex combination  of mechanisms (\cref{lem:convex-combination}) and the \rc\ of multi-unit highest-bids-win auctions (\cref{thm:multi-unit-rc}).

\begin{lemma}
  \label{lem:position-auctions=convcomb-multi-unit}
  The allocation rule of the generalized first price auction for
  position weights $\slotprob_1\geq \ldots\geq \slotprob_\numagents$
  is equivalent the convex combination of single-item multi-unit
  highest-bids-win allocation rules where $k$-units are sold with
  probability $\slotprob_k - \slotprob_{k+1}$ (with $\slotprob_{n+1} = 0$).
\end{lemma}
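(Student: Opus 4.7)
The plan is to verify the claimed identity bid-by-bid, by telescoping the position weights $\slotprob_1\geq\ldots\geq\slotprob_\numagents$. First, I would confirm that the coefficients $\slotprob_k-\slotprob_{k+1}$ (with $\slotprob_{n+1}=0$) together with the residual weight $1-\slotprob_1$ on the degenerate ``sell 0 units'' auction form a valid distribution over the $k$-unit highest-bids-win rules: monotonicity of the $\slotprob_k$ ensures nonnegativity, and the telescoping sum $\sum_{k=1}^{\numagents}(\slotprob_k-\slotprob_{k+1})=\slotprob_1$ yields total mass $1$ once the degenerate component is included.

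Next, I would fix an arbitrary bid profile $\bids$ and, by symmetry, consider the agent $\agent$ whose bid is the $j$-th highest (breaking ties by a common tie-breaking rule used by both mechanisms). Under the rank-by-bid allocation rule, $\agent$ is assigned to position $j$ and wins with probability exactly $\slotprob_j$. Under the proposed convex combination, $\agent$ wins in the $k$-unit highest-bids-win auction if and only if she is among the $k$ highest bidders, i.e., if and only if $k\geq j$. Summing over the $k$-unit components,
\begin{equation*}
\bidallocagent(\bids)\;=\;\sum\nolimits_{k=j}^{\numagents}(\slotprob_k-\slotprob_{k+1})\;=\;\slotprob_j,
\end{equation*}
again by telescoping (and using $\slotprob_{n+1}=0$). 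Agents tied at bid $j$ receive the same marginal probability under both rules provided the tie-breaking is consistent across the two formulations, which we may take without loss of generality.

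Since each agent's allocation probability matches position-by-position for every bid profile, the two allocation rules induce the same mapping from $\bids$ to interim allocations, proving the lemma. The only subtle point is the accounting for $\slotprob_1<1$, which I would address simply by including the trivial ``zero units sold'' auction in the convex combination; this is a legitimate degenerate element of the family of multi-unit highest-bids-win rules and contributes zero to the allocation of every agent, so it does not affect the identity above.
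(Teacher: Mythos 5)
Your proof is correct; the paper itself gives no argument for this lemma, deferring to the literature (it is described as ``well known,'' citing \citet{DHY15}), and your telescoping computation $\sum_{k\geq j}(\slotprob_k-\slotprob_{k+1})=\slotprob_j$ for the $j$-th highest bidder is exactly the standard argument behind that citation. Your explicit handling of the residual mass $1-\slotprob_1$ via a degenerate zero-unit auction is a detail the lemma statement glosses over, and including it is the right call.
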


\begin{theorem}
  \label{thm:gfp-rc}
  For any position environment, the winner-pays-bid mechanism for the rank-by-bid rule
  has \rc\ $1$.
\end{theorem}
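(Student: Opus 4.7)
The plan is to obtain this theorem as a direct corollary of two prior results: the convex combination representation of rank-by-bid given in \cref{lem:position-auctions=convcomb-multi-unit}, and the closure of competitive efficiency under convex combination of allocation rules (part (ii) of \cref{lem:convex-combination}). The \rc\ of the component mechanisms is supplied by \cref{thm:multi-unit-rc}.

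First I would invoke \cref{lem:position-auctions=convcomb-multi-unit} to write the rank-by-bid allocation rule $\bidalloc$ as a convex combination of $k$-unit highest-bids-win rules $\bidalloc^{(k)}$ for $k \in \{1,\dots,\numagents\}$, where the rule $\bidalloc^{(k)}$ is applied with probability $\slotprob_k - \slotprob_{k+1}$ (taking $\slotprob_{\numagents+1}=0$). The corresponding feasibility environment is exactly the convex combination of the $k$-unit feasibility environments, which is consistent with the position environment's definition via convex combinations of permutation-based allocations.

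Next, by \cref{thm:multi-unit-rc}, each $k$-unit highest-bids-win winner-pays-bid mechanism has \rc\ $1$ on any value distribution. Applying part (ii) of \cref{lem:convex-combination} to this family of mechanisms (with the appropriate mixing weights $\slotprob_k - \slotprob_{k+1}$), we conclude that the resulting convex combination mechanism --- which is exactly the rank-by-bid winner-pays-bid mechanism --- also has \rc\ $1$ on any value distribution, hence \rc\ $1$ overall.

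There is no real obstacle: the heavy lifting is already done by \cref{lem:position-auctions=convcomb-multi-unit,lem:convex-combination,thm:multi-unit-rc}. The only minor point worth verifying is that the mixing weights indeed sum to $1$ (which follows from $\slotprob_1 \leq 1$ and the telescoping sum $\sum_{k=1}^{\numagents}(\slotprob_k - \slotprob_{k+1}) = \slotprob_1$, handling the case $\slotprob_1 < 1$ by adding the degenerate ``no allocation'' rule with weight $1 - \slotprob_1$, which trivially has \rc\ $1$ since both revenue and threshold surplus are zero).
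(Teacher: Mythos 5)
Your proof is correct and takes exactly the route the paper intends: the paper derives \cref{thm:gfp-rc} precisely by combining the convex-combination representation of \cref{lem:position-auctions=convcomb-multi-unit} with the \rc\ of multi-unit highest-bids-win auctions (\cref{thm:multi-unit-rc}) and closure under convex combination (\cref{lem:convex-combination}). Your added remark about padding the mixture with a null rule of weight $1-\slotprob_1$ is a harmless extra bit of care that the paper leaves implicit.
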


\section{Welfare and Revenue Analysis}
\label{sec:resolution}
In the previous section, we studied the way mechanisms manage inter-agent competition. We focused on the mechanisms' rules in isolation, independent of the behavioral assumptions on participating agents and information structure of the game. This section turns the focus to agents' beliefs and behavior. An equilibrium presents agents with a range of attainable allocation levels at different prices. Behavioral assumptions such as best response dictate the choice from this range, which in turn governs the agent's contribution to social welfare. In what follows, we define {\em\vc}, which quantifies how an agent's solution to their bidding problem impacts welfare. \Vc\ will depend on the shape of the allocation rules an agent faces and the way the agent selects their bid (e.g.\ the degree of approximate best response).
We then show how \vc\ and \trc\ combine to give robust efficiency guarantees.
Throughout, we continue to focus on winner-pays-bid mechanisms.

Equilibrium induces for an agent with value $\val$, a single-agent interim mechanism.
This mechanism is summarized by its bid allocation rule.
Depending on the information the agent possesses, this interim allocation rule may change, as may the agent's bid in response.

\begin{definition}\label{def:bo}
	A {\em single-agent \bo}\ is given by a value $\val\geq 0$ and a joint distribution $\mechdist=(\bid^\info,\bidallocation^\info)$ over pairs of bid allocation rules and bids, indexed uniformly by by $\info$.
	The agent's utility is given by $\util(\mechdist)= \expect[\info]{(\val-\bid^\info)\bidallocation^\info(\bid^\info)}$.
\end{definition}

An agent's competition is summarized by the expected strength of the bid allocation rules they face. As in Section~\ref{sec:rc} we then measure this strength by the threshold surplus $\cumuthreshagent{}{\cdot}{}$.

\begin{definition}
	In a single-agent \bo, the agent's {\em expected allocation rule} at bid $\bid$ is $\bidallocation(\bid)=\expect[\info]{\bidallocation^\info(\bid)}$, with inverse $\interthresh{}{x}=\inf\{b\,|\,\bidallocation(\bid)\geq x\}$. The {\em threshold surplus} is $\cumuthresh(z)=\int_0^z\interthresh{}{x}\,dx$.
\end{definition}

 Individual efficiency measures the tradeoff between this competition and utility: either the agent's utility is a large fraction of their value, or their competition is high.
\begin{definition}\label{def:vc}
	The {\em \vc}\ of a single-agent \bo\ $(\val,\mechdist)$ is the largest $\res$ such that for all target allocation levels $z\in[0,1]$, $\util(\mechdist) + \cumuthresh(z)\geq\res\val z$. The \vc\ of a family of single-agent \bo\ is the smallest \vc\ of any \outcome\ in the family.
\end{definition}
Definition~\ref{def:vc} can be motivated by considering a single-buyer single-seller setting, where \vc\ directly implies welfare bounds. Specifically, consider a seller with an outside option for the item, defined as follows:
\begin{definition}
	In the {\em single-buyer winner-pays-bid instance} for bidding problem $(\val,\mechdist)$, there is single seller offering an item to a single buyer.
	The seller has an outside option value $\val_0$ for the item, obtained whenever the item goes unsold, with $\val_0$ drawn by first drawing $\info\sim U[0,1]$, then drawing $\val_0$ according to CDF $\bidallocation^\info$.
	The  buyer submits bid $\bid^\info$ for each $\info$, wins the item whenever $\bid^\info\geq v_0$, and pays their bid when they win.
\end{definition}
In this setting, the bidder's competition is an explicitly-modeled outside option for the seller.
Note that the buyer's bid $\bid^\info$ is correlated with the seller's outside option value $\val_0$, with the correlation indexed as in the single-agent bidding problem by variable $\info$.
Conditioned on $\info$, the buyer's bid allocation rule is exactly $\bidallocation^\info$, induced by the rule that they win whenever $\bid^\info\geq \val_0$. Proposition~\ref{prop:single} demonstrates that individual efficiency directly implies a welfare guarantee.
\begin{proposition}\label{prop:single}
	Let $(\val,\mechdist)$ be a single-agent bidding problem with \vc\ $\eta$. Then in the single-buyer winner-pays-bid mechanism for $(\val,\mechdist)$, the social welfare is at least a $\eta$-fraction of optimal.
\end{proposition}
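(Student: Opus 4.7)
The plan is to apply the individual-efficiency inequality $\util(\mechdist) + \cumuthresh(z) \geq \eta \val z$ at the welfare-optimal interim allocation level $z^{\ast} = \bidallocation(\val) = \Pr[v_0 \leq \val]$, which is the probability the buyer is served under the first-best allocation that awards the item to whoever has the higher value. The identification that makes this substitution meaningful is that, because the buyer wins iff $\bid^{\theta} \geq v_0$, the interim bid allocation rule $\bidallocation^{\theta}(\cdot)$ is literally the conditional CDF of $v_0$ given $\theta$, so $\bidallocation(\cdot) = \expect[\theta]{\bidallocation^{\theta}(\cdot)}$ is the unconditional CDF of $v_0$.

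First I would rewrite both the optimal welfare and the threshold surplus at $z^{\ast}$ as moments of $v_0$. Splitting on whether $v_0 \leq \val$ gives
\begin{equation*}
	\expect[]{\max(\val, v_0)} = \val z^{\ast} + \expect[]{v_0 \mathbf{1}[v_0 > \val]},
\end{equation*}
while changing variables $x = \bidallocation(b)$ (equivalently, integration by parts) in $\cumuthresh(z^{\ast}) = \int_0^{z^{\ast}} \interthresh{}{x}\,dx$ yields
\begin{equation*}
	\cumuthresh(z^{\ast}) = \expect[]{v_0 \mathbf{1}[v_0 \leq \val]}.
\end{equation*}
Summing these identities produces the clean decomposition $\expect[]{\max(\val, v_0)} = \val z^{\ast} + \expect[]{v_0} - \cumuthresh(z^{\ast})$, which is the form I will match.

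The structural observation that closes the argument is a uniform lower bound on the seller's realized payoff: when the buyer wins the seller collects payment $\bid^{\theta} \geq v_0$, and when the buyer loses the seller retains the item for value $v_0$. In either case the seller receives at least $v_0$, so the seller's expected utility is at least $\expect[]{v_0}$, and hence the social welfare $W$, which is the sum of buyer utility and seller utility, satisfies $W \geq \util(\mechdist) + \expect[]{v_0}$.

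Combining the pieces, applying individual efficiency at $z = z^{\ast}$ gives $\util(\mechdist) \geq \eta \val z^{\ast} - \cumuthresh(z^{\ast})$, and substituting into the welfare bound yields
\begin{equation*}
	W \;\geq\; \eta \val z^{\ast} - \cumuthresh(z^{\ast}) + \expect[]{v_0} \;=\; \eta \val z^{\ast} + \expect[]{v_0 \mathbf{1}[v_0 > \val]} \;\geq\; \eta \, \expect[]{\max(\val, v_0)},
\end{equation*}
where the last inequality discounts the second summand using $\eta \leq 1$. The main subtlety is choosing the pivot $z^{\ast}$ at which individual efficiency interacts cleanly with the natural decomposition of the first-best welfare, and noticing that the seller's payoff is bounded below by $v_0$ in \emph{every} realization (not merely when the item goes unsold), so that the leftover term $\expect[]{v_0} - \cumuthresh(z^{\ast})$ is exactly the moment of $v_0$ above $\val$ needed to complete $\eta \cdot \mathrm{OPT}$.
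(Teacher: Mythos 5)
Your proposal is correct and follows essentially the same route as the paper's proof: both pivot on the optimal allocation level $z^{\ast}=\Pr[\val_0\leq\val]$, both use the identity $\cumuthresh(z^{\ast})=\Ex{\val_0\mathbf{1}[\val_0\leq\val]}$, both lower-bound the seller's realized payoff by $\val_0$ in every realization (since the winning bid is at least $\val_0$), and both finish by applying individual efficiency at $z^{\ast}$ together with $\eta\leq 1$ on the leftover term $\Ex{\val_0\mathbf{1}[\val_0>\val]}$. The only difference is cosmetic: you phrase the seller's payoff bound via $\Ex{\val_0}$ directly, whereas the paper writes it out as a sum of conditional expectations before recombining.
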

\begin{proof}
	Given value $\val$, the optimal social welfare is $\val\prob[]{\val_0\leq \val}+\expect[]{\val_0\,|\,\val_0\geq \val}\prob[]{\val_0\geq \val}$: the item is assigned to the seller or the buyer depending on whether $\val$ or $\val_0$ is higher. 
	Letting $z^*=\prob[]{\val_0\leq \val}$, we may write the optimal welfare equivalently as $\val z^* + \expect[]{v_0\,|\,v_0\geq v}\prob[]{v_0\geq v}$.
	
	Now consider the outcome when the buyer plays $\bid^\info$. The social welfare is the sum of three terms: the buyer's utility, the seller's revenue, and the seller's utility from the outside option.
	The buyer's utility is $\expect[\info]{(\val-\bid^\info)\bidallocation^\info(\bid^\info)}=\util(\mechdist)$.
	The seller's revenue is $\bid^\info\prob[]{\bid^\info\geq \bid^\info}$.
	The seller's utility from the outside option is $\expect[]{\val_0\,|\,\bid^\info<\val_0}\prob[]{\bid^\info< \val_0}$.
	We can lower bound the seller's total utility in the following way:
	\begin{align*}
		&\mathbb E[\bid^\info\,|\,\bid^\info\geq \val_0]\text{Pr}[\bid^\info\geq \val_0]+\mathbb E[\val_0\,|\,\bid^\info< \val_0]\text{Pr}[\bid^\info< \val_0]\\
		&\quad\quad\geq \mathbb E[\val_0\,|\,\bid^\info\geq \val_0]\text{Pr}[\bid^\info\geq \val_0]+\mathbb E[\val_0\,|\,\bid^\info< \val_0]\text{Pr}[\bid^\info< \val_0]\\
		&\quad\quad=\mathbb E[\val_0\,|\,\val\geq \val_0]\text{Pr}[\val\geq \val_0]+ \mathbb E[\val_0\,|\,\val< \val_0]\text{Pr}[\val< \val_0]\\
		&\quad\quad= \cumuthresh(z^*)+\mathbb E[\val_0\,|\,\val< \val_0]\text{Pr}[\val< \val_0],
	\end{align*}
	Where the equality of the second and third lines holds because both lines are a rewriting of $\mathbb E[\val_0]$ and the fourth from the definitions of $\cumuthresh$ and $z^*$.
	We can combine all three components of the single-buyer mechanism's welfare to obtain a lower bound in terms of the optimal welfare:
	\begin{align*}
	&\mathbb E_\info[(\val-\bid^\info)\bidallocation^\info(\bid^\info)]+\mathbb E[\bid^\info\,|\,\bid^\info\geq \val_0]\text{Pr}[\bid^\info\geq \val_0]+\mathbb E[\val_0\,|\,\bid^\info< \val_0]\text{Pr}[\bid^\info< \val_0]\\
	&\quad\quad\geq\util(\mechdist)+\cumuthresh(z^*)+\mathbb E[\val_0\,|\,\val< \val_0]\text{Pr}[\val< \val_0]\\
	&\quad\quad\geq \res \val z^* + \expect[]{v_0\,|\,v_0\geq v}\prob[]{v_0\geq v}\\
	&\quad\quad\geq\res(\val z^* + \expect[]{v_0\,|\,v_0\geq v}\prob[]{v_0\geq v}),
	\end{align*}
	The second inequality follows from individual efficiency, and the third because $\res\leq 1$.
%
%
\end{proof}
In mechanisms with multiple agents competing for service, the seller's outside option is endogenously generated: rather than selling to agent $\agent$, they consider some subset of the others. We will see that \rc\ measures the value of the seller's endogeously generated outside option, and hence \rc\ and \vc\ will combine to imply robust efficiency.
To enable robust analysis, Section~\ref{sec:vcanalyses} will consider \vc\ taken in the worst case over broad families of single-agent \bos\ likely to arise in equilibrium.
We then connect the single-agent analyses to performance guarantees in auction equilibria in Section~\ref{sec:perf}. If equilibrium induces single-agent mechanisms with \vc\ $\res$ in a mechanism with \rc\ $\revpar$, then we show in Section~\ref{sec:welf} that the mechanism has welfare approximation $\revpar\res$. Furthermore, we show in Section~\ref{sec:welfare-bnerev} for revenue under Bayes coarse correlated equilibrium and independently distributed values, \vc\ $\res$ and \rc\ $\revpar$ imply a revenue approximation ratio of $\revpar\res/2$ for winner-pays-bid mechanisms with carefully selected reserve prices. Finally, we trace out the limits of this approach by exhibiting examples with low welfare in Section~\ref{sec:lb}.

%
\subsection{\VC\ Analyses}
\label{sec:vcanalyses}
In this section, we consider single-agent \bos\ that arise in equilibria of multi-agent mechanisms. In Section~\ref{sec:welfare-epsne}, we study \bos\ with a single, deterministic $\bidallocation^\info$ (i.e.\ a fixed threshold) under best response bids, as arises in pure Nash equilibria of deterministic mechanisms.
Section~\ref{sec:bnewelfare} then considers general distributions over allocation rules under best response, which will capture equilibria of randomized environments with incomplete information.
Finally, we show in Section~\ref{sec:approx} that \vc\ degrades smoothly with the agent's level of best response, enabling the study of approximate equilibria.

\subsubsection{Deterministic Rules Under Best Response}
\label{sec:welfare-epsne}

Pure strategy Nash equilibria of deterministic mechanisms induce particularly simple bidding problems, where each agent's allocation steps up to a threshold. The corresponding family of single-agent \bos\ is the following.

\begin{definition}\label{def:bodet}
	A single-agent \bo\ $(\val,\mechdist)$ is {\em deterministic} if (1) the joint distribution $\mechdist$ over bids and allocation rules $(\bid^\info,\bidallocation^\info)$ is degenerate; i.e.\ $(\bid^\info,\bidallocation^\info)=(\bid,\bidallocation)$ for all $\info$ and (2) $\bidallocation$ is deterministic, i.e.\ it steps up to $1$ at some $\threshold\in[0,\infty)$.
	For deterministic \bos\, define the agent's bid utility function to be $\util(d)=(\val-d)\bidallocation(d)$ for all bids $d$.
\end{definition}

We study \bos\ where the agent exactly best responds.
For deterministic \bos\ this is particularly simple to define.
We will give a more general definition when we consider randomized \bos\ in the next section.

\begin{definition}
	A deterministic single-agent \bo\ with value $\val$, deterministic bid $\bid$, and allocation rule $\bidallocation$ satisfies {\em best response} if for every alternate bid $\bid'$, $\util(\bid)\geq \util(\bid')$.
\end{definition}

\begin{lemma}\label{lem:epsres}
	The \vc\ of deterministic single-agent \bos\ satisfying best response is $\eta=1$.
\end{lemma}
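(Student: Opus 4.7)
The plan is to unpack the definitions for deterministic bidding outcomes, identify the optimal bid under best response in closed form, and then verify the individual efficiency inequality $\util(\mechdist) + \cumuthresh(z) \geq \res\, \val z$ directly by cases.

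First I would simplify the two ingredients appearing in the definition of individual efficiency. Since $\mechdist$ is degenerate at a single pair $(\bid,\bidallocation)$ and $\bidallocation$ is a step function jumping from $0$ to $1$ at some $\threshold \in [0,\infty)$, the expected allocation rule coincides with $\bidallocation$ itself. Its pseudo-inverse is $\interthresh{}{x} = \threshold$ for every $x \in (0,1]$, so the threshold surplus reduces to
\begin{equation*}
\cumuthresh(z) \;=\; \int_0^z \threshold\, dx \;=\; \threshold\, z.
\end{equation*}

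Next I would pin down the agent's best-response utility. Because the allocation rule is a step function with jump at $\threshold$, every bid strictly below $\threshold$ yields utility $0$, while every bid at or above $\threshold$ yields utility $\val - \bid'$, which is maximized at $\bid' = \threshold$. Consequently, under best response the agent's utility is $\util(\mechdist) = \max(\val - \threshold,\, 0)$.

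Finally I would verify the individual efficiency inequality with $\res = 1$. After substitution, the required bound $\max(\val - \threshold, 0) + \threshold\, z \geq \val\, z$ for every $z \in [0,1]$ rearranges to $\max(\val - \threshold, 0) \geq (\val - \threshold)\, z$. When $\val \geq \threshold$, the left side equals $\val - \threshold \geq (\val - \threshold)z$ since $z \leq 1$; when $\val < \threshold$, the left side equals $0$ while the right side is nonpositive. Both cases succeed, which establishes the claim. There is no substantive obstacle here: the lemma simply formalizes that a best responder facing a deterministic threshold either wins and collects the full surplus $\val - \threshold$, or is priced out by a threshold that already exceeds $\val$, in which case the threshold surplus alone majorizes $\val z$.
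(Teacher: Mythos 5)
Your argument is essentially the paper's: you compute $\cumuthresh(z)=\threshold z$, use best response to lower bound the utility by $\max(\val-\threshold,0)$, and verify $\util(\mechdist)+\threshold z\geq \val z$ by cases. Two small remarks. First, you assume a bid of exactly $\threshold$ wins; since $\threshold$ is defined as an infimum this tie-break is not guaranteed, and the paper sidesteps it by considering deviations to $\threshold+\delta$ and letting $\delta\to 0$, which yields the same lower bound $\util(\mechdist)\geq\val-\threshold$ without any assumption on the allocation at $\threshold$ itself. Second, the lemma asserts the equality $\eta=1$, and your proof only establishes $\eta\geq 1$; the paper closes the other direction with the instance $\val=1$, $\threshold=0$, $\bid=0$, $z=1$, for which $\util(\mechdist)+\cumuthresh(z)=\val z$ exactly, so no $\res>1$ is attainable over the family. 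Adding that one-line example completes your proof.
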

\begin{proof}
	We give an argument lower bounding the individual efficiency, and then exhibit a particular example where our bound holds with equality.
	Since $\bidallocation$ is deterministic, it steps up to $1$ at some $\threshold\in[0,\infty)$. Then $\cumuthresh(z)=\threshold z$. For any $\delta>0$, the agent could bid $\threshold+\delta$ and win. Hence the best response bid $\bid$ satisfies $\util(b)\geq (\val-\threshold-\delta)$. Taking $\delta\rightarrow 0$, we obtain $\util(\bid)+\threshold\geq \val$. We may further weaken this to obtain the desired inequality for all $z\in[0,1]$: $\util(\bid)+\threshold z\geq \val z$.
	This bound is the best possible in the worst case. An example for which it holds with equality is $\val=1$, $z=1$, $\threshold = 0$, and $\bid = 0$.
\end{proof}

Note that this analysis relied heavily on the allocation rule being deterministic. We present an example demonstrating this assumption to be necessary in Section~\ref{sec:randomrule}.

\subsubsection{Randomized Rules Under Best Response}
\label{sec:bnewelfare}
\newcommand{\xgen}{X_{\text{rand}}}

Randomized environments present a richer bidding problem in two ways.
First, for a given value $\val$, the agent's information structure may induce a distribution over interim allocation rules.
Second, for a fixed allocation rule, the agent may now select from many different allocation levels and associated payments.
This latter randomness may stem from incomplete information with respect to other agent's values or from randomness in the mechanism or feasibility environment. 
Below, we give a tight single-agent analysis under exact best response, defined as follows.

\begin{definition}
	A single-agent \bo\ $(\val,\mechdist)$ satisfies {\em best response} if for every fixed deviation bid $\bid'$, we have $\util(\mechdist)\geq\expect[\info]{(\val-\bid')\allocation^\info(\bid')}.$
\end{definition}

\begin{lemma}\label{lem:brres}
	The \vc\ of single-agent \bos\ under best response is $\res=(e-1)/e$. 
	
\end{lemma}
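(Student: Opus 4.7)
The plan is to prove the lower bound by exhibiting a carefully chosen randomized deviation, and then to match it with a tight example. Throughout, I write $u=\util(\mechdist)$, $\bidallocation(b)=\expect[\info]{\bidallocation^\info(b)}$, and recall that best response, extended linearly, gives $u\geq\expect[B]{(\val-B)\bidallocation(B)}$ for any distribution of bids $B$ independent of $\info$.

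The key idea is to set $B=\interthresh{}{Y}$ where $Y$ is drawn on $[z/e,z]$ with density $1/y$. This density normalizes to $1$ because $\int_{z/e}^z (1/y)\,dy = 1$. Since $\bidallocation$ is nondecreasing, by definition of $\interthresh{}{\cdot}$ we have $\bidallocation(\interthresh{}{y})\geq y$. To avoid sign issues when $\interthresh{}{Y}>\val$, I use the truncated deviation $B=\interthresh{}{Y}\cdot\mathbf{1}[\interthresh{}{Y}\leq\val]$: on $\{\interthresh{}{Y}\leq\val\}$ the term $(\val-B)\bidallocation(B)=(\val-\interthresh{}{Y})\bidallocation(\interthresh{}{Y})\geq(\val-\interthresh{}{Y})Y$ by multiplying the nonneg quantity $(\val-\interthresh{}{Y})$ by $\bidallocation(\interthresh{}{Y})\geq Y$; on $\{\interthresh{}{Y}>\val\}$ the bound $(\val-B)\bidallocation(B)\geq 0\geq(\val-\interthresh{}{Y})Y$ holds trivially. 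Combining cases, $u\geq\expect[Y]{(\val-\interthresh{}{Y})Y}$.

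Now carry out the integration against density $1/y$:
\begin{align*}
u \;\geq\; \int_{z/e}^{z} \tfrac{1}{y}(\val-\interthresh{}{y})\,y\,dy \;=\; \val\bigl(z-z/e\bigr)-\int_{z/e}^{z}\interthresh{}{y}\,dy.
\end{align*}
Rearranging and using $\cumuthresh(z)=\int_0^z \interthresh{}{y}\,dy\geq\int_{z/e}^z \interthresh{}{y}\,dy$ (since $\interthresh{}{\cdot}\geq 0$),
\begin{align*}
u+\cumuthresh(z) \;\geq\; u+\int_{z/e}^z \interthresh{}{y}\,dy \;\geq\; \tfrac{e-1}{e}\,\val\,z,
\end{align*}
proving $\res\geq (e-1)/e$ for every $z\in[0,1]$.

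For tightness, I would exhibit the deterministic-allocation bidding outcome with $\val=1$, threshold $\interthresh{}{x}=1-1/(ex)$ for $x\in[1/e,1]$ and $\interthresh{}{x}=0$ for $x<1/e$ (equivalently $\bidallocation(b)=1/(e(1-b))$ on $[0,1-1/e]$). A short calculation gives $u=1/e$ for any best response bid (the agent is indifferent on $[0,1-1/e]$), and $\cumuthresh(1)=\int_{1/e}^1(1-1/(ex))\,dx=1-2/e$, so $u+\cumuthresh(1)=1-1/e=(e-1)/e\cdot \val\cdot 1$, confirming the bound cannot be improved. The main obstacle is the sign casework in the randomized-deviation step, which requires the truncated deviation trick; once that is handled, the choice of density $1/y$ on $[z/e,z]$ makes the integral collapse cleanly and the $(e-1)/e$ constant emerges naturally.
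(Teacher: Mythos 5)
Your proof is correct, and the lower-bound argument takes a genuinely different route from the paper's. The paper works pointwise on the threshold curve: from best response at each fixed bid it derives $\interthresh{}{\allocation}\geq\max(\val-\util(\mechdist)/\allocation,0)$, integrates this bound to get $\cumuthresh(z)\geq\val z-\util(\mechdist)(1-\ln(\util(\mechdist)/\val z))$, and then minimizes the resulting expression over $\util(\mechdist)$ by calculus to extract the constant $(e-1)/e$. You instead exhibit a single randomized deviation --- bid $\interthresh{}{Y}$ (truncated at $\val$) with $Y$ drawn with density $1/y$ on $[z/e,z]$ --- and average the best-response inequality over it, so the constant falls out of the normalization $\int_{z/e}^{z}y^{-1}\,dy=1$ with no optimization step. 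This is the classic ``smooth mechanism'' deviation of Syrgkanis--Tardos, and it buys a shorter, calculation-free derivation at the cost of having to justify averaging best response over a random deviation (immediate, since the inequality holds for every fixed bid) and the sign casework you handle by truncation; the paper's version handles the same sign issue via the $\max(\cdot,0)$ in the integrand and arguably makes the tightness of the bound more transparent, since the extremal $\util(\mechdist)$ and threshold curve emerge from the optimization. Your tight example is the same as the paper's, and your arithmetic ($\util=1/e$, $\cumuthresh(1)=1-2/e$) checks out. One shared loose end, present in the paper as well: $\bidallocation(\interthresh{}{y})\geq y$ requires bidding $\interthresh{}{y}+\delta$ and taking $\delta\to 0$ when the infimum is not attained, which does not affect the conclusion.
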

\begin{proof}
	We give a lower bounding argument, followed by an example for which the analysis is tight. The lower bounding argument is illustrated graphically in Figure~\ref{fig:fpawelfare}.
	Define the expected allocation rule $\bidallocation(d)=\expect[\info]{\bidallocation^\info(d)}$ for all deviation bids $d$.
	Given any such deviation, the agent's utility $\util(\mechdist)$ from the \bo\ must satisfy $\util(\mechdist)\geq (\val-\dev)\bidallocation(\dev)$. 
	For any allocation probability $\allocation$, the agent could get allocation probability at least $\allocation $ by bidding arbitrarily close to $\interthresh{}{\allocation}=\inf\{\dev\,|\,\bidallocation(\dev)\geq \allocation\}$. Hence, $\util(\mechdist)\geq (\val-\interthresh{}{\allocation})\allocation$ for all $\allocation\in[0,1]$. We may rearrange this as $\interthresh{}{\allocation}\geq \val-\util(\mechdist)/\allocation$. Since we also have $\interthresh{}{\allocation}\geq 0$, we may write:
	\begin{align*}
	\cumuthreshagent{}{z}{}\geq \int_0^z \max(\val-\tfrac{\util(\mechdist)}{\allocation},0)\,d\allocation&=\int_{\util(\mechdist)/\val}^z \val-\tfrac{\util(\mechdist)}{\allocation}\,d\allocation\\
	&=\val z-\util(\mechdist)(1-\ln\tfrac{\util(\mechdist)}{\val z}).
	\end{align*}
	We therefore have $\util(\mechdist)+\cumuthreshagent{}{z}{}\geq \val z +\util(\mechdist)\ln(\util(\mechdist)/\val z)$. Holding $\val$ fixed and minimizing the righthand side as a function of $\util(\mechdist)$ yields the inequality $\util(\mechdist)+\cumuthreshagent{}{z}{}\geq \val z(e-1)/e$.  This lower bounds individual efficiency by $(e-1)/e$. For an example exhibiting the tightness of this bound, take $\val = 1$, $\bidallocation^\info(d)=\bidallocation(d)=(e(1-d))^{-1}$ for $d\in[0,(e-1)/e]$ and all $\info\in[0,1]$, and $\bid^\info = 0$.
\end{proof}

    \begin{figure}[t]
	\small
	\centering
	\subfloat[][
	\Vc\ analysis for randomized allocation rule $\bidallocation(d)=\mathbb E_{\info}{[}\bidallocation^\info(d){]}$.
	The agent's utility $\util(\mechdist)$ is at least the utility from best single bid $\bidutil(\bid)=(\val-\bid)\bidutil(\bid)$ shown as the bottom-right shaded box.
	Together, $\bidutil(\bid)$ and $\cumuthresh(z)$ approximate the dotted box, of area $\val z$.
	\label{fig:approx}]{\begin{tikzpicture}[xscale=3.5, yscale=3.5, domain=0:0.9, smooth]
			\draw  (1.12,1) -- (.91,0.7) .. controls ( 0.68,0.4) and ( 0.6,0.32) .. (0.16,0.1) --(0,0);;
			\node at (1.2,0.9) {\footnotesize $\bidallocation(\dev)$};
			\draw[-] (0,0) -- (0,1) node[left] {$1$};
			\draw[-] (0,0) -- (1.4,0) node[below] {Bid ($\dev$)};
			\node at (1, -0.09) {\footnotesize $\val$};
			\draw[pattern=north east lines, pattern color=lightgray] ( 0.5,0) rectangle ( 1,0.29);
			\node [fill=white!50] at (0.75,0.15){\footnotesize $\bidutil(\bid)$};
			\draw[ pattern=north west lines, pattern color=lightgray] (0,0) -- (0,.7) -- (1.12, .7) -- (.91,0.7) .. controls ( 0.68,0.4) and ( 0.6,0.32) .. (0.16,0.1) --(0,0);
			\node [fill=white] at ( 0.35,0.46) {\footnotesize $ \cumuthresh(z)$};
			\node at (-0.15,0.3) {\footnotesize $\bidallocation(\bid)$};
			\draw[-] (0,0) -- (0,.7) node[left] {$z$};
			\draw[] (-0.02, 0.29) -- (0, 0.29);     
			\node at (0.5,-0.09) {\footnotesize $\bid$};
			\draw[] (1, 0) -- (1, -0.025);
			\draw[] (0.5, 0) -- (0.5, -0.025);
			\draw[line width=1.3pt, dashed] (0,0) rectangle (1,.7);
			\node at (0.7,1.1) {Bid Allocation Rule};
		\end{tikzpicture}
		} \hspace*{1cm}
	\subfloat[][The agent's equiutility curve $\bidutil(\bid)/(\val-\dev)$ (dark line) upper bounds their bid-allocation rule, yielding a lower bound on $\cumuthresh(z)$.
	\label{fig:indifferent} ]{
		\begin{tikzpicture}[xscale=3.5, yscale=3.5, domain=0:0.9, smooth]
			\draw  (1.12,1) -- (.91,0.7) .. controls ( 0.68,0.4) and ( 0.6,0.32) .. (0.16,0.1) --(0,0);;
			\draw[line width=1.4pt] (0.85, 1) -- (0.82, 0.9) -- (.75,0.7) .. controls ( 0.55,0.19) and ( 0.35,0.20) .. (0,0.15);     
			\node at (0.53,0.55) {\footnotesize $ \frac{\bidutil(\bid)}{\val-\dev}$};   
			\draw[pattern=north east lines, pattern color=lightgray] ( 0.5,0) rectangle ( 1,0.29);
			\node at (1.2,0.9) {\footnotesize $\bidallocation(\dev)$};
			\node [fill=white!50] at (0.75,0.15){\footnotesize $\bidutil(\bid)$};
			\draw[-] (0,0) -- (0,1) node[left] {$1$};
			\draw[-] (0,0) -- (1.4,0) node[below] {Bid ($\dev$)};
			\node at (1, -0.09) {\footnotesize $\val$};
			\node at (-0.16,0.3) {\footnotesize $\bidallocation(\bid)$};       
			\draw[] (-0.02, 0.3) -- (0, 0.3);       
			\node at (0.5,-0.09) {\footnotesize $\bid$};
			\draw[] (1, 0) -- (1, -0.025);
			\draw[] (0.5, 0) -- (0.5, -0.025);
			\node at (-0.15, 0.15) {\footnotesize $\frac{ \bidutil(\bid)}{\val}$};
			\draw[] (-0.02, 0.155) -- (0, 0.155);
			\draw[line width=1.3pt, dashed] (0,0) rectangle (1,.7);
			\node at (.7,1.1) {Bid Allocation Rule};
		\end{tikzpicture}
	}   
	\caption{ \label{fig:fpawelfare}
	}
\end{figure}
\subsubsection{Approximate Best Response}
\label{sec:approx}

This section generalizes the analyses of \vc\ beyond the strong assumption of best response by the agent, to $(1-\epsilon)$-best response. 
We show that the best response analyses of the previous sections degrade smoothly with the level of best response.
This result shows that robust efficiency bounds using \rc\ and \vc\ are also robust to agent misoptimization.
This is especially important where equilibria of exact best response might not exist, such as with pure Nash equilbria of first-price auctions.

\begin{definition}
A single-agent \bo\ $(\val,\mechdist)$ satisfies 
{\em $(1-\epsilon)$-best response} if for every fixed deviation bid $\bid'$, we have $\util(\mechdist)\geq(1-\epsilon)\expect[\info]{(\val-\bid')\allocation^\info(\bid')}.$
\end{definition}

Approximate best response compares utility in the \bo\ to the utility of the best fixed bid.
We will analyze the \vc\ of \bos\ under approximate by best response by showing that these \bo\ inherit the \vc\ of this bid, degraded by a factor of $(1-\epsilon)$. Formally:

\begin{lemma}\label{lem:epsreduction}
Let $(\val,\bid,\bidallocation^\info)$ be a single-agent \bo\ with a fixed bid $\bid$ and a random allocation rule $\bidallocation^\info$. If $(\val,\bid,\bidallocation^\info)$ has \vc\ $\eta$, then any \bo\ $(\val, { \bid}^{\info},\bidallocation^\info)$ with random bid $\bid^\info$ satisfying $(1-\epsilon)$-best response has \vc\ at least $(1-\epsilon)\eta$.
\end{lemma}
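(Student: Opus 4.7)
The plan is to exploit the fact that $\cumuthresh(z)$ and the fixed-bid utility in the original bidding outcome are both determined purely by the distribution over allocation rules $\bidallocation^\info$, which is shared between the two \bos. The only thing that changes is how the agent bids, and approximate best response gives a direct comparison between the utility under $\bid^\info$ and the utility under the fixed deviation $\bid$.

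First I would apply $(1-\epsilon)$-best response to the random-bid \bo\ $(\val,\bid^\info,\bidallocation^\info)$ using the deviation $\bid' = \bid$, yielding
\begin{equation*}
\util(\mechdist') \;\geq\; (1-\epsilon)\,\expect[\info]{(\val-\bid)\bidallocation^\info(\bid)} \;=\; (1-\epsilon)(\val-\bid)\bidallocation(\bid) \;=\; (1-\epsilon)\,\util(\mechdist),
\end{equation*}
where $\mechdist = (\bid,\bidallocation^\info)$ is the fixed-bid \bo\ and $\bidallocation(\bid)=\expect[\info]{\bidallocation^\info(\bid)}$. Next I would observe that the expected allocation rule $\bidallocation$ and hence the threshold surplus $\cumuthresh(z)$ depend only on the distribution over allocation rules (which is the same in both \bos), so the $\cumuthresh(z)$ appearing in the \vc\ inequalities for the two \bos\ is literally the same function.

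Now I would combine these two facts with the assumed \vc\ bound for the fixed-bid \bo. Since $1-\epsilon \le 1$, we can absorb the factor on both terms:
\begin{equation*}
\util(\mechdist') + \cumuthresh(z) \;\geq\; (1-\epsilon)\util(\mechdist) + (1-\epsilon)\cumuthresh(z) \;=\; (1-\epsilon)\bigl(\util(\mechdist)+\cumuthresh(z)\bigr) \;\geq\; (1-\epsilon)\,\eta\,\val z,
\end{equation*}
where the last step uses the assumption that $(\val,\bid,\bidallocation^\info)$ has \vc\ $\eta$. This holds for every target level $z\in[0,1]$, so the random-bid \bo\ has \vc\ at least $(1-\epsilon)\eta$ by Definition~\ref{def:vc}.

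There is no serious obstacle here: the content of the lemma is just that approximate best response carries the \vc\ guarantee of the benchmark fixed bid over to the actual (random) bid, at a multiplicative cost of $1-\epsilon$. The only subtle point to state carefully is that $\cumuthresh(z)$ is an object defined from the allocation-rule distribution alone and is therefore identical in the two \bos, so that the same constant can be pulled out of both terms on the left.
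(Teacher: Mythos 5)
Your proposal is correct and matches the paper's argument: both apply $(1-\epsilon)$-best response with the fixed bid $\bid$ as the deviation, invoke the \vc\ of the fixed-bid \bo, and absorb the $(1-\epsilon)$ factor on the threshold-surplus term using $\cumuthresh(z)\geq 0$. The observation that $\cumuthresh(z)$ is identical in both \bos\ is implicit in the paper but is the same underlying fact.
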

\begin{proof}
We have the following for any target allocation $z\in[0,1]$.
\begin{align*}
	 \util({\bid}^\info,\bidallocation^\info)&\geq (1-\epsilon)\util({\bid},\bidallocation^\info)\\
	 &\geq (1-\epsilon)(\res\val z-\cumuthresh(z))\\
	 &\geq (1-\epsilon)\res\val z - \cumuthresh(z).
\end{align*}
The first inequality follows from approximate best response, and the second from the \vc\ of $\bid$.
We conclude that $(\val, { \bid}^{\info},\bidallocation^\info)$ has \vc\ at least $(1-\epsilon)\res$.
\end{proof}

We may combine Lemma~\ref{lem:epsreduction} with the best response analyses of Lemma~\ref{lem:epsres} and \ref{lem:brres} to obtain new \vc\ guarantees.
\begin{corollary}
	Let $(\val,\bid^\info,\bidallocation^\info)$ be a single-agent \bo\ satisfying $(1-\epsilon)$-best response. Then $(\val,\bid^\info,\bidallocation^\info)$ has \vc\ at least $(1-\epsilon)(e-1)/e$. If $(\val,\bid^\info,\bidallocation^\info)$ is deterministic (Definition~\ref{def:bodet}), then it has \vc\ at least $(1-\epsilon)$.
\end{corollary}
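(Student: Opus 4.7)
The plan is to compose Lemma~\ref{lem:epsreduction} with the exact best-response individual-efficiency bounds of Lemmas~\ref{lem:brres} and~\ref{lem:epsres}. Given an arbitrary $(1-\epsilon)$-best response \bo\ $(\val,\bid^\info,\bidallocation^\info)$, my first step is to introduce an auxiliary fixed-bid \bo\ $(\val,\bid^*,\bidallocation^\info)$, where $\bid^*$ is a single fixed bid attaining (or approaching) the supremum of $\expect[\info]{(\val-\bid')\bidallocation^\info(\bid')}$ over $\bid'$. By the definition of $\bid^*$, this auxiliary \bo\ automatically satisfies exact best response among fixed bids, which is precisely the hypothesis required by Lemmas~\ref{lem:brres} and~\ref{lem:epsres}.

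For the general stochastic case, I would then invoke Lemma~\ref{lem:brres} to conclude that the auxiliary \bo\ has \vc\ at least $(e-1)/e$, and apply Lemma~\ref{lem:epsreduction} with $\eta = (e-1)/e$ to transfer this bound (degraded by the factor $(1-\epsilon)$) to the original \bo, yielding $(1-\epsilon)(e-1)/e$. For the deterministic case, Definition~\ref{def:bodet} forces $\bidallocation^\info = \bidallocation$ to be a single step function common across all $\info$, so the auxiliary \bo\ is itself a deterministic \bo\ in the sense of Definition~\ref{def:bodet} and satisfies best response; Lemma~\ref{lem:epsres} supplies \vc\ $1$, and Lemma~\ref{lem:epsreduction} with $\eta = 1$ yields the corresponding degraded bound of $(1-\epsilon)$ for the original.

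The only subtlety is existence of the optimal fixed bid $\bid^*$, which may fail to be attained when the expected utility function has a discontinuity (e.g.\ just above a hard threshold, as in the deterministic case). I would handle this by choosing a sequence $\bid_n$ of fixed bids whose utilities converge to the supremum, applying the chain of inequalities from the proof of Lemma~\ref{lem:epsreduction} to each intermediate auxiliary \bo\ $(\val,\bid_n,\bidallocation^\info)$ with a vanishing additive error $o(1)$, and passing to the limit $n \to \infty$ to recover the clean bound. This is the only technical wrinkle; everything else is direct substitution of the two lemmas into the framework of Lemma~\ref{lem:epsreduction}.
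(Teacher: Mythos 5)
Your proposal is correct and follows essentially the same route as the paper: define an auxiliary \bo\ with the utility-maximizing fixed bid, apply Lemma~\ref{lem:brres} (resp.\ Lemma~\ref{lem:epsres}) to it, and transfer the bound via Lemma~\ref{lem:epsreduction}. Your limiting argument for when the optimal fixed bid is not attained is a reasonable technical refinement that the paper's proof glosses over, but it does not change the substance of the argument.
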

\begin{proof}
	Let $\bid$ be a fixed bid maximizing $\expect[\info]{(\val-\bid)\bidallocation(\bid)}$.
	The new \bo\ $(\val,\bid,\bidallocation^\info)$ satisfies best response and has \vc\ at least $(e-1)/e$, by Lemma~\ref{lem:brres} (resp.\ \vc\ $1$ by Lemma~\ref{lem:epsres}, if $(\val,\bid,\bidallocation^\info)$ is deterministic).
	Applying Lemma~\ref{lem:epsreduction}, we conclude that $(\val,\bid^\info,\bidallocation^\info)$ must have \vc\ at least $(1-\epsilon)(e-1)/e$ (resp.\ $(1-\epsilon)$).
\end{proof}
\subsection{Performance Guarantees}
\label{sec:perf}
\Rc\ and \vc\ together yield performance guarantees for auction equilibria.  Section~\ref{sec:welf} gives the welfare consequences. In Section~\ref{sec:welfare-bnerev}, we then use the reduction from revenue maximization to welfare maximization of \citet{M81} to show that similar guarantees apply to winner-pays-bid mechanisms with reserves when values are independent.
\subsubsection{Robust Welfare Guarantees}
\label{sec:welf}

\Vc\ was motivated in terms of the welfare properties of a single-buyer allocation problem, where a seller chooses between the buyer and an outside option. In multi-buyer mechanisms, the seller's outside options are endogenously generated by competition; \trc\ measures the way competition translates into revenue. Hence, \vc\ and \trc\ combine to bound welfare in equilibrium.
To formalize this discussion, we first isolate for each agent and realized value a single-agent \bo\ (Definition~\ref{def:bo}).
We will study the \vc\ of these \bos.
\begin{definition}
	Given equilibrium $\jointcdf$ and agent $\agent$ with value $\valagent$, the {\em conditional \bo}\ for $\agent$ is defined by $\agent$'s conditional bid distribution and conditional distribution of bid allocation rules. Formally, the distribution of $(\valothers,\bids)\sim \jointcdf\,|\,\valagent$ induces a joint distribution $\mechdist_i(\valagent)$ over bids $\bidagent$ for $\agent$ and over allocation rules $\bidallocagent(\cdot,\bidothers)$. The conditional \bo\ for $\agent$ is $(\valagent,\mechdist_\agent(\valagent))$.
\end{definition}

The \vc\ of the conditional \bos\ and the \rc\ of the mechanism together imply a robust welfare guarantee.

\begin{theorem}\label{thm:restowelf}
	Let $\mech$ be a winner-pays-bid mechanism with \rc\ $\revpar\leq 1$.
	Let $\jointcdf$ be an equilibrium for $\mech$ in which the agents' conditional \bos\ have \vc\ $\eta$. Then 
	the expected welfare in $\jointcdf$ is a $\revpar\res$-approximation to the optimal welfare.
\end{theorem}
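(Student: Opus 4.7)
The plan is to expand welfare as utilities plus revenue and then bound the two pieces separately, using \vc\ for the utilities and \rc\ for the revenue, both evaluated against the welfare-optimal allocation. Let $\allocopt$ be a welfare-maximizing (possibly randomized) ex post allocation, with interim allocations $\allocoptagent(\valagent)=\Ex[\valothers\mid\valagent]{\allocoptagent(\vals)}$. The key identity is the standard welfare decomposition
\begin{equation*}
  \WEL(\mech,\jointcdf)=\sum\nolimits_\agent\Ex[\valagent]{\util_\agent(\mechdist_\agent(\valagent))}+\rev(\mech,\jointcdf),
\end{equation*}
where $\util_\agent(\mechdist_\agent(\valagent))$ is the expected utility of agent $\agent$ in her conditional \bo\ at realized value $\valagent$.

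Next I would apply \vc\ of the conditional \bos\ pointwise at target allocation level $z=\allocoptagent(\valagent)\in[0,1]$ to obtain, for each $\agent$ and each $\valagent$,
\begin{equation*}
  \util_\agent(\mechdist_\agent(\valagent))+\cumuthreshagent{\agent}{\allocoptagent(\valagent)\mid\valagent}{}\geq\res\,\valagent\,\allocoptagent(\valagent).
\end{equation*}
Taking expectations over $\valagent\sim\valuecdf$ and summing over $\agent$, the right-hand side becomes $\res\sum_\agent\Ex[\valagent]{\valagent\allocoptagent(\valagent)}=\res\,\WEL(\OPTmech,\valuecdfs)$.

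The crucial step is then to multiply this inequality through by $\revpar\le 1$ (not by $1$) before invoking \trc. Because $\allocopt$ is feasible, Definition~\ref{def:exprc} gives
\begin{equation*}
  \rev(\mech,\valuecdfs,\bidcdfs)\geq\revpar\sum\nolimits_\agent\Ex[\valagent]{\cumuthreshagent{\agent}{\allocoptagent(\valagent)\mid\valagent}{}},
\end{equation*}
so the scaled threshold-surplus sum is bounded above by the mechanism's revenue. Substituting yields
\begin{equation*}
  \revpar\sum\nolimits_\agent\Ex[\valagent]{\util_\agent(\mechdist_\agent(\valagent))}+\rev(\mech,\jointcdf)\geq\revpar\res\,\WEL(\OPTmech,\valuecdfs).
\end{equation*}
Finally, since $\revpar\le 1$, the left-hand side is at most $\WEL(\mech,\jointcdf)$ by the welfare decomposition, which gives the claim.

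The one real subtlety to check is that \vc\ applies to the conditional \bo\ at the specific target level $\allocoptagent(\valagent)$: this is fine because Definition~\ref{def:vc} quantifies over all $z\in[0,1]$ and $\allocoptagent(\valagent)\in[0,1]$ holds by feasibility. The main thing to be careful about, and what makes the scaling by $\revpar$ necessary, is that \vc\ and \rc\ produce inequalities pointing in opposite directions on the threshold-surplus term; scaling the utility inequality by $\revpar$ aligns the coefficients so that \rc\ can be substituted in without reversing the chain. Once that is set up, the remaining manipulations are routine.
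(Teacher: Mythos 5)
Your proposal is correct and follows essentially the same argument as the paper's proof: apply individual efficiency pointwise at the optimal interim allocation level, sum and take expectations, use competitive efficiency (via feasibility of the optimal allocation) to replace the threshold-surplus term with revenue, and finish with $\revpar\leq 1$ and the welfare decomposition. The only cosmetic difference is that you scale the utility inequality by $\revpar$ whereas the paper writes the threshold surplus as at most $\tfrac{1}{\revpar}\rev(\mech,\valuecdfs,\bidcdfs)$; these are algebraically identical.
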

\begin{proof}
	For any value $\valagent$ for agent $\agent$, let $\allocoptagent(\valagent)$ denote agent $\agent$'s interim allocation probability under the welfare-optimal allocation rule, and let $\utilagent(\valagent)$ denote agent $\agent$'s interim expected utility in $\jointcdf$. The individual efficiency of $\agent$'s conditional bidding outcome implies:
		\begin{equation*}
		\utilagent(\valagent)+\cumuthreshagent{\agent}{\allocoptagent(\valagent)\,|\,\valagent}{}\geq  \res\valagent\allocoptagent(\valagent).
		\end{equation*}
		Summing over all agents and taking expectation over $\vals$ yields:
		\begin{equation*}
		\expect[\vals]{\sum\nolimits_{\agent}\utilagent(\valagent)}+\expect[\vals]{\sum\nolimits_{\agent}\cumuthreshagent{\agent}{\allocoptagent(\valagent)\,|\,\valagent}{}}\geq \res\expect[\vals]{\sum\nolimits_{\agent}v_i\allocoptagent(\valagent)}.
		\end{equation*}
		The righthand side is the optimal expected welfare. The second term on the left can be rewritten as $\expect[\vals]{\sum\nolimits_{\agent}\cumuthreshagent{\agent}{\allocoptagent(\valagent)\,|\,\valagent}{}}=\sum\nolimits_{\agent}\expect[\valagent]{\cumuthreshagent{\agent}{\allocoptagent(\valagent)\,|\,\valagent}{}}$. We may therefore apply \trc\ to obtain:
		\begin{equation*}
			\expect[\vals]{\sum\nolimits_{\agent}\utilagent(\valagent)}+\tfrac{1}{\revpar}\rev(\mech,\valuecdfs,\bidcdfs)\geq \res\WEL(\OPTmech,\valuecdfs).
		\end{equation*}
		The result then follows from noting that $\revpar\leq 1$ and that the welfare of $\mech$ is the sum of the expected utilities and revenue.
\end{proof}

Using the individual efficiency guarantees of Lemmas~\ref{lem:epsres} and~\ref{lem:brres}, we can instantiate Theorem~\ref{thm:restowelf}.
Different equilibrium concepts impose different constraints on the joint distribution of bids and values.
These assumptions dictate the relevant individual efficiency guarantee.
We first state the relevant notions of exact and approximate best response for multi-agent mechanisms.
We state the definition in full generality for mechanisms which may have actions that are more complex than single bids.

\begin{definition}
	\label{def:br}
	An equilibrium $\jointcdf$ satisfies 
	{\em $(1-\epsilon)$-best response} if for each agent $\agent$, value $\valagent$, and deviation action $\action$,
	\begin{equation}\label{eq:epsbr}
		\expect[\actions\,|\,\valagent]{\bidutilagent(\actions)}\geq(1-\epsilon) \expect[\actionothers\,|\,\valagent]{\bidutilagent(\action,\actionothers)}.
	\end{equation}
	If $\jointcdf$ satisfies (\ref{eq:epsbr}) with $\epsilon=0$, we say it satisfies {\em best response}.
\end{definition}
Most well-studied equilibrium concepts satisfy the best response condition of Definition~\ref{def:br}.
For example \citet{BM16,BBM-17} impose an information structure on the game, which they model with value-dependent signals for each agent.
They assume each agent best responds to their signal.
While \citet{BBM-17} allow interdependent values, we restrict to private values.
However, the set of private-value equilbria we allow is wider, and includes the coarse analog of their Bayes correlated equilibria.
Other concepts which satisfy Definition~\ref{def:br} are Bayes-Nash equilibrium and the communication equilibria of \citet{F86}.
Combining Theorem~\ref{thm:restowelf} with Lemma~\ref{lem:brres} yields the following guarantee for exact best response:

\begin{corollary}
	Any winner-pays-bid mechanism with \rc\ $\revpar\leq 1$ has robust welfare approximation $\revpar(e-1)/e$ for equilibria satisfying best response.
\end{corollary}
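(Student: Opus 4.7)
The plan is to directly chain the two ingredients already established: Lemma~\ref{lem:brres}, which gives an individual efficiency of $(e-1)/e$ for any single-agent bidding outcome under best response, and Theorem~\ref{thm:restowelf}, which converts a lower bound of $\res$ on conditional individual efficiencies into a robust welfare bound of $\revpar\res$. The entire content of the corollary is the composition of these two results, so the work lies in checking that best response at the equilibrium level (Definition~\ref{def:br} with $\epsilon=0$) descends to best response in each conditional bidding outcome (the single-agent notion from Section~\ref{sec:bnewelfare}).

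First I would fix an equilibrium $\jointcdf$ of the winner-pays-bid mechanism $\mech$, an agent $\agent$, and a realized value $\valagent$, and pass to the conditional bidding outcome $(\valagent,\mechdist_\agent(\valagent))$. By definition, $\mechdist_\agent(\valagent)$ is the joint distribution over $(\bidagent,\bidallocagent(\cdot,\bidothers))$ induced by drawing $(\valothers,\bids)\sim\jointcdf\,|\,\valagent$; index this distribution uniformly by $\info$ as required by Definition~\ref{def:bo}. The utility functional matches: $\util(\mechdist_\agent(\valagent)) = \expect[\info]{(\valagent-\bidagent^\info)\bidallocagent^\info(\bidagent^\info)} = \expect[\bids\,|\,\valagent]{\bidutilagent(\bids)}$, since bids are correlated with the random allocation rule through $\bidothers$. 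For any fixed deviation bid $\bid'$, best response of $\jointcdf$ with $\epsilon=0$ gives $\expect[\bids\,|\,\valagent]{\bidutilagent(\bids)}\geq\expect[\bidothers\,|\,\valagent]{(\valagent-\bid')\bidallocagent(\bid',\bidothers)}$, which in the $\mechdist_\agent(\valagent)$ notation is exactly $\util(\mechdist_\agent(\valagent))\geq\expect[\info]{(\valagent-\bid')\bidallocagent^\info(\bid')}$. Thus the conditional bidding outcome satisfies best response in the sense of Section~\ref{sec:bnewelfare}.

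Second, I would invoke Lemma~\ref{lem:brres}, which asserts that every single-agent bidding outcome satisfying best response has individual efficiency at least $(e-1)/e$. Applying this uniformly over agents $\agent$ and values $\valagent$, every conditional bidding outcome of $\jointcdf$ has individual efficiency at least $\res=(e-1)/e$. Then Theorem~\ref{thm:restowelf} applied to $\mech$, $\jointcdf$, and this value of $\res$ yields that the expected welfare in $\jointcdf$ is at least $\revpar(e-1)/e$ times the optimal welfare. Since this holds for every value distribution and every best-response equilibrium, it is precisely the robust welfare bound claimed.

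The argument has essentially no obstacle beyond the translation between the multi-agent best-response condition and the single-agent one; this is a direct consequence of how $\mechdist_\agent(\valagent)$ is defined (the randomness over opponents' bids is packaged inside the index $\info$), so the only thing to verify carefully is that a fixed deviation $\bid'$ in the single-agent bidding outcome corresponds to a constant deviation action in the multi-agent equilibrium, which it does because the allocation rule $\bidallocagent(\cdot,\bidothers)$ is evaluated at $\bid'$ pointwise in $\bidothers$.
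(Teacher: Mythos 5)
Your proposal is correct and follows the same route as the paper, which derives the corollary by combining Theorem~\ref{thm:restowelf} with Lemma~\ref{lem:brres}. The extra step you spell out --- that the multi-agent best-response condition of Definition~\ref{def:br} descends to single-agent best response in each conditional \bo\ because the opponents' randomness is packaged into the index $\info$ --- is exactly the translation the paper leaves implicit, and you verify it correctly.
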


Lemma~\ref{lem:epsres} yields the following improved bound in pure $\epsilon$-Nash equilibrium, where bids and value are deterministic and satisfy $(1-\epsilon)$-best response.

\begin{corollary}\label{cor:nash}
Any winner-pays-bid mechanism with deterministic allocation rule and \rc\ $\revpar\leq 1$ has robust welfare approximation $\revpar(1-\epsilon)$ in $\epsilon$-Nash equilibrium.
\end{corollary}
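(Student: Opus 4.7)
The plan is to derive Corollary~\ref{cor:nash} as a direct composition of Theorem~\ref{thm:restowelf} with the individual-efficiency bound for deterministic bidding outcomes under approximate best response. The key observation is that a pure $\epsilon$-Nash equilibrium of a deterministic mechanism induces, for every agent and realized value, a deterministic conditional bidding outcome in the sense of Definition~\ref{def:bodet}. Once this is in hand, we can invoke the $(1-\epsilon)$ individual efficiency bound and plug into Theorem~\ref{thm:restowelf}.

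First I would unpack what a pure $\epsilon$-Nash equilibrium $\jointcdf$ looks like in this setting. Values are drawn from some $\valuecdfs$, and each agent $\agent$ plays a pure strategy mapping $\valagent$ to a single bid $\bidagent$; moreover, $\jointcdf$ satisfies Definition~\ref{def:br} with parameter $\epsilon$. Fix any agent $\agent$ and any value $\valagent$ in its support. Conditional on $\valagent$, the opponents' bids $\bidothers$ are a (possibly correlated) deterministic function of $\valothers$, and because the mechanism's allocation rule $\bidalloc$ is deterministic, the induced interim rule $\bidallocagent(\cdot,\bidothers)$ is a step function jumping from $0$ to $1$ at some threshold $\threshold$. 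If values are independent this threshold is a single number; in general it is a random variable indexed by $\valothers$, but in either case every realization of the conditional bidding outcome $\mechdist_\agent(\valagent)$ is a deterministic \bo\ in the sense of Definition~\ref{def:bodet}.

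Next I would apply the individual efficiency analysis. For each realization $\info$ of $\valothers$, the single-agent \bo\ $(\valagent,\bidagent,\bidallocagent(\cdot,\bidothers))$ is deterministic, and the best response condition of Definition~\ref{def:br} asserts exactly the $(1-\epsilon)$-best response condition on the \emph{expected} utility across these realizations. By the deterministic case of the Corollary just preceding Section~\ref{sec:perf} (equivalently, Lemma~\ref{lem:epsreduction} applied to Lemma~\ref{lem:epsres}), the conditional \bo\ $(\valagent,\mechdist_\agent(\valagent))$ has individual efficiency at least $\res = (1-\epsilon)$.

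Finally I would invoke Theorem~\ref{thm:restowelf} with $\res = 1-\epsilon$ and the given \rc\ $\revpar \leq 1$ to conclude that the welfare in any pure $\epsilon$-Nash equilibrium is at least a $\revpar(1-\epsilon)$ fraction of the optimal welfare, which is exactly the desired bound. I do not anticipate a real obstacle beyond the routine verification that the conditional \bos\ qualify as deterministic; the substantive work has already been done in Lemma~\ref{lem:epsres}, Lemma~\ref{lem:epsreduction}, and Theorem~\ref{thm:restowelf}, and the corollary is essentially a bookkeeping combination of these three.
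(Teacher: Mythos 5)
Your route is the paper's route: a pure $\epsilon$-Nash equilibrium of a deterministic winner-pays-bid mechanism induces, for each agent, a deterministic conditional \bo\ satisfying $(1-\epsilon)$-best response; the corollary following Lemma~\ref{lem:epsreduction} (i.e., Lemma~\ref{lem:epsres} degraded by the approximation factor) gives \vc\ $(1-\epsilon)$; and Theorem~\ref{thm:restowelf} then yields the $\revpar(1-\epsilon)$ welfare bound. That composition is exactly how the paper obtains Corollary~\ref{cor:nash}.

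One correction to a side remark, though. You assert that when values are random the threshold becomes a random variable "but in either case every realization of the conditional bidding outcome $\mechdist_\agent(\valagent)$ is a deterministic \bo," and you then invoke the deterministic-\bo\ bound. This step would fail: Definition~\ref{def:bodet} requires the conditional joint distribution $\mechdist_\agent(\valagent)$ itself to be degenerate, not merely that each realization of the interim rule is a step function. A mixture over step functions is a randomized \bo, the $(1-\epsilon)$-best-response condition constrains only the expected utility across realizations, and the correct \vc\ for that case is the $(e-1)/e$ bound of Lemma~\ref{lem:brres} -- a loss that is genuinely incurred even under exact best response, as the example in Section~\ref{sec:randomrule} shows. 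The corollary is therefore to be read as the paper states it: pure $\epsilon$-Nash with deterministic values and bids, where every agent faces a single fixed threshold. In that setting your main argument is exactly right; the extension you sketch to stochastic values is not available at the $(1-\epsilon)$ rate.
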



\subsubsection{Revenue Analysis}
\label{sec:welfare-bnerev}

\Vc\ quantifies the way a buyer chooses to trade off their utility and the seller's utility (via \trc) against their value. This section extends these ideas to study the objective of seller revenue under suitable independence conditions on values and bids. With carefully chosen reserve prices, we will obtain a robust revenue approximation of $\revpar(e-1)/2e$ for winner-pays-bid mechanisms with \rc\ $\revpar$.
We consider equilibria satisfying the following independence condition, which we state in generality beyond single-bid mechanisms.
\begin{definition}\label{def:nocom}
	An equilibrium $\jointcdf$ satisfies {\em no bidder communication} if for every agent $\agent$ and value $\valagent$, $\actionagent$ is independent of $\valothers$ conditioned on $\valagent$.
\end{definition}
Definition~\ref{def:nocom} rules out equilibria where bidders communicate nontrivial information about their types to each other or a mediator, and hence eliminates many Bayes correlated equilibria \citep{BM16,BBM-17} and communication equilibria \citep{F86}. It is important to rule these out, as it is possible to construct low-revenue equilbria under these notions of correlation.
Definition~\ref{def:nocom} still permits some correlated equilibria, however, namely those with one-way communication from a mediator.
In particular, Definition~\ref{def:nocom} still permits coarse correlated equilibria of the agent-strategic or agent-normal forms of the Bayesian game, and hence learning outcomes under incomplete information \citep{HST15} as well as standard Bayes-Nash equilibrium.

Given an equilibrium $\jointcdf$ with independently distributed values and no bidder communication, the analysis of \citet{M81} implies that the ex ante expected payment of an agent $\agent$ is $\expect[\valagent]{\vvalagent(\valagent)\allocagent(\valagent)}$, where
$\vvalagent(\valagent) = \valagent - \frac{1-F_i(\valagent)}{\valuepdf(\valagent)}$ is the {\em virtual value} for value $\valagent$
and $F_i$ (resp.\ $f_i$) the cumulative distribution function (resp.\ probability density function) of $\agent$'s value distribution. It follows that $\rev(\mech,\jointcdf) =\expect[\vals,\bids]{\sum_\agent\vvalagent(\valagent)\bidallocagent(\bids)}$. We refer to an atomless distribution with $\vvalagent(\valagent)$ nondecreasing in $\valagent$ as {\em regular}. For regular distributions, the revenue-optimal mechanism chooses the allocation with the highest virtual surplus $\sum_\agent\vvalagent(\valagent)\allocagent$. For downward-closed settings, agents with negative virtual value are excluded via {\em monopoly reserves}, given by $\reserveagentm=\inf\{\valagent\,|\,\vvalagent(\valagent)\geq 0\}$.

The above characterization rewrites equilibrium revenue in terms of buyer surplus in a transformed value space. Revenue maximization then amounts to excluding agents with negative virtual values, and maximizing virtual welfare among those that remain. We will extend the \vc\ analysis of randomized allocation rules (Lemma~\ref{lem:brres}) to reason about an agent's contribution to revenue under monopoly reserves.

We focus on equilibria and corresponding \bos\ where reserves successfully exclude low-valued agents.
In particular, we rule out equilbria where agents overbid to beat the reserve, and those where high-valued agents could attain positive utility but bid below the reserve inappropriately.
Furthermore, we make the standard tie-breaking assumption that agents who are indifferent between bidding below the reserve and above it bid above.
These properties hold under many standard solution concepts, including Bayes Nash equilibrium, where they follow from standard best response arguments.
\begin{definition}
	An equilibrium $\jointcdf$ for mechanism $\mech$ with reserves $\reserves$ {\em respects reserves} if for all agents $\agent$, $\valagent\geq \reserveagent$ if and only if $\bidagent \geq \reserveagent$. Similarly, a single-agent \bo\ $(\val,\mechdist)$ respects reserve $\reserve$ if $\bid^\info\geq \reserve$ with probability $1$ if and only if $\val\geq \reserve$.
\end{definition}

When a \bo\ respects the monopoly reserve price $\reservem$, we can extend the analysis of \vc\ to virtual welfare.

\begin{lemma}\label{lem:vvalres}
	Let $\singlecdf$ be a regular value distribution with monopoly reserve $\reservesingm$.
	Further let $(\val,\bid^\info,\bidallocation^\info)$ be a single-agent \bo\ that respects $\reservem$.
	If $(\val,\bid^\info,\bidallocation^\info)$ further  satisfies best response, then for any target allocation $z\in[0,1]$,
	\begin{equation}\label{eq:vvalres}
	\mathbb E_{\info}[\vval(\val)\bidallocation^\info(\bid^\info)]+\cumuthreshagent{}{z}{\reservesingm}\geq \tfrac{e-1}{e}\vval(\val)z.
	\end{equation}
\end{lemma}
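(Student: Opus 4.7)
The plan is to reduce the statement to Lemma~\ref{lem:brres} (best-response individual efficiency) by translating from buyer utility to Myerson virtual welfare, using that the \bo\ respects $\reservesingm$. I first introduce shorthand: let $\bar\allocation := \expect[\info]{\bidallocation^\info(\bid^\info)}$ denote the agent's expected allocation, $\pay := \expect[\info]{\bid^\info\bidallocation^\info(\bid^\info)}$ their expected payment (so $\util(\mechdist) = \val\bar\allocation - \pay$), $\delta := \val - \vval(\val)$ the Myerson gap, and $\alpha_0 := \bidallocation(\reservesingm)$ the expected allocation at the reserve.

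I dispose of the case $\val < \reservesingm$ first: here $\vval(\val) < 0$ by definition of the monopoly reserve, and respecting reserves together with the fact that bids below $\reservesingm$ are zeroed out by the reserve mechanism force $\bidallocation^\info(\bid^\info) = 0$ almost surely; the left-hand side of \eqref{eq:vvalres} is then nonnegative while the right-hand side is nonpositive.

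For $\val \geq \reservesingm$ (so $\vval(\val) \geq 0$), I split on $\bar\allocation$. If $\bar\allocation \geq \tfrac{e-1}{e}z$ then $\vval(\val)\bar\allocation \geq \tfrac{e-1}{e}\vval(\val)z$ and adding $\cumuthreshagent{}{z}{\reservesingm} \geq 0$ suffices. Otherwise I invoke Lemma~\ref{lem:brres}, applicable because the \bo\ satisfies best response, to obtain $\util(\mechdist) + \cumuthresh(z) \geq \tfrac{e-1}{e}\val z$. Since the reserve mechanism makes $\bidallocation(b) = 0$ for $b < \reservesingm$, the threshold function satisfies $\interthresh{}{\alpha} = \reservesingm$ on $(0,\alpha_0]$, whence $\cumuthresh(z) = \cumuthreshagent{}{z}{\reservesingm} + \reservesingm\alpha_0$. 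Rewriting $\vval(\val)\bar\allocation = \util(\mechdist) + \pay - \delta\bar\allocation$ and substituting, and using $\val z = \vval(\val)z + \delta z$, the target \eqref{eq:vvalres} reduces to
\begin{equation*}
\tfrac{e-1}{e}\delta z + \pay - \delta\bar\allocation - \reservesingm\alpha_0 \geq 0.
\end{equation*}
Respecting reserves gives $\bid^\info \geq \reservesingm$ almost surely and hence $\pay \geq \reservesingm\bar\allocation$; substituting transforms this into $\delta[\tfrac{e-1}{e}z - \bar\allocation] + \reservesingm[\bar\allocation - \alpha_0] \geq 0$. The first bracket is positive by the subcase hypothesis, $\delta \geq 0$, $\reservesingm \geq 0$, and $\bar\allocation \geq \alpha_0$ follows from monotonicity of $\bidallocation^\info$ together with $\bid^\info \geq \reservesingm$; both summands are thus nonnegative.

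The main obstacle is the algebraic bookkeeping: the reserve charge $\reservesingm\alpha_0$ incurred in passing from $\cumuthresh(z)$ to $\cumuthreshagent{}{z}{\reservesingm}$ must be absorbed jointly by the revenue surplus $\pay - \reservesingm\bar\allocation$ and the Myerson correction $\tfrac{e-1}{e}\delta z$, and the case split on $\bar\allocation$ is precisely what forces both contributions to be nonnegative simultaneously; in particular, without the ``respects reserves'' hypothesis one could not conclude $\pay \geq \reservesingm\bar\allocation$, and without regularity one could not conclude $\delta \geq 0$.
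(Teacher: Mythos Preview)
Your approach is essentially the paper's: both invoke Lemma~\ref{lem:brres}, use the payment lower bound coming from respecting reserves to bridge $T(z)$ and $T^{r^*}(z)$, and then pass from $v$ to $\phi(v)$. Your case split on $\bar x$ versus $\tfrac{e-1}{e}z$ makes explicit what the paper compresses into the remark that \eqref{eq:vvalres} is a weakening of \eqref{eq:valcoverres}; that remark is justified because the map $w \mapsto w\bar x + T^{r^*}(z) - \tfrac{e-1}{e}wz$ is affine in $w$, nonnegative at $w=0$ and at $w=v$, and $0 \leq \phi(v) \leq v$.

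One imprecision worth fixing: you assert $T(z) = T^{r^*}(z) + r^*\alpha_0$ on the grounds that ``the reserve mechanism makes $\tilde x(b)=0$ for $b<r^*$.'' The lemma's hypotheses do not include that; the allocation rules $\tilde x^\theta$ are unrestricted, and ``respects $r^*$'' constrains only the bid. What does hold is $t(\alpha)\leq r^*$ for all $\alpha\leq\alpha_0$ (since bidding $r^*$ already secures allocation $\alpha_0$), giving $T(z)-T^{r^*}(z)\leq r^*\alpha_0$, i.e.\ $T^{r^*}(z)\geq T(z)-r^*\alpha_0$. This inequality points the right direction for your chain, so your reduction to $\tfrac{e-1}{e}\delta z + p - \delta\bar x - r^*\alpha_0 \geq 0$ survives with ``$=$'' replaced by ``$\geq$'' at that step; the paper's inequality \eqref{eq:paymentvsthresh} records exactly this bound.
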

\begin{proof}
	The argument is illustrated geometrically in Figure~\ref{fig:covering}.
	Since $F$ is regular, $\vval(\val)<0$ if and only if $\val<\reservem$, in which case the agent bids below $\reservem$ almost surely. For such agents, (\ref{eq:vvalres}) holds trivially.
	For an agent with value $\val\geq\reservesingm$, bids are at least $\reservesingm$. Hence given  $\bid(\val)$, we have $\mathbb E_\info[\bidpay(\bid^\info)]\geq \reservesingm\bidallocation(\reservesingm)$. Furthermore, if $\allocation\leq \bidallocation(\reservesingm)$, then $\interthresh{}{\allocation}{}\leq \reservesingm$. It follows that $\int_0^{\bidallocation(\reservesingm)}\interthresh{}{\allocation}{}\,d\allocation\leq \reservesingm\bidallocation(\reservesingm)$. We therefore obtain the following sequence of inequalities:
\begin{equation}\label{eq:paymentvsthresh}
	\mathbb E_\info[\bidpay(\bid^\info)]\geq \reservesingm\bidallocation(\reservesingm)\geq \int_0^{\bidallocation(\reservesingm)}\interthresh{}{\allocation}{}\,d\allocation\geq \cumuthreshagent{}{z}{}-\cumuthreshagent{}{z}{\reservesingm}.
\end{equation}
This inequality is illustrated in Figure~\ref{fig:welfareboundgeo}.
	Combining with Lemma~\ref{lem:brres}, we obtain a tradeoff between an agent's contribution to surplus and their threshold surplus:
	\begin{equation}\label{eq:valcoverres}
	\mathbb E_{\info}[\val\bidallocation^\info(\bid^\info)]+\cumuthreshagent{}{z}{\reservesingm}= \util(\mechdist)+\mathbb E_\info[\bidpay(\bid^\info)]+\cumuthreshagent{}{z}{\reservesingm}\geq  \util(\mechdist)+\cumuthreshagent{}{z}{}\geq \tfrac{e-1}{e}\val z.
	\end{equation}
	Inequality (\ref{eq:valcoverres}) can be seen in Figure~\ref{fig:welfareboundgeo} as well.
	The lemma then follows from noting that $\vval(\val)\leq \val$, and hence when $\vval(\val)\geq 0$, the inequality (\ref{eq:vvalres}) is a weakening of (\ref{eq:valcoverres}). The resulting inequality (\ref{eq:vvalres}) is illustrated geometrically in Figure~\ref{fig:revenueboundgeo}.
\end{proof}

\begin{figure}[t]
	\small
	\centering
	\subfloat[][Equations (\ref{eq:paymentvsthresh}) and (\ref{eq:valcoverres}), illustrated for a deterministic bid $\bid^\info=\bid$. Payments $\bidpay(\bid)=\bid\bidallocation(\bid)$ (thick gray rectangle) cover the threshold surplus from the reserve, given by  $\cumuthreshagent{}{z}{}-\cumuthreshagent{}{z}{\reservesingm}$ (dotted box). Combining this with individual efficiency of best response, we see that the shaded areas and $\bidpay(\bid)$ together cover a $(e-1)/e$ fraction of dashed box, of area $\val z$.  \label{fig:welfareboundgeo} ]{
		\begin{tikzpicture}[xscale=3.5, yscale=3.5, domain=0:0.9, smooth]
			\draw[-] (0,0) -- (0,.7) node[left] {$z$};
			\draw[-] (0,0) -- (0,1) node[left] {$1$};
			\draw  (1.12,1) -- (.91,0.7) .. controls ( 0.68,0.4) and ( 0.6,0.32) .. (0.4,0.2) --(.4,0);
			\draw (0,.2) -- (.4,.2);
			\node at (1.2,0.9) {\footnotesize $\bidallocation(\dev)$};
			\draw[-] (0,0) -- (1.4,0) node[below] {Bid ($\dev$)};
			\draw[line width=1.3pt, dashed] (0,0) rectangle (1,.7);

			\node at (1, -0.09) {\footnotesize $\val$};
			\draw[pattern=north east lines, pattern color=lightgray] ( 0.53,0) rectangle ( 1,0.29);

			\node [fill=white!50] at (0.76,0.15){\footnotesize $\bidutil(\bid)$};
			\draw[ pattern=north west lines, pattern color=lightgray] (0,.2) -- (0,.7) -- (1.12, .7) -- (.91,0.7) .. controls ( 0.68,0.4) and ( 0.6,0.32) .. (0.4,0.2) -- (0,0.2) ;\draw[pattern=dots, pattern color=lightgray] ( 0,0) rectangle ( .4,0.2);
			\node [fill=white] at ( 0.35,0.46) {\footnotesize $ \cumuthresh(z)$};
x
	\node at (-0.17,0.31) {\footnotesize $\bidallocation(\bid)$};
	\node at (-0.16,0.20) {\footnotesize $\bidallocation(\reservesingm)$};
						\draw[line width=0.6mm, color = lightgray] (0,0) rectangle (.53,.29);
			\draw[] (-0.02, 0.29) -- (0, 0.29);  
			\draw[] (-0.02, 0.2) -- (0, 0.2);    
			\node at (0.5,-0.09) {\footnotesize $\bid$};
			\draw[] (1, 0) -- (1, -0.025);
			\draw[] (0.5, 0) -- (0.5, -0.025);
			\node at (0.7,1.1) {Bid Allocation Rule};
		\end{tikzpicture}
	}\IFECELSE{\hspace*{0.2cm}}{\hspace*{1cm}}
	\subfloat[][Equation (\ref{eq:vvalres}), illustrated for a deterministic bid $\bid^\info=\bid$. Lemma \ref{lem:vvalres} shows the shaded areas cover an $(e-1)/e$ fraction of the dashed box, which has area $\vval(\val)z$. Note that $\cumuthreshagent{}{z}{\reservesingm}$ omits the area below $\bidallocation(\reservesingm)$, and that the boxes for $\bidutil(\bid)$ and $\vval(\val)\bidallocation(\bid)$ have the same height. \label{fig:revenueboundgeo}]{
		\begin{tikzpicture}[xscale=3.5, yscale=3.5, domain=0:0.9, smooth]

			\draw[] (-0.02, 0.2) -- (0, 0.2);
			\draw[ pattern=north west lines, pattern color=lightgray] (0,.1) -- (0,.7) -- (1.12, .7) -- (.91,0.7) .. controls ( 0.68,0.4) and ( 0.6,0.32) .. (0.4,0.2) --(0,0.2);
			\node [fill=white] at ( 0.35,0.46) {\footnotesize $ \cumuthreshagent{}{z}{\reservesingm}$};
			\draw  (1.12,1) -- (.91,0.7) .. controls ( 0.68,0.4) and ( 0.6,0.32) .. (0.4,0.2) --(0.0,.2);
			\node at (1.2,0.9) {\footnotesize $\bidallocation(\dev)$};
			\draw[-] (0,0) -- (0,.7) node[left] {$z$};
			\draw[-] (0,0) -- (0,1) node[left] {$1$};
			\draw[-] (0,0) -- (1.4,0) node[below] {Bid ($\dev$)};
			\draw[] (1, 0) -- (1, -0.025);      
			\node at (1, -0.09) {\footnotesize $\val$};
			\draw[] (0.75, 0) -- (0.75, -0.025);        
			\node at (0.75, -0.09) {\footnotesize $\vval(\val)$};
			\draw[pattern=north east lines, pattern color=lightgray] 
			( 0,0) rectangle ( 0.75, 0.29);
			\node [fill=white!50] at (1.1,0.18){\footnotesize $\vval(\val)\bidallocation(\bid)$};
			\draw[] (.5,.14) -- (.85,.18);
			\node at (-0.17,0.31) {\footnotesize $\bidallocation(\bid)$};
			\node at (-0.16,0.20) {\footnotesize $\bidallocation(\reservesingm)$};  
			\draw[] (-0.02, 0.29) -- (0, 0.29); 
			
			\node at (0.5,-0.09) {\footnotesize $\bid(\val)$};
			\draw[] (0.5, 0) -- (0.5, -0.025);
			\draw[line width=1.5pt, dotted] (0,0) rectangle (0.75,.7);
			\node at (0.9,1.1) {Bid Allocation Rule};
		\end{tikzpicture}
	}   
	\caption{ \label{fig:covering}}
\end{figure}

We may apply Lemma~\ref{lem:vvalres} to obtain a revenue guarantee.

\begin{theorem}\label{thm:bnerev}
	Let $\jointcdf$ be a no-bidder-communication equilibrium of a winner-pays-bid mechanism $\mech$ with monopoly reserves $\reservesm$. Assume values are independently distributed according to regular distributions, and that $\jointcdf$ satisfies best response and respects $\reservesm$. Then the expected revenue is a $\revpar(e-1)/2e$-approximation to that of the optimal mechanism.
\end{theorem}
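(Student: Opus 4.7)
The plan is to mimic the welfare proof (Theorem~\ref{thm:restowelf}) but in virtual-value space, replacing Lemma~\ref{lem:brres} with its revenue analog Lemma~\ref{lem:vvalres} and using Lemma~\ref{lem:dettoexp2} to import \trc\ into the setting with reserves. The starting point is Myerson's revenue identity: under independent regular values, the no-bidder-communication condition together with best response makes the induced direct mechanism $(\tilde\allocagent(\valagent),\tilde\paymentagent(\valagent))$ Bayesian incentive compatible, where $\tilde\allocagent(\valagent)=\expect[\bids\mid\valagent]{\bidallocagent(\bids)}$ and similarly for payments; since the equilibrium respects $\reservem$, a type-$0$ agent gets zero utility, so $\rev(\mech,\jointcdf)=\sum_\agent\expect[\valagent]{\vvalagent(\valagent)\tilde\allocagent(\valagent)}$. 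The revenue-optimal mechanism $\OPTmech_\valuecdfs$ likewise satisfies $\rev(\OPTmech_\valuecdfs,\valuecdfs)=\sum_\agent\expect[\valagent]{\vvalagent(\valagent)\allocoptagent(\valagent)}$, where $\allocopt$ maximizes virtual welfare subject to feasibility.

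The next step is to pointwise bound virtual welfare. For each agent $\agent$ and each value $\valagent$, the conditional \bo\ is a single-agent \bo\ that respects $\reservem$ and satisfies best response, so Lemma~\ref{lem:vvalres} applied with target allocation $z=\allocoptagent(\valagent)$ yields
\begin{equation*}
\expect[\bids\mid\valagent]{\vvalagent(\valagent)\bidallocagent(\bids)}+\cumuthreshagent{\agent}{\allocoptagent(\valagent)\mid\valagent}{\reservem}\geq \tfrac{e-1}{e}\vvalagent(\valagent)\allocoptagent(\valagent).
\end{equation*}
Summing over $\agent$ and taking expectation over $\vals\sim\valuecdfs$, the left-hand virtual welfare term becomes $\rev(\mech,\jointcdf)$ by Myerson, while the right-hand side becomes $\tfrac{e-1}{e}\rev(\OPTmech_\valuecdfs,\valuecdfs)$. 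Thus
\begin{equation*}
\rev(\mech,\jointcdf)+\sum\nolimits_\agent\expect[\valagent]{\cumuthreshagent{\agent}{\allocoptagent(\valagent)\mid\valagent}{\reservem}}\geq \tfrac{e-1}{e}\rev(\OPTmech_\valuecdfs,\valuecdfs).
\end{equation*}

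The remaining step is to convert the threshold-surplus term into revenue. By Lemma~\ref{lem:dettoexp2}, $\mech$ endowed with reserves $\reservesm$ still has \rc\ at least $\revpar$, and since $\allocopt$ is induced by a feasible ex-post allocation, $(\allocoptagent(\cdot))_\agent$ is a feasible profile of interim allocation functions. Applying Definition~\ref{def:exprc} to $\mech$ with reserves $\reservesm$ and target $\allocopt$ gives $\sum_\agent\expect[\valagent]{\cumuthreshagent{\agent}{\allocoptagent(\valagent)\mid\valagent}{\reservem}}\leq \rev(\mech,\jointcdf)/\revpar$. Substituting and using $\revpar\leq 1$ to obtain $\revpar/(1+\revpar)\geq \revpar/2$ yields
\begin{equation*}
\rev(\mech,\jointcdf)\InParentheses{1+\tfrac{1}{\revpar}}\geq \tfrac{e-1}{e}\rev(\OPTmech_\valuecdfs,\valuecdfs)\;\Rightarrow\;\rev(\mech,\jointcdf)\geq \tfrac{\revpar(e-1)}{2e}\rev(\OPTmech_\valuecdfs,\valuecdfs),
\end{equation*}
as desired.

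The main obstacle is the Myerson step: one must verify that the equilibrium induces a BIC direct revelation mechanism so that the virtual-welfare identity holds unconditionally. This relies on the no-bidder-communication assumption (to make $\tilde\allocagent$ and $\tilde\paymentagent$ depend only on $\valagent$), on best response against all fixed deviations (which by linearity extends to mixed deviations, in particular to mimicking the bid distribution of any other type), and on the respects-reserves assumption combined with winner-pays-bid to eliminate the constant of integration at $\valagent=0$. Once this identity is in hand, the rest of the argument is a direct integration of Lemma~\ref{lem:vvalres} followed by a single application of \trc\ with reserves.
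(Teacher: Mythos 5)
Your proposal is correct and follows essentially the same route as the paper's proof: apply Lemma~\ref{lem:vvalres} pointwise at the optimal interim allocation, sum and take expectations, convert threshold surplus to revenue via \trc\ with reserves, and close with Myerson's virtual-surplus identity. The only differences are cosmetic --- you spell out the BIC justification for the revenue identity and the invocation of Lemma~\ref{lem:dettoexp2} (which the paper treats as background), and your final algebra gives the marginally tighter constant $\revpar/(1+\revpar)$ before relaxing to $\revpar/2$.
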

\begin{proof}
	Fix a value $\valagent$ for agent $\agent$. If $\valagent\geq\reserveagentm$, Lemma~\ref{lem:vvalres} implies
	\begin{equation}\label{eq:vvalcoveropt}
	\vvalagent(\valagent)\allocagent(\valagent)+\cumuthreshagent{\agent}{\allocoptagent(\valagent)}{\reserveagentm}\geq \tfrac{e-1}{e}\vvalagent(\valagent) \allocoptagent(\valagent),
	\end{equation}
	where $\allocoptagent(\valagent)$ denotes the interim allocation probability for $\agent$ under allocation rule of the revenue-optimal mechanism $\OPTmech_\valuecdfs$ for $\valuecdfs$, and $\allocagent(\valagent)$ denotes their interim allocation probability in equilibrium.
	We omit conditioning from $\cumuthreshagent{\agent}{\allocoptagent(\valagent)}{\reserveagentm}$, as $\valuecdfs$ is a product distribution. We may sum (\ref{eq:vvalcoveropt}) over all agents and take expectations to obtain
	\begin{equation*}
	\expect[\vals]{\sum\nolimits_{\agent}\vvalagent(\valagent)\allocagent(\valagent)}+\expect[\vals]{\sum\nolimits_{\agent}\cumuthreshagent{\agent}{\allocoptagent(\valagent)}{\reserveagentm}}\geq \tfrac{e-1}{e}\expect[\vals]{\sum\nolimits_{\agent}\vvalagent(\valagent) \allocoptagent(\valagent)}.
	\end{equation*}
	Applying \trc\ and noting that $\revpar\geq 1$ yields:
	\begin{equation*}
	\tfrac{1}{\revpar}\Big(\expect[\vals]{\sum\nolimits_{\agent}\vvalagent(\valagent)\allocagent(\valagent)}+\rev(\mech^{\reservesm},\jointcdf)\Big)\geq \tfrac{e-1}{e}\expect[\vals]{\sum\nolimits_{\agent}\vvalagent(\valagent) \allocoptagent(\vals)}.
	\end{equation*}
	Since a mechanism's expected revenue is equal to its expected virtual surplus, we obtain the desired revenue guarantee:
	\begin{equation*}
	2\rev(\mech^{\reservesm},\jointcdf)\geq \revpar\tfrac{e-1}{e}\rev(\OPTmech_\valuecdfs,\valuecdfs).\qedhere
	\end{equation*}
\end{proof}
\subsection{Lower Bounds}
\label{sec:welfare-lbs}
\label{sec:lb}

To conclude the section, we explore the limits of our approach. We do so by exhibiting three sets of examples. In Section~\ref{sec:singlelb}, we describe the worst-known examples for welfare and revenue loss in the single-item first-price auction in Bayes-Nash equilibrium. Section~\ref{sec:randomrule} gives a Nash equilibrium of a mechanism with randomized allocations, \rc\ $1$, and a factor of $(e-1)/e$ welfare loss. This shows that the restriction to deterministic allocation rules in our analysis of $\epsilon$-Nash equilibrium was necessary. Finally, \citet{DK15} study the extent to which low \rc\ is a necessary condition for a good robust welfare guarantee. We briefly discuss their results in Section~\ref{sec:welfare-nec}.

\subsubsection{Single-Item Lower Bounds}
\label{sec:singlelb}

The \rc\ and individual efficiency provide an general framework for robust welfare and revenue analysis. Whether the resulting welfare guarantees are best possible will vary across mechanisms and solution concepts. What follows are three equilibria of the single-item first-price auction which are the worst known for their type. We begin with a Bayes-Nash equilibrium with correlated values which matches Theorem~\ref{thm:restowelf} exactly, due to \citet{S14}.

\begin{example}\label{ex:cor}
	The  example has three agents. Agents $1$ and $2$ have values perfectly correlated, and drawn according to the distribution with CDF $(e(1-v))^{-1}$ for $v\in[0,1-1/e]$. Agent $3$ has value deterministically $1$. If we break ties in favor of agent $3$, it is a Bayes-Nash equilibrium for agents $1$ and $2$ to bid their value and agent $3$ to bid $0$. In the optimal allocation, agent $3$ always wins. A straightforward calculation shows the equilibrium welfare to be $(e-1)/e$.
\end{example}

Example~\ref{ex:cor} provides a canonical setting in which Theorem~\ref{thm:restowelf}.
When values are independently distributed, however, \citet{JL22} exploit independence to improve the welfare approximation guarantee from $(e-1)/e \approx .63$ to $(e^2-1)/e^2\approx.865$. Furthermore, they exhibit an example where this bound is tight. As in Example~\ref{ex:cor}, the tight example has a single high-valued agent, who should always receive the item in the welfare-optimal allocation, but underbids in equilibrium.
Without correlation to coordinate the bid distribution of the low-valued competition, however, \citet{JL22} instead use a large population of identical, independent agents to implement a competing distribution for the high-valued agent.

Finally, we give the worst-known example for revenue in Bayes-Nash equilibrium with monopoly reserves, with independent, regularly-distributed values. As with welfare under independence, there is a gap between the example and the robust guarantee of Theorem~\ref{thm:bnerev}.

\begin{example}
	The equilibrium will have two agents. The first agent will have value deterministically $1$, and the second will have value CDF $F_2(\val_2)=1-1/\val_2$. The monopoly reserve for agent $1$ is trivially $1$. For agent $2$, any price above $1$ has virtual value $0$, so the monopoly reserve is ambiguous. Perturb $F_2$ slightly such that the monopoly reserve is $1$. In equilibrium, agent $1$ bids their value, $1$. If we break ties in favor of agent $2$, then agent $2$ also bids $1$, yielding an expected revenue of $1$. This obtains less revenue than posting a price of $H>1$ to agent $2$, and upon rejection selling to agent $1$, which has expected revenue $2-1/H$. Taking $H\rightarrow\infty$ yields the desired multiplicative loss of $1/2$.
\end{example}

\subsubsection{Welfare Loss Under Randomized Mechanisms}
\label{sec:randomrule}


Section~\ref{sec:welfare-epsne} gave an improved \vc\ guarantee for deterministic allocation rules when compared to randomized rules. This translated to an improved welfare guarantee. We now demonstrate that the restriction to deterministic rules is necessary for this improvement. We do so by giving a mechanism with \rc\ $1$, randomized allocations, and Nash equilibrium with multiplicative welfare loss $(e-1)/e$. 

\begin{example}
Define a {\em partial allocation environment} to be an $\numagents$-agent feasibility environment given by a vector $(z_1,\ldots, z_{\numagents})$ of maximum allocations. A feasible allocation selects an agent $\agent$ and assigns them up to $z_{\agent}$ units of allocation. Given a value profile $\vals$, the welfare-optimal allocation in a partial allocation environment selects the agent maximizing $\valagent z_{\agent}$. The highest bids win allocation rule takes a profile of bids $\bids$ and allocates the agent maximizing $\bidagent z_{\agent}$. This mechanism has \rc\ $1$. 

Now consider the following convex combination of partial allocation environments. Draw a parameter $\theta$ according to $G(\theta)=(e(1-\theta))^{-1}$, and consider the three-agent partial allocation environment with maximum allocations $(\theta,\theta,1)$. By Lemma~\ref{lem:convex-combination} \trc\ is closed under convex combination, so the winner-pays-bid, highest-bids-wins mechanism for this environment has \rc\ $1$. We will now construct a Nash equilibrium for this mechanism with welfare approximation $e/(e-1)$. Let values be $(1,1,1)$. If we allocate to break ties to favor agent $3$, then it is a Nash equilibrium for agents $1$ and $2$ to bid $1$, and agent $3$ to bid $0$. Equilibrium welfare is therefore the expected value of $\theta$, which is $(e-1)/e$, rather than the $1$ that could be achieved by allocating agent $3$.
\end{example}
\subsubsection{Necessity of High \RC}
\label{sec:welfare-nec}

We have shown that high \rc\ is a sufficient condition for robust performance guarantees. We now briefly examine whether it is also a necessary condition. In other words, we study the extent to which a mechanism with poor \rc\ must also possess equilibria with welfare far from optimal. A formal treatment of this question appears in \citet{DK15}. For brevity, we simply overview the main ideas.

In Section~\ref{sec:rc-hbw}, we exhibited the highest-bids-win rule for single-minded combinatorial auctions as a mechanism with poor \rc. We gave a bundle structure with $\numitems$ items and an agent $\agent$ desiring item $\agent$ for each $\agent\in\{1,\ldots,\numitems\}$, along with an agent $\numitems+1$ who desires the grand bundle. The bid profile $(0,\ldots, 0,1)$ then refuted any \rc\ better than $\numitems$. Note, however, that under the winner-pays-bid format, this bid profile is not a Nash equilibrium for any positive value of agent $\numitems+1$, as this agent would always prefer to lower their bid. If we allow ourselves to augment the setting with an additional bidder, however, we may produce an equilibrium with welfare approximation $\numitems$. Specifically, add a second grand bundle bidder, $\numitems+2$. The value profile $(1,\ldots, 1)$ and bid profile $(0,\ldots, 0, 1, 1)$ is a Nash equilibrium for any tiebreaking: the only unilateral deviations in which bidders $1,\ldots, \numitems$ can win involve bidding at least their value, and if bidders $\numitems+1$ or $\numitems+2$ bid less than $1$, they are guaranteed to lose.

The approach of duplicating bidders can be extended to most winner-pays-bid mechanisms of interest for single-minded combinatorial auctions. In particular, consider any winner-pays bid mechanism $\mech$ which, given two or more bidders with identical desired bundles, always allocates the one with the highest bid. Let bid profile $\bids$ and alternate allocation $\allocalt$ exhibit a \rc\ at most $\revpar$. For each winner $\agent$ under $\bids$ in $\mech$, add a duplicate $\agent'$ desiring the same bundle. The following value and bid profiles are then a Nash equilibrium with welfare approximation $\revpar$ in the augmented setting: for each winner $\agent$ under $\bids$, give $\agent$ and $\agent'$ both value and equilibrium bid equal to $\bidagent$. For each agent $i$ winning under $\allocalt$ but not winning under $\bids$, give that agent value $\thresholdagent(\bidothers)$ and bid $\bidagent$. Give all other agents value and bid $0$. Under appropriate tiebreaking, this is an equilibrium for the same reasons as in the previous example: winners (and their duplicates) under $\bidagent$ cannot reduce their bids without losing, and losers under $\bids$ cannot win without overbidding. Moreover, the welfare approximation is equal to the \rc\ exhibited by $\bids$ and $\allocalt$, $\revpar$.

\citet{DK15} formalize the above approach for arbitrary single-parameter environments. For any setting which can be augmented with duplicates, and any winner-pays-bid mechanism which handles duplicates sensibly, any instance with \rc\ $\revpar$ implies the existence of a related instance and equilibrium for that related instance with welfare approximation $\revpar$. For single-minded combinatorial auctions in particular, this implies that up to a constant factor, the greedy mechanism discussed in Section~\ref{sec:rc-greedy} is the optimal winner-pays-bid mechanism, with respect to the objective of robust welfare approximation. It also pinpoints the highest-bids-wins mechanism's mismanagement of inter-bidder competition as the source of its worst-case inefficiency. More generally, it further justifies the study of \trc\ as a design objective in itself.
\section{Beyond Winner-Pays-Bid Mechanisms}
\label{sec:beyondpyb}

Thus far, we have considered only winner-pays bid mechanisms, where competition is measured by threshold bids.
In this section, we generalize \trc\ to other measures of competition and families of mechanisms.
We provide two applications.
First in Section~\ref{sec:simultaneous} we show that \trc\ is preserved in the sale of identical items via simultaneous auctions to unit-demand agents.
Agents may participate in any number of auctions, and are served if they win at least one. 
If all component mechanisms have \rc\ $\revpar$, we show that their simultaneous composition must as well.
Then in Section~\ref{sec:allpay}, we analyze all-pay auctions.
For both applications, we will only consider product distributions over values; along the way, we will give a corresponding, adapted definition of \trc.

\subsection{Generalized Framework}

The robust analysis of winner-pays-bid mechanisms consisted of three steps.
First, we quantified an agent's competition.
Second, \trc\ measured how well a mechanism managed this competition across agents, independent of incentives. 
Finally, \vc\ related this competition to each agent's bidding problem.
In general single-parameter mechanisms, each agent $\agent$ takes an action $\actionagent$, which may be a richer object than a real-valued bid, and payments may not follow the winner-pays-bid format.

In winner-pays-bid mechanisms, we measured competition with threshold  bids: when a high bid was necessary to secure allocation, competition was stronger.
An agent's interim allocation rule traced out their Pareto frontier between bids and allocation probabilities.
For general mechanisms, we will consider the frontier between allocation and a real-valued cost function $\equivbidact{\actionagent\,|\,\valagent}$, which will be chosen based on the family of mechanisms being considered.
We measure competition by the cost necessary to secure different allocations:

\begin{definition}\label{def:equivthresh}
	Given joint distribution $\jointcdf$ over actions and values, value $\valagent$ for agent $\agent$, target allocation probability $\allocation\in[0,1]$, and cost function $\compagent$, agent $\agent$'s \emph{interim threshold cost} is given by $\equivthresh{\agent}{\allocation\,|\,\valagent}=\inf_{\actionagent\,:\,\bidallocagent(\actionagent\,|\,\valagent)\geq \allocation} \equivbidact{\actionagent\,|\,\valagent}$. 
\end{definition}

For winner-pays-bid mechanisms, a natural cost function for a bid $\bidagent$ is the bid itself, i.e.\ $\equivbidact{\bidagent\,|\,\valagent}=\bidagent$.
For general mechanisms, an analogue is the {\em price per unit of allocation}.
Define for action $\actionagent$ the interim allocation $\bidallocagent(\actionagent\,|\,\valagent)=\mathbb E_{\actionothers\,|\,\valagent}[\bidallocagent(\actionagent,\actionothers\,|\,\valagent)]$ and interim payment $\bidpaymentagent(\actionagent\,|\,\valagent)=\mathbb E_{\actionothers\,|\,\valagent}[\bidpaymentagent(\actionagent,\actionothers\,|\,\valagent)]$.
These yield a cost function $\equivbidagent^{\text{PPU}}(\actionagent\,|\,\valagent)$, defined below, which generally depends on $\jointcdf$, and which generalizes the winner-pays-bid analysis. In winner-pays-bid mechanisms, it will hold that $\equivbidagent^{\text{PPU}}(\bidagent\,|\,\valagent)=\bidagent$.
\begin{definition}
	Given joint distribution $\jointcdf$ over actions and values, the {\em price per unit} for  action $\actionagent$, denoted $\equivbidagent^{\text{PPU}}(\actionagent\,|\,\valagent)$, is given by $\equivbidagent^{\text{PPU}}(\actionagent\,|\,\valagent)=\bidpaymentagent(\actionagent\,|\,\valagent)/\bidallocagent(\actionagent\,|\,\valagent)$.
\end{definition}

General single-parameter mechanisms need not support a natural notion of reserve price.
For those that do, we discount costs as we discounted bids below the reserves.
This will enable us to study simultaneous first-price auctions with individualized reserves: if agent $\agent$ faces a reserve of $\reserveagent$ in each mechanism, this imposes a minimum price per unit of $\reserveagent$ for allocation overall.

\begin{definition}\label{def:reservecost}
Given joint distribution $\jointcdf$ over actions and values, desired allocation probability $\allocation\in[0,1]$, and minimum cost $\reserveagent$, an agent $\agent$'s {\em interim threshold cost with discounted reserve} is given by $\equivthreshres{i}{\allocation\,|\,\valagent}{\reserveagent}=\equivthresh{i}{\allocation\,|\,\valagent}$ if $\equivthresh{i}{\allocation\,|\,\valagent}\geq \reserveagent$ and $\equivthreshres{i}{\allocation\,|\,\valagent}{\reserveagent}=0$ otherwise.
\end{definition}

For cost functions $\equivbid_\agent$ for each agent $\agent$, we aggregate threshold costs mimicing Definition~\ref{def:intcumulative}.

\begin{definition}\label{def:cumulativeprice}
	Given joint distribution $\jointcdf$ over actions and values, allocation probability $\allocaltagent\in[0,1]$, and minimum cost $\reserveagent$, agent $\agent$'s \emph{generalized threshold surplus with discounted reserve} for $\allocaltagent$ is given by $\equivcumulative{\agent}{\allocaltagent\,|\,\valagent}{\reserveagent}=\int_0^{\allocaltagent}\equivthreshres{\agent}{\allocation\,|\,\valagent}{\reserveagent}\,d\allocation$. 
	When $\reserveagent=0$, we omit $\reserveagent$ and simply write $\equivcumulative{\agent}{\allocaltagent\,|\,\valagent}{}$.
\end{definition}

The generalized definition of \trc\ compares the threshold surplus to the mechanism's revenue.
The definition parallels Definition~\ref{def:exprc}, but threshold surplus is now taken with respect to general cost functions.
\begin{definition}\label{def:genrc}
	The generalized \rc\ of a mechanism $\mech$ and minimum costs $\reserves$ is the largest $\revpar$ such that, for any joint distribution $\jointcdf$ over values and actions and any feasible profile $\allocalt$ of interim allocation functions function, the revenue is at least a $\revpar$ fraction of the threshold surplus:
	\begin{equation*}
		\rev(\mech,\jointcdf)\geq \revpar\sum\nolimits_\agent \expect[\valagent\sim\valuecdf]{\equivcumulative{\agent}{\allocaltagent(\valagent)\,|\,\valagent}{\reserveagent}}.
	\end{equation*}
\end{definition}

For winner-pays-bid mechanisms, the price per unit of a bid is the bid itself. 
Thus with cost functions $\equivbid_\agent^{\text{PPU}}$, Definitions~\ref{def:equivthresh}, \ref{def:reservecost}, \ref{def:cumulativeprice}, and \ref{def:genrc} coincide with their analogs from Section~\ref{sec:rc}.
Hence:

\begin{lemma}
	For winner-pays-bid mechanism $\mech$ and minimum costs $\reserves$, if $\mech$ has \rc\ $\revpar$, then for price-per-unit costs, $\mech$ also has generalized \rc\ $\revpar$.
\end{lemma}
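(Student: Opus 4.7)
My plan is to verify that when we specialize the generalized definitions to winner-pays-bid mechanisms under the price-per-unit cost, they reduce exactly to the definitions of threshold, discounted threshold, and threshold surplus from Section~\ref{sec:rc}. Once that correspondence is established, the inequalities defining \rc\ (Definition~\ref{def:exprc}) and generalized \rc\ (Definition~\ref{def:genrc}) become syntactically identical, so any \rc\ guarantee transfers immediately.

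First I would compute $\equivbidagent^{\text{PPU}}$. In a winner-pays-bid mechanism the payment is $\bidpaymentagent(\bidagent,\bidothers) = \bidagent\,\bidallocagent(\bidagent,\bidothers)$, so taking expectation over $\bidothers,\valothers$ conditional on $\valagent$ yields $\bidpaymentagent(\bidagent\,|\,\valagent) = \bidagent\,\bidallocagent(\bidagent\,|\,\valagent)$, whence $\equivbidagent^{\text{PPU}}(\bidagent\,|\,\valagent) = \bidagent$ at every bid yielding positive allocation. Substituting this into Definition~\ref{def:equivthresh} gives $\equivthresh{\agent}{\allocation\,|\,\valagent} = \inf\{\bidagent : \bidallocagent(\bidagent\,|\,\valagent) \geq \allocation\}$, which is precisely $\interthresh{\agent}{\allocation\,|\,\valagent}$.

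Second I would match the discounted threshold and the integrated surplus. By Definition~\ref{def:reservecost}, $\equivthreshres{\agent}{\allocation\,|\,\valagent}{\reserveagent}$ equals $\interthresh{\agent}{\allocation\,|\,\valagent}$ when $\interthresh{\agent}{\allocation\,|\,\valagent} \geq \reserveagent$ and is zero otherwise. Since applying reserves sends every sub-reserve bid to $0$, and the paper's convention $\bidallocagent(0,\bidothers)=0$ ensures that such bids receive no allocation, the inverse $\interthresh{\agent}{\allocation\,|\,\valagent}$ lies below $\reserveagent$ precisely when $\allocation < \bidallocagent(\reserveagent\,|\,\valagent)$. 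Consequently
\[
\equivcumulative{\agent}{\allocaltagent\,|\,\valagent}{\reserveagent} = \int_0^{\allocaltagent}\equivthreshres{\agent}{\allocation\,|\,\valagent}{\reserveagent}\,d\allocation = \int_{\bidallocagent(\reserveagent\,|\,\valagent)}^{\allocaltagent}\interthresh{\agent}{\allocation\,|\,\valagent}\,d\allocation = \cumuthreshagent{\agent}{\allocaltagent\,|\,\valagent}{\reserveagent}.
\]

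With the right-hand sides of Definitions~\ref{def:exprc} and~\ref{def:genrc} equal term-by-term and the revenue on the left unchanged, the inequality witnessing \rc\ $\revpar$ is exactly the inequality witnessing generalized \rc\ $\revpar$, and the conclusion follows. The only delicate point is the cutoff argument linking $\interthresh{\agent}{\allocation\,|\,\valagent} \geq \reserveagent$ to $\allocation \geq \bidallocagent(\reserveagent\,|\,\valagent)$; this is not a serious obstacle, but it relies on the reserve-filtered allocation rule producing no allocation for bids strictly below $\reserveagent$, which is the standing assumption on rules to which reserves are applied.
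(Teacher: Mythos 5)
Your proposal is correct and takes essentially the same route as the paper, which asserts this lemma directly from the observation that for winner-pays-bid mechanisms under price-per-unit costs, Definitions~\ref{def:equivthresh}, \ref{def:reservecost}, \ref{def:cumulativeprice}, and \ref{def:genrc} coincide with their Section~\ref{sec:rc} analogs; your write-up simply makes that coincidence explicit. The boundary point you flag --- that allocation levels below $\bidallocagent(\reserveagent\,|\,\valagent)$ are exactly the ones whose threshold costs get discounted, which relies on the standing convention $\bidallocagent(0,\bidothers)=0$ for reserve-filtered rules --- is the same convention the paper uses implicitly, so nothing further is needed.
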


The two main applications of this section are simultaneous mechanisms and all-pay mechanisms.
Each of these two families presents an obstacle that we overcome by weakening the definitions of \rc\ and \vc.
First, recall that a winner-pays-bid mechanism's \rc\ was the same for all priors, including those with correlation. 
However, simultaneous auctions are known to possess inefficient equilibria under correlation \citep{FFGL13}. 
We therefore consider only distributions with independent values and no bidder communication (Definition~\ref{def:nocom}).

The second problem is that all-pay mechanisms are not individually efficient.
Definition~\ref{def:vc} of \vc\ required an agent to trade off utility and threshold surplus efficiently to obtain any desired allocation level $z$.
For all-pay mechanisms, only $z=1$ can be obtained efficiently in this way.
We therefore consider ex post feasible allocations $\allocalt\in\feasible\subseteq\{0,1\}^n$ when we weaken \rc, and deterministic allocations when we weaken \vc.

The restrictions discussed above enable simpler notation for the expected threshold surplus $\expect[\valagent\sim\valuecdf]{\equivcumulative{\agent}{\allocaltagent(\valagent)\,|\,\valagent}{\reserveagent}}$ from Definition~\ref{def:genrc}.
With independent values and no bidder communication, an  agent's threshold surplus does not depend on their value $\valagent$. 
We may therefore omit the conditioning on $\valagent$.
Second, with ex post feasible allocations $\allocalt\in\{0,1\}^\numagents$, we may omit the expectation over $\valagent$.
Taking these two modifications into account, we have the following:

\begin{definition}\label{def:weakrc}
		The weak \rc\ of a mechanism $\mech$ for minimum costs $\reserves$ is the largest $\revpar$ such that, for bid distribution $\bidcdfs$, product distribution $\valuecdfs$, and feasible allocation $\allocalt\in\feasible$, the revenue is at least a $\revpar$ fraction of the threshold surplus:
	\begin{equation*}
		\rev(\mech,\bidcdfs,\valuecdfs)\geq \revpar\sum\nolimits_\agent\equivcumulative{\agent}{\allocaltagent}{\reserveagent}.
	\end{equation*}
\end{definition}

The following is then immediate from definitions.

\begin{lemma}
	If a mechanism $\mech$ has generalized \rc\ $\revpar$, then it also has weak \rc\ $\revpar$.
\end{lemma}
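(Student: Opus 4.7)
The plan is to show that weak \rc\ is precisely generalized \rc\ specialized to a restricted class of inputs, so the bound transfers for free. Concretely, given any input $(\bidcdfs,\valuecdfs,\allocalt)$ satisfying the hypotheses of Definition~\ref{def:weakrc} (product $\valuecdfs$, any $\bidcdfs$, ex post feasible $\allocalt\in\feasible$), I will build a corresponding input $(\jointcdf,\allocalt(\cdot))$ admissible in Definition~\ref{def:genrc} whose two sides of the inequality match those in Definition~\ref{def:weakrc}. Applying generalized \rc\ to this lifted input then delivers the desired weak bound.

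The construction is as follows. First, take $\jointcdf$ to be the joint distribution induced by $\valuecdfs$ and $\bidcdfs$ (with whatever correlation between actions and values is present in the weak setup); this is a valid joint distribution over $\values\times\actspaces$, so it is admissible in Definition~\ref{def:genrc}. Second, reinterpret the fixed ex post feasible $\allocalt\in\feasible$ as the constant interim allocation function $\allocaltagent(\valagent):=\allocaltagent$ for every $\agent$ and $\valagent$. Since the constant ex post rule $\vals\mapsto\allocalt$ is feasible (its image is the single point $\allocalt\in\feasible$), its induced interim profile is feasible in the sense required by Definition~\ref{def:genrc}.

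The revenues on the two sides agree by construction: $\rev(\mech,\jointcdf)=\rev(\mech,\bidcdfs,\valuecdfs)$. For the threshold-surplus side, the key observation is that under product $\valuecdfs$ and no-bidder-communication (the implicit setup for Definition~\ref{def:weakrc}, where $\equivcumulative{\agent}{\allocaltagent}{\reserveagent}$ appears without conditioning on $\valagent$), the distribution of $\actionothers$ given $\valagent$ does not depend on $\valagent$. Hence the interim allocation rule $\bidallocagent(\actionagent\,|\,\valagent)$ and the cost function $\equivbidact{\actionagent\,|\,\valagent}$ are both independent of $\valagent$, so by Definition~\ref{def:equivthresh} the interim threshold cost $\equivthresh{\agent}{\allocation\,|\,\valagent}$ is as well. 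It follows that $\equivcumulative{\agent}{\allocaltagent(\valagent)\,|\,\valagent}{\reserveagent}=\equivcumulative{\agent}{\allocaltagent}{\reserveagent}$ for every $\valagent$, so the outer expectation $\expect[\valagent\sim\valuecdf]{\cdot}$ in Definition~\ref{def:genrc} collapses to $\equivcumulative{\agent}{\allocaltagent}{\reserveagent}$, matching the righthand side of Definition~\ref{def:weakrc}. Plugging everything into the generalized \rc\ inequality yields the weak \rc\ inequality with the same constant $\revpar$.

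No step should be a real obstacle; the only subtlety is justifying that the constant ex post allocation is a feasible interim profile and that the conditioning-on-$\valagent$ in the generalized definition drops out under the product-values and no-bidder-communication conventions that underlie Definition~\ref{def:weakrc}. Both are immediate once stated, so the proof reduces to identifying weak \rc\ as the restriction of generalized \rc\ to this special class of inputs.
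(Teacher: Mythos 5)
Your proof is correct and matches the paper's approach: the paper simply declares this lemma ``immediate from definitions,'' and your argument is exactly the specialization the paper has in mind --- every admissible input to Definition~\ref{def:weakrc} lifts to an admissible input for Definition~\ref{def:genrc} (the constant ex post allocation induces a feasible interim profile, and the conditioning on $\valagent$ collapses under product values and no bidder communication). You have merely written out the details the paper leaves implicit.
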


The weakened definition of \vc\ is below. It requires the agent to efficiently trade off generalized threshold surplus $\equivcumulative{}{\cdot}{}$ and utility. In contrast to Definition~\ref{def:vc}, it only considers $z=1$. We first generalize single-agent \bos\ beyond winner-pays-bid mechanisms.

\begin{definition}
	A single-agent {\em \ao}\ is given by a value $\val\geq 0$ and a joint distribution $\mechdist=(\action^\info,\bidallocation^\info,\bidpay^\info)$ over bids, allocation rules, and payment rules, indexed uniformly by $\info$. The agent's utility is given by $\util(\mechdist)=\mathbb E_{\info}[\val \bidallocation^\info(\action^\info)-\bidpay^\info(\action^\info)]$. The agent's {\em expected allocation rule} at bid $\action$ is $\bidallocation(\action)=\expect[\info]{\bidallocation^\info(\bid)}$, and given a cost function $\equivbid$, the interim threshold costs are $\equivthreshres{}{\allocation}{}=\inf\{\equivbid(\action)\,|\,\bidallocation(\action)\geq x\}$. The threshold surplus is $\cumuthresh(z)=\int_0^z\equivthreshres{}{\allocation}{}\,dx$.
\end{definition}

\begin{definition}\label{def:weakvc}
	Let $(\val,\mechdist)$ be a single-agent \ao. The {\em weak \vc}\ of $(\val,\mechdist)$ is given by the ratio $\eta=(\util(\mechdist) + \equivcumulative{}{1}{})/\val$.
	The weak \vc\ of a family of \aos\ is the smallest weak \vc\ in the family.
\end{definition}

\begin{SCfigure}
	{\begin{tikzpicture}[scale = 4]
		
		\fill[lightgray] (0,0) -- (0.2,0) -- (0.2,0.2) -- (0.3,0.2) -- (0.3,0.3) -- (0.69,0.3) -- (0.69,0.45) -- (0.845,0.45) -- (0.845,0.70) -- (1,0.70) -- (1.0,0.8) -- (1.15,0.8) -- (1.15,1) -- (1.3,1) -- (0,1) -- cycle;
		
		\draw[-] (0,0) -- (0.2,0) -- (0.2,0.2) -- (0.3,0.2) -- (0.3,0.3) -- (0.69,0.3) -- (0.69,0.45) -- (0.845,0.45) -- (0.845,0.70) -- (1,0.70) -- (1.0,0.8) -- (1.15,0.8) -- (1.15,1) -- (1.3,1);
		
		\node at (0.5,0.75) {$\equivcumulative{}{z}{}$};
		\node at (1.15,0.65) {$\hat \allocation(\bid)$};
		
		\foreach \point in {(0,0), (0.2,0.2), (0.3,0.3), (0.69,0.45), (0.845,0.70), (1.0,0.8), (1.15,1), (0.5,0.2), (0.6,0.1), (1,0.5), (0.35,0.15), (1.05,0.1), (1.15,0.3), (0.9,0.4)}
		\filldraw[black] \point circle (.5pt);
		
		\draw[-, line width = .4mm] (0,0) -- (0,1.05) node[left] {1};
		\draw[-, line width = .4mm] (0,0) -- (1.25,0);
		\node at (-0.06,0) {0};
		
\end{tikzpicture}}
{\caption{Generalized individual efficiency analysis of price per unit costs. Each point represents $\equivbid^{\text{PPU}}(\action)$ for some action. Best responses lie along Pareto frontier $\hat \allocation(\cdot)$. This is equivalent to playing against the winner-pays bid mechanism with allocation rule $\hat \allocation(\cdot)$.} \label{fig:pareto}}
\end{SCfigure}
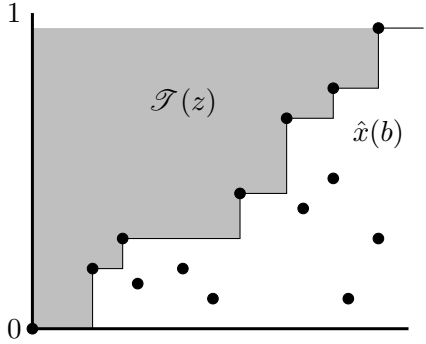

The \vc\ analysis of winner-pays-bid \bos\ in Section~\ref{sec:welf} implies generalized \vc\ guarantees under the price per unit cost function $\equivbid^{\text{PPU}}(\cdot)$.


\begin{lemma}
	The weak individual efficiency of winner-pays-bid mechanisms for price per unit costs under best response is $(e-1)/e$.
\end{lemma}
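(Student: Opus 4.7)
The plan is to reduce this statement to \cref{lem:brres} by observing that for winner-pays-bid mechanisms, the price-per-unit cost function coincides with the bid itself, so the generalized threshold surplus reduces to the threshold surplus of \cref{sec:rc}, and the weak \vc\ at $z=1$ then inherits the bound proved for general single-agent \bos\ under best response.

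First I would unpack the cost function. In a winner-pays-bid mechanism, the interim payment satisfies
\begin{equation*}
\bidpaymentagent(\bidagent\,|\,\valagent)=\expect[\bidothers\,|\,\valagent]{\bidagent\,\bidallocagent(\bidagent,\bidothers)}=\bidagent\,\bidallocagent(\bidagent\,|\,\valagent),
\end{equation*}
so $\equivbidagent^{\text{PPU}}(\bidagent\,|\,\valagent)=\bidpaymentagent(\bidagent\,|\,\valagent)/\bidallocagent(\bidagent\,|\,\valagent)=\bidagent$. Consequently, for any single-agent \ao\ $(\val,\mechdist)$ arising from a winner-pays-bid mechanism, the interim threshold cost $\equivthreshres{}{\allocation}{}=\inf\{\bid\,:\,\bidallocation(\bid)\geq\allocation\}$ coincides exactly with the quantity $\interthresh{}{\allocation}$ from \cref{sec:bnewelfare}, and hence the generalized threshold surplus $\equivcumulative{}{z}{}=\int_0^z\equivthreshres{}{\allocation}{}\,d\allocation$ coincides with the threshold surplus $\cumuthresh(z)$ used in the \vc\ analysis of single-agent \bos. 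The \ao\ in question therefore reduces to a single-agent \bo\ in the sense of \cref{def:bo}, and the best-response condition on the \ao\ translates verbatim into best response of the associated \bo.

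Next I would apply \cref{lem:brres}, which shows that for any such \bo\ under best response and any target allocation $z\in[0,1]$,
\begin{equation*}
\util(\mechdist)+\cumuthresh(z)\geq\tfrac{e-1}{e}\,\val\,z.
\end{equation*}
Specializing to $z=1$ yields $\util(\mechdist)+\equivcumulative{}{1}{}\geq(e-1)\val/e$, which by \cref{def:weakvc} is exactly the statement that the weak \vc\ is at least $(e-1)/e$.

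Finally, to establish that this bound is tight, I would reuse the tight example from the proof of \cref{lem:brres}: take $\val=1$, deterministic bid $\bid^\info=0$, and deterministic allocation rule $\bidallocation^\info(d)=(e(1-d))^{-1}$ on $d\in[0,(e-1)/e]$. This \bo\ satisfies best response, and a direct computation shows $\util(\mechdist)+\cumuthresh(1)=(e-1)/e$, matching the bound. The only subtle point worth checking is that this reduction to \cref{lem:brres} is sound because weak \vc\ is defined only at $z=1$, so no strengthening of the \cref{lem:brres} analysis is required; the bottleneck in the original \vc\ proof (the integral over $\allocation$ used to derive the bound for arbitrary $z$) is entirely inherited, and no additional work is needed to handle the winner-pays-bid structure beyond the price-per-unit identification in the first step.
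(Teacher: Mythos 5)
Your reduction is sound for the special case it treats, but it proves a strictly weaker statement than the one the paper needs and proves. You identify $\equivbidagent^{\text{PPU}}(\bidagent\,|\,\valagent)=\bidagent$ and thereby collapse the generalized threshold surplus onto the threshold surplus of \cref{sec:bnewelfare}; this is valid only when the agent's action is a single real-valued bid in a single winner-pays-bid rule. The paper's proof instead takes an arbitrary single-agent \ao\ $(\val,\action^\info,\bidallocation^\info,\bidpay^\info)$ --- where the action may be a profile of bids across several component mechanisms, the payment is a sum of component payments, and the allocation is a maximum --- so that the price per unit $\equivbid^{\text{PPU}}(\action)=\bidpay(\action)/\bidallocation(\action)$ is \emph{not} the bid and the achievable (cost, allocation) pairs form an arbitrary scatter of points rather than a monotone curve (Figure~\ref{fig:pareto}). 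This generality is not optional: the lemma is invoked, via \cref{thm:genpoa}, to give the $(e-1)/e$ welfare factor for simultaneous compositions of winner-pays-bid mechanisms, whose conditional \aos\ are exactly of this richer form. Your argument gives no bound for those \aos.

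The missing idea is the Pareto-frontier construction. One defines $\wpballoc(\bid)=\sup_{\action\,:\,\equivbid^{\text{PPU}}(\action)\leq\bid}\bidallocation(\action)$, observes that (a) the generalized threshold surplus $\equivcumulative{}{1}{}$ of the \ao\ equals the winner-pays-bid threshold surplus of $\wpballoc$, and (b) for the bid $\bid^*$ maximizing $(\val-\bid^*)\wpballoc(\bid^*)$ there is an action achieving price per unit at most $\bid^*$ and allocation approaching $\wpballoc(\bid^*)$, so best response in the \ao\ gives utility at least $(\val-\bid^*)\wpballoc(\bid^*)$. Only after this does \cref{lem:brres} apply, to the synthetic \bo\ $(\val,\bid^*,\wpballoc)$. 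Your tightness example is fine (a single-bid \bo\ is a special case of an \ao\ with price-per-unit costs), but the lower bound as you have argued it does not cover the cases for which the lemma is actually used.
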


\begin{proof}
Let $(\val,\action^\info,\bidallocation^\info,\bidpay^\info)$ be a single-agent \ao, and define $\bidallocation(\cdot)=\mathbb E_{\info}[\bidallocation^\info(\cdot)]$.
Then $\bidallocation(\cdot)$ induces a Pareto frontier between allocation and price per unit: ${\wpballoc(\bid)=\sup_{\action\,:\,\equivbid^{\text{PPU}}(\action)\leq \bid}\bidallocation(\action)}$.
See Figure~\ref{fig:pareto} for illustration.
We may treat $\wpballoc(\bid)$ as a winner-pays-bid allocation rule and apply the \vc\ guarantees for winner-pays-bid mechanisms.
Specifically, let $\bid^*$ maximize $(\val-\bid^*)\wpballoc(\bid^*)$.
Then the single-agent \bo\ $(\val,\bid^*,\wpballoc)$ satisfies best response.
Moreover, the winner-pays-bid threshold surplus for $(\val,\bid^*,\wpballoc)$ is equal to the price per unit surplus $\equivcumulative{}{1}{}$ of $\bidallocation(\cdot)$,
 and hence by Lemma~\ref{lem:brres}, $(\val-\bid^*)\wpballoc(\bid^*)+\equivcumulative{}{1}{}\geq \val(e-1)/e$.
 But since $(\val,\action^\info,\bidallocation^\info,\bidpay^\info)$ satisfies best response, $\util(\action^\info,\bidallocation^\info,\bidpay^\info)\geq(\val-\bid^*)\wpballoc(\bid^*)$, yielding the stated bound. 
\end{proof}

We are further able to extend Lemma~\ref{lem:vvalres} to obtain:
\begin{lemma}
	Let $\singlecdf$ be a regular value distribution with monopoly reserve $\reservesingm$, and let $(\val,\action^\info,\bidallocation^\info,\bidpay^\info)$ be a single-agent \ao\ satisfying best response. If $\equivbid^{\text{PPU}}(\action^\info)\geq \reservem$ with probability $1$, then with price per unit costs,
	\begin{equation}\label{eq:genrev}
		\mathbb E_{\info}[\vval(\val)\bidallocation^\info(\action^\info)] + \equivcumulative{}{1}{\reservesingm}\geq\tfrac{e-1}{e}\vval(\val) .\end{equation}
\end{lemma}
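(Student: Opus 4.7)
The plan is to mirror the proof of Lemma~\ref{lem:vvalres}, substituting the Pareto frontier $\wpballoc(\bid)=\sup\{\bidallocation(\action):\equivbid^{\text{PPU}}(\action)\le\bid\}$ (where $\bidallocation(\action)=\mathbb E_\info[\bidallocation^\info(\action)]$) for the raw bid allocation rule and invoking the already-established weak individual efficiency bound $\util(\mechdist)+\equivcumulative{}{1}{}\ge\tfrac{e-1}{e}\val$ in place of Lemma~\ref{lem:brres}. The case $\val<\reservesingm$ is trivial: then $\vval(\val)<0$, and since $\bidallocation^\info(\action^\info)\in[0,1]$ we have $\mathbb E_\info[\vval(\val)\bidallocation^\info(\action^\info)]\ge\vval(\val)$, while $\equivcumulative{}{1}{\reservesingm}\ge 0$, so the left-hand side of~(\ref{eq:genrev}) is at least $\vval(\val)\ge\tfrac{e-1}{e}\vval(\val)$.

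For $\val\ge\reservesingm$, the heart of the argument is a payment-coverage inequality in the spirit of~(\ref{eq:paymentvsthresh}), namely $\mathbb E_\info[\bidpay^\info(\action^\info)]\ge\equivcumulative{}{1}{}-\equivcumulative{}{1}{\reservesingm}$. Write $\bar{\bidallocation}=\mathbb E_\info[\bidallocation^\info(\action^\info)]$. I would prove the coverage from three sub-bounds. First, the hypothesis $\equivbid^{\text{PPU}}(\action^\info)\ge\reservesingm$ almost surely yields $\mathbb E[\bidpay^\info]\ge\reservesingm\bar{\bidallocation}$. Second, because $\equivthresh{}{\allocation}<\reservesingm$ is supported on an interval contained in $[0,\wpballoc(\reservesingm)]$, the sub-reserve portion of threshold surplus satisfies $\equivcumulative{}{1}{}-\equivcumulative{}{1}{\reservesingm}\le\reservesingm\wpballoc(\reservesingm)$. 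Third, $\bar{\bidallocation}\ge\wpballoc(\reservesingm)$: best response against the fixed deviation to any action $\action$ with $\equivbid^{\text{PPU}}(\action)\le\reservesingm$ gives $\util(\mechdist)\ge(\val-\reservesingm)\bidallocation(\action)$, and taking the supremum over such $\action$ yields $\util(\mechdist)\ge(\val-\reservesingm)\wpballoc(\reservesingm)$; at the same time $\util(\mechdist)=\val\bar{\bidallocation}-\mathbb E[\bidpay^\info]\le(\val-\reservesingm)\bar{\bidallocation}$, and cancelling $\val-\reservesingm>0$ gives the desired inequality (the boundary case $\val=\reservesingm$ reduces~(\ref{eq:genrev}) to triviality since $\vval(\reservesingm)=0$).

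With payment coverage in hand the conclusion follows by the same bookkeeping as in Lemma~\ref{lem:vvalres}:
\begin{align*}
\mathbb E_\info[\val\bidallocation^\info(\action^\info)]+\equivcumulative{}{1}{\reservesingm}
&=\util(\mechdist)+\mathbb E[\bidpay^\info]+\equivcumulative{}{1}{\reservesingm}\\
&\ge\util(\mechdist)+\equivcumulative{}{1}{}\ \ge\ \tfrac{e-1}{e}\val,
\end{align*}
after which rescaling by $\vval(\val)/\val\in[0,1]$ (and using that $\equivcumulative{}{1}{\reservesingm}$ dominates $(\vval(\val)/\val)\equivcumulative{}{1}{\reservesingm}$) weakens this to~(\ref{eq:genrev}). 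The main technical obstacle is the third sub-bound $\bar{\bidallocation}\ge\wpballoc(\reservesingm)$: it is the only place the reserve-respecting hypothesis does real work, and it is what converts best response against arbitrary Pareto-frontier deviations into a statement about the agent's realized ex ante allocation, thereby gluing the reserve accounting to the weak-individual-efficiency reduction.
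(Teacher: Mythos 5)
The paper states this lemma without proof, presenting it as a direct extension of Lemma~\ref{lem:vvalres} via the Pareto-frontier reduction $\wpballoc(\cdot)$ used to establish the weak \vc\ of price-per-unit costs, and your reconstruction of the main case $\val\geq\reservesingm$ is correct and is essentially the intended argument. Your chain $\mathbb E_\info[\bidpay^\info(\action^\info)]\geq\reservesingm\bar{\bidallocation}\geq\reservesingm\wpballoc(\reservesingm)\geq\equivcumulative{}{1}{}-\equivcumulative{}{1}{\reservesingm}$ is the exact analogue of the paper's inequality~(\ref{eq:paymentvsthresh}); your derivation of $\bar{\bidallocation}\geq\wpballoc(\reservesingm)$ from best response correctly supplies the step that in the winner-pays-bid setting comes for free from monotonicity of the bid allocation rule; and the final rescaling by $\vval(\val)/\val$ is the same weakening the paper uses to pass from~(\ref{eq:valcoverres}) to~(\ref{eq:vvalres}).

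The one genuine error is in your treatment of the case $\val<\reservesingm$. There $\vval(\val)<0$, and for a negative quantity the inequality $\vval(\val)\geq\tfrac{e-1}{e}\vval(\val)$ is false: multiplying a negative number by $\tfrac{e-1}{e}<1$ moves it toward zero, so $\tfrac{e-1}{e}\vval(\val)>\vval(\val)$, and your chain ``LHS $\geq\vval(\val)\geq\tfrac{e-1}{e}\vval(\val)$'' breaks at the second step. The case is still easy, but it requires the stronger observation that the expected allocation is zero: since $\equivbid^{\text{PPU}}(\action^\info)\geq\reservesingm$ almost surely, any positive expected allocation would give $\util(\mechdist)\leq(\val-\reservesingm)\,\mathbb E_\info[\bidallocation^\info(\action^\info)]<0$, contradicting best response against withdrawal; with zero expected allocation the left side of~(\ref{eq:genrev}) is $\equivcumulative{}{1}{\reservesingm}\geq 0$ while the right side is negative. (Alternatively one can observe that Theorem~\ref{thm:genrev} only invokes the lemma for $\val\geq\reservesingm$, since respects-reserves forces zero allocation below the reserve, so the case can be excluded from scope --- but the justification as you wrote it is not valid.)
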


Definitions~\ref{def:weakrc} and \ref{def:weakvc} weaken otherwise general definitions of \rc\ and \vc. Nonetheless, they are sufficient to obtain welfare and revenue guarantees:

\begin{theorem}\label{thm:genpoa}
	Let $\mech$ be a mechanism with weak \rc\ $\revpar\leq 1$ for an environment with deterministic allocations.
	Let $\jointcdf$ be an equilibrium in which the agents' conditional \aos\ have individual efficiency $\eta$. Then the expected welfare is an $\revpar\eta$-approximation to that of the optimal mechanism.
\end{theorem}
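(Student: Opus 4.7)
My plan is to adapt the proof of Theorem~\ref{thm:restowelf} to the weaker primitives. The crucial observation is that in a deterministic allocation environment, the welfare-optimal ex post allocation $\allocopt$ satisfies $\allocoptagent(\vals)\in\{0,1\}$ for every profile $\vals$ and agent $\agent$. Thus, although weak \vc\ only bounds the agent's tradeoff at the single target level $z=1$, this is exactly the only nontrivial allocation level arising in $\allocopt$.

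First, I would apply the weak \vc\ of each agent's conditional \ao. For each $\agent$, each $\valagent$, and each realization $\vals$: when $\allocoptagent(\vals)=1$, Definition~\ref{def:weakvc} yields $\utilagent(\valagent)+\equivcumulative{\agent}{1\,|\,\valagent}{}\geq \eta\,\valagent$; when $\allocoptagent(\vals)=0$, both sides of the desired inequality are at least $0$, so it holds trivially. Hence in either case,
\[
\utilagent(\valagent)+\equivcumulative{\agent}{\allocoptagent(\vals)\,|\,\valagent}{}\geq \eta\,\valagent\,\allocoptagent(\vals).
\]
Summing over agents and taking expectation over $\vals\sim\valuecdfs$ gives
\[
\sum\nolimits_\agent \mathbb E_{\vals}[\utilagent(\valagent)] + \sum\nolimits_\agent \mathbb E_{\vals}\bigl[\equivcumulative{\agent}{\allocoptagent(\vals)\,|\,\valagent}{}\bigr] \geq \eta\,\WEL(\OPTmech,\valuecdfs).
\]

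Next, I would apply weak \rc. Since $\allocopt(\vals)\in\feasible$ for every realization of $\vals$, I can invoke Definition~\ref{def:weakrc} pointwise in $\vals$ and average by linearity of expectation to obtain $\rev(\mech,\jointcdf)\geq\revpar\sum\nolimits_\agent\mathbb E_{\vals}[\equivcumulative{\agent}{\allocoptagent(\vals)\,|\,\valagent}{}]$. Here I rely on the assumption, implicit in the weak definitions, that the equilibria in scope have independent values with no bidder communication, so that the threshold surplus is independent of $\valagent$ and matches the unconditioned form used in Definition~\ref{def:weakrc}. Combining the two inequalities using $\revpar\leq 1$ together with the identity $\WEL(\mech,\jointcdf)=\sum_\agent\mathbb E_{\vals}[\utilagent(\valagent)]+\rev(\mech,\jointcdf)$ yields $\WEL(\mech,\jointcdf)\geq\revpar\,\eta\,\WEL(\OPTmech,\valuecdfs)$, as desired.

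The main conceptual obstacle is the first step: weak \vc\ provides a guarantee only at $z=1$, not at arbitrary interim allocation levels. The restriction to deterministic allocation environments is precisely what makes this enough, since then $\allocoptagent(\vals)\in\{0,1\}$ and the only nontrivial case of the \vc\ inequality coincides with $z=1$. Once this obstacle is sidestepped, the remainder of the argument is a direct repackaging of the winner-pays-bid proof of Theorem~\ref{thm:restowelf}.
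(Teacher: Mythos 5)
Your proposal is correct and follows essentially the same route as the paper's proof: apply the weak individual efficiency guarantee at $z=1$ (trivially at $z=0$), using the determinism of the environment to restrict the optimal allocation to $\{0,1\}$, then sum, take expectations, and invoke weak competitive efficiency on the pointwise-feasible optimal allocation before combining with $\revpar\leq 1$. Your explicit remark about why the conditioning on $\valagent$ in the threshold surplus can be dropped (independence plus no bidder communication) is a point the paper's proof leaves implicit, but the argument is otherwise identical.
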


\begin{theorem}\label{thm:genrev}
	Let $\mech$ be a mechanism with weak \rc\ $\revpar\leq 1$ for an environment with deterministic allocations.
	Let $\jointcdf$ be an equilibrium with product distribution $\valuecdfs$ which satisfies best response and no bidder communication. Further assume that $\jointcdf$ respects reserves: for all $\agent$, $\valagent\geq \reserveagentm$ implies $\equivbid^{\text{PPU}}(\actionagent)\geq \reserveagentm$ and $\valagent< \reserveagentm$ implies $\bidallocagent(\actionagent)=0$ with probability $1$.
	 Then the expected revenue is a $\tfrac{e-1}{2e}\revpar$-approximation to that of the optimal mechanism.
\end{theorem}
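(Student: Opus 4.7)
The plan is to mirror the revenue analysis of Theorem~\ref{thm:bnerev}, substituting the generalized individual efficiency inequality (\ref{eq:genrev}) for Lemma~\ref{lem:vvalres} and the weak \rc\ of Definition~\ref{def:weakrc} for the original \rc. The main new obstacle is that (\ref{eq:genrev}) only bounds threshold surplus at $z=1$, whereas Theorem~\ref{thm:bnerev} crucially used the analog at $z=\allocoptagent(\valagent)$. I sidestep this by working ex post: because the environment has deterministic allocations, the revenue-optimal mechanism $\OPTmech_\valuecdfs$ produces indicators $\allocoptagent(\vals)\in\{0,1\}$, and multiplying the $z=1$ inequality by these indicators recovers the correct weighting while letting the weak \rc\ be invoked profile-by-profile.

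First I would fix an agent $\agent$ and apply (\ref{eq:genrev}) to their conditional \ao, which by the hypotheses of best response and reserve-respect satisfies the preconditions of the generalized lemma at every $\valagent\geq \reservesingm$. This yields
\[ \vvalagent(\valagent)\bidallocagent(\valagent) + \equivcumulative{\agent}{1}{\reservesingm} \geq \tfrac{e-1}{e}\vvalagent(\valagent); \]
the inequality extends trivially to $\valagent<\reservesingm$ because reserve-respect forces $\bidallocagent(\valagent)=0$ while $\vvalagent(\valagent)<0$. Multiplying through by $\allocoptagent(\vals)\in\{0,1\}$, summing over $\agent$, and taking expectation over $\vals$, Myerson's virtual welfare identity for $\OPTmech_\valuecdfs$ gives
\[ \sum\nolimits_\agent\Ex[\vals]{\allocoptagent(\vals)\vvalagent(\valagent)\bidallocagent(\valagent)} + \Ex[\vals]{\sum\nolimits_\agent\allocoptagent(\vals)\equivcumulative{\agent}{1}{\reservesingm}} \geq \tfrac{e-1}{e}\rev(\OPTmech_\valuecdfs,\valuecdfs). \]

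Next I would bound each left-hand-side term by $\rev(\mech,\jointcdf)$. For the first term, reserve-respect makes $\vvalagent(\valagent)\bidallocagent(\valagent)\geq 0$ for every $\valagent$, so $\allocoptagent(\vals)\in[0,1]$ lets us drop the indicator and apply Myerson's identity to $\mech$ itself, giving $\sum_\agent\Ex[\vals]{\allocoptagent(\vals)\vvalagent(\valagent)\bidallocagent(\valagent)} \leq \rev(\mech,\jointcdf)$. For the second term, I apply the weak \rc\ from Definition~\ref{def:weakrc} pointwise in $\vals$ with the ex post feasible allocation $\allocopt(\vals)\in\feasible\subseteq\{0,1\}^n$; since $\allocoptagent(\vals)\in\{0,1\}$ collapses $\equivcumulative{\agent}{\allocoptagent(\vals)}{\reservesingm}$ to $\allocoptagent(\vals)\equivcumulative{\agent}{1}{\reservesingm}$, taking expectation over $\vals$ yields
\[ \rev(\mech,\jointcdf) \geq \revpar\,\Ex[\vals]{\sum\nolimits_\agent\allocoptagent(\vals)\equivcumulative{\agent}{1}{\reservesingm}}. \]

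Combining the two bounds with $\revpar\leq 1$, equivalently $1/\revpar\geq 1$, the argument concludes with
\[ \tfrac{2}{\revpar}\rev(\mech,\jointcdf) \geq \rev(\mech,\jointcdf) + \tfrac{1}{\revpar}\rev(\mech,\jointcdf) \geq \tfrac{e-1}{e}\rev(\OPTmech_\valuecdfs,\valuecdfs), \]
which rearranges to the advertised $\tfrac{\revpar(e-1)}{2e}$-approximation. The only genuinely new step beyond the proof of Theorem~\ref{thm:bnerev} is the ``scale by the ex post indicator $\allocoptagent(\vals)$'' maneuver in the first step; the attendant factor-of-two loss in the approximation ratio is absorbed into the bound via the $\revpar\leq 1$ hypothesis.
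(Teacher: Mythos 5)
Your proof is correct and is essentially the argument the paper intends: the paper omits the proof of Theorem~\ref{thm:genrev} as ``analogous'' to Theorem~\ref{thm:genpoa}, and your write-up is exactly that analogue --- the same multiply-by-the-ex-post-indicator $\allocoptagent(\vals)\in\{0,1\}$ device the paper uses to bridge the $z=1$ restriction of weak \vc, combined with the virtual-surplus bookkeeping of Theorem~\ref{thm:bnerev}. The only quibble is cosmetic: the factor of two comes from bounding both the equilibrium virtual surplus and the threshold surplus by (multiples of) $\rev(\mech,\jointcdf)$, just as in Theorem~\ref{thm:bnerev}, not from the indicator maneuver itself.
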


We only prove Theorem~\ref{thm:genpoa}, as the proof of Theorem~\ref{thm:genrev} is analogous.

\begin{proof}[Proof of Theorem~\ref{thm:genpoa}]
	Fix a value profile $\vals$, let $\allocoptagent(\vals)$ denote agent $\agent$'s allocation under the welfare-optimal allocation rule. For all agents $\agent$, we may write:
	\begin{equation*}
		\utilagent(\valagent)+\equivcumulative{\agent}{\allocoptagent(\vals)}{}\geq\eta \valagent\allocoptagent(\vals),
	\end{equation*}
	where the inequality follows from weak individual efficiency when $\allocoptagent(\vals)=1$, and trivially when $\allocoptagent(\vals)=0$.
	Summing over all agents and taking expectation over $\vals$ yields:
	\begin{equation*}
		\expect[\vals]{\sum\nolimits_{\agent}\utilagent(\valagent)}+\expect[\vals]{\sum\nolimits_{\agent}\equivcumulative{\agent}{\allocoptagent(\vals)}{}}\geq\eta \expect[\vals]{\sum\nolimits_{\agent}\valagent\allocoptagent(\vals)}.
	\end{equation*}
	The righthand side is the optimal expected welfare. Since for every profile $\vals$, $\allocopt(\vals)$ is feasible, the second term on the lefthand side is at most $\revpar\rev(\mech,\bidcdfs,\valuecdfs)$. Hence:
	\begin{equation*}
		\expect[\vals]{\sum\nolimits_{\agent}\utilagent(\valagent)}+\revpar\rev(\mech,\bidcdfs,\valuecdfs)\geq \eta\WEL(\OPTmech,\valuecdfs).
	\end{equation*}
	The result then follows from noting that $\revpar\leq 1$ and that the welfare of $\mech$ is the sum of the expected utilities and revenue.
\end{proof}

\subsection{Simultaneous Composition}
\label{sec:simultaneous}

\newcommand{\feasibleitem}{\feasible^{j}}
\newcommand{\equivbiditem}{\equivbidagent^{j}}
\newcommand{\equivthreshitem}[2]{\tau_{#1}^j(#2)}
\newcommand{\equivthreshresitem}[3]{\tau_{#1}^{j,#3}(#2)}
\newcommand{\equivcumulativeitem}[3]{{\mathscr T}_{#1}^{j,#3}(#2)}
\newcommand{\withdraw}{\bot}
\newcommand{\indicij}{\delta_{\agent\mitem}}

We now apply our generalized framework and study the impact of local mechanisms' properties on the equilibrium performance of a larger system. Consider $\numagents$ single-parameter agents seeking abstract service. Agents may participate in one or more of $\numitems$ separate mechanisms. Each of the mechanisms are run simultaneously; each agent takes a profile of actions, one per mechanism. Taking allocation levels in $[0,1]$ to be fractional levels of service, we define an agent's service level in the combined mechanism to be their maximum service level across all mechanisms. Agents make the assigned payments to all mechanisms. In this section we show that under these assumptions, if each individual mechanisms has weak \rc\ $\revpar$, then so too does the aggregate mechanism induced for agents by simultaneous participation as described above. We refer to the aggregate mechanism as the {\em simultaneous composition} of the individual component mechanisms.

Formally, a simultaneous composition of mechanisms consists of $\numitems$ separate feasibility environments $\feasible^1,\ldots,\feasible^{\numitems}$, one per component mechanism. Each component mechanism $\mechitem$ is comprised of a bid allocation rule $\bidallocitem$ and a bid payment rule $\bidpaymentitem$, mapping a profile of actions 
$\actionsitem$ to an allocation in $\feasibleitem$ and a nonnegative payment, respectively. We assume each mechanism has a withdraw action $\withdraw$ which guarantees zero allocation and payments in $\mech^{\mitem}$. Define the simultaneous composition of mechanisms $\mech^1,\ldots, \mech^{\numitems}$ in the following way:

\begin{definition}\label{def:simultaneous}
	Let mechanisms $\mech^1,\ldots,\mech^{\numitems}$ have bid allocation and bid payment rules $(\bidallocitem,\bidpaymentitem)$ and individual action spaces spaces $A_\agent^1,\ldots,A_\agent^m$ for each agent $\agent$. The \emph{simultaneous composition} of $\mech^1,\ldots,\mech^m$ is defined to have:
	\begin{itemize}
		\item Action space $\prod_\mitem A_\agent^\mitem$ for each agent. That is, each agent participates in the global mechanism by participating in each composed mechanism individually. Given a profile of actions $\actions$ for each agent, we denote by $\actionsitem$ the profile of actions restricted to mechanism $\mitem$.
		\item Allocation rule $\bidallocagent(\actions)=\max_
		{\mitem} \bidallocagentitem(\actionsitem)$. That is, each agent is served at their highest level across all component mechanisms.
		\item Payment rule $\bidpaymentagent(\actions)=\sum_{\mitem} \bidpaymentagentitem(\actionsitem)$. That is, agents make payments to every composed mechanism.
	\end{itemize}
\end{definition}
Note that as a consequence of Definition~\ref{def:simultaneous}, we may define the composed feasibility environment as the set of allocation levels induced by the component mechanisms, i.e. 
\begin{equation*}
\feasible=\{(\max_{\mitem} \allocation_1^{\mitem},\ldots, \max_{\mitem} \allocation_n^{\mitem})\,|\,\alloc^1,\ldots,\alloc^\numitems\in \feasible^1,\ldots,\feasible^{\numitems}\}.
\end{equation*}

The main result of this section is that weak \rc\ of the component mechanisms implies weak \rc\ for their simultaneous composition.
That is, if each component mechanism efficiently translates competition, measured via threshold costs, into revenue, then the simultaneous composition of those mechanisms will do so as well.

Competitive efficiency holds with respect to cost functions $\equivbidagent$  for the composed mechanism. for each $\agent$, which we may extend to the component mechanisms in the following way.
Let $\action_\agent^\mitem$ be an action for agent $\agent$ in mechanism $\mitem$, and let $\hat \action_\agent^\mitem$ denote the action for agent $\agent$ in the composed mechanism where $\agent$ plays $\actionagentitem$ in mechanism $\mechitem$ and withdraws from all other mechanisms $\mitem'\neq \mitem$.
Let $\actionagent$ be an action in the composed mechanism consisting of actions $\actionagentitem$ in mechanism $\mechitem$ for each $\mitem$. Define $\equivbiditem(\actionagent)=\equivbidagent(\hat\action_\agent^\mitem)$, where we omit conditioning on $\valagent$ due to the independence of $\valuecdfs$. For any allocation $x\in[0,1]$, we may also define agent $\agent$'s threshold cost $\equivthreshitem{\agent}{\allocation}=\inf_{\actionagent:\bidallocagent(\actionagent)\geq x}\equivbiditem(\actionagent)$, and similarly define the threshold cost with discounted reserve $\equivthreshresitem{\agent}{\allocation}{\reserveagent}$ and generalized threshold surplus with discounted reserve $\equivcumulativeitem{\agent}{\allocation}{\reserveagent}$ for mechanism $\mechitem$ analogously to Definitions~\ref{def:reservecost} and \ref{def:cumulativeprice}.
For the example of price per unit costs $\equivbidagent^{\text{PPU}}$, $\equivbidagentitem$ corresponds to the price per unit cost in the component mechanism $\mechitem$ as one would expect.
We may now state the section's main result.

\begin{theorem}
	\label{thm:simultaneous}
	Let $\mech$ be a simultaneous composition of mechanisms $\mech^1, \ldots \mech^m$. If $\mech^1,\ldots, \mech^m$ all have weak \rc\ $\mu$ with minimum costs $\reserves$, then so does $\mech$.
\end{theorem}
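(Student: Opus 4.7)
The goal is to transport the weak competitive efficiency bound from each component mechanism to the composed mechanism. Fix a joint distribution $\jointcdf$ over values and composed actions, with $\valuecdfs$ a product distribution, and let $\bidcdfs^\mitem$ denote the marginal action distribution in component $\mitem$, so that the composed revenue decomposes additively as $\rev(\mech,\bidcdfs,\valuecdfs) = \sum_\mitem \rev(\mechitem,\bidcdfs^\mitem,\valuecdfs)$. Fix any composed allocation $\allocalt\in\feasible$: by the definition of $\feasible$, there exist component allocations $\allocalt^\mitem\in\feasible^\mitem$ with $\allocaltagent=\max_\mitem \allocaltagentitem$ for every agent $\agent$. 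The plan is to bound the composed threshold surplus for each agent by a sum of component threshold surpluses and then apply weak competitive efficiency componentwise.

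\textbf{Key lemma and integration.} The first step is the pointwise inequality $\equivthresh{\agent}{\allocation}\leq \min_\mitem \equivthreshitem{\agent}{\allocation}$, together with its analog for reserve-discounted costs. This holds because the action $\hat{\action}_\agent^\mitem$ (play $\actionagentitem$ in $\mechitem$ and withdraw from every other component) has interim composed allocation equal to $\bidallocagentitem(\actionagentitem\,|\,\valagent)$---the withdraw action zeros out every other component, so the max over components reduces to the single component $\mitem$---and by definition its composed cost $\equivbidagent(\hat{\action}_\agent^\mitem)$ matches the component cost $\equivbiditem$ evaluated on that action. Hence every strategy in $\mechitem$ that meets a target allocation meets it in $\mech$ at the same cost, so the infimum over composed actions is at most the infimum taken within any one component. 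Reserve discounting preserves this inequality, since discounting can only push a cost down to zero whenever it is already below $\reserveagent$. For the integration step, because $\allocaltagent=\max_\mitem\allocaltagentitem$, every $\allocation\in[0,\allocaltagent]$ has at least one $\mitem$ with $\allocaltagentitem\geq\allocation$; bounding the min by any one such summand and then by the sum of nonnegatives yields
\begin{equation*}
\equivcumulative{\agent}{\allocaltagent}{\reserveagent}
= \int_0^{\allocaltagent}\equivthreshres{\agent}{\allocation}{\reserveagent}\,d\allocation
\leq \int_0^{\allocaltagent}\sum_{\mitem:\,\allocaltagentitem\geq\allocation}\equivthreshresitem{\agent}{\allocation}{\reserveagent}\,d\allocation
= \sum_\mitem \equivcumulativeitem{\agent}{\allocaltagentitem}{\reserveagent},
\end{equation*}
where the final equality swaps sum and integral (via nonnegativity) and uses $\allocaltagentitem\leq\allocaltagent$ to rewrite each restricted integral as $\int_0^{\allocaltagentitem}\equivthreshresitem{\agent}{\allocation}{\reserveagent}\,d\allocation$.

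\textbf{Finishing and main obstacle.} Each $\allocalt^\mitem$ lies in $\feasible^\mitem$, and values remain product distributed, so the weak competitive efficiency of $\mechitem$ applied to $\bidcdfs^\mitem$ gives $\rev(\mechitem,\bidcdfs^\mitem,\valuecdfs)\geq \revpar\sum_\agent\equivcumulativeitem{\agent}{\allocaltagentitem}{\reserveagent}$. Summing over $\mitem$ and using the integration bound,
\begin{equation*}
\rev(\mech,\bidcdfs,\valuecdfs)
= \sum_\mitem \rev(\mechitem,\bidcdfs^\mitem,\valuecdfs)
\geq \revpar \sum_\agent \sum_\mitem \equivcumulativeitem{\agent}{\allocaltagentitem}{\reserveagent}
\geq \revpar \sum_\agent \equivcumulative{\agent}{\allocaltagent}{\reserveagent},
\end{equation*}
which is precisely weak \rc\ $\revpar$ for the composition. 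The main conceptual difficulty is the threshold-cost lemma: it uses crucially both the availability of a withdraw action in each component (so an agent can faithfully emulate single-component play from inside the composition) and the fact that the composed allocation is a max across components (so withdrawing elsewhere does not erode the target allocation, and the inf can be attained at a ``one-component'' strategy). The integration step and the componentwise application of weak \rc\ are then routine; the only place where the structural specifics of simultaneous composition enter is in this pointwise threshold-cost bound.
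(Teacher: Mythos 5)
Your proposal is correct and follows essentially the same route as the paper: decompose the composed revenue additively over components, apply weak competitive efficiency to each component at the allocations $\allocaltagentitem$ certifying feasibility, and relate composed to component threshold surpluses via the withdraw-action argument (the paper's Lemma~\ref{lem:simthresholds}). The only cosmetic difference is the final bookkeeping---you bound $\equivcumulative{\agent}{\allocaltagent}{\reserveagent}$ by the sum $\sum_\mitem \equivcumulativeitem{\agent}{\allocaltagentitem}{\reserveagent}$ via an integration over the components covering each allocation level, whereas the paper uses an indicator to select the single component attaining the max---which is equivalent in substance.
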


The theorem will follow from two observations. First, agents' threshold prices are lower in the composition of mechanisms than in any individual component mechanism. In other words, it is easier for agents to secure allocation with more mechanisms to participate in. Second, the aggregate revenue is the sum of the component mechanisms' revenues.
The following lemma formalizes the first observation.

\begin{lemma}\label{lem:simthresholds}
	For any distribution of actions and values $\jointcdf$, any $z\in[0,1]$, any component mechanism $\mechitem$, and any minimum cost $\reserveagent$, $\equivcumulative{\agent}{z}{\reserveagent}\leq \equivcumulativeitem{\agent}{z}{\reserveagent}$.
\end{lemma}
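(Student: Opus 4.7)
The plan is to prove the stronger pointwise inequality $\equivthreshres{\agent}{\allocation}{\reserveagent} \leq \equivthreshresitem{\agent}{\allocation}{\reserveagent}$ for every $\allocation \in [0, z]$, and then integrate over $[0, z]$ to obtain the stated inequality between $\equivcumulative{\agent}{z}{\reserveagent}$ and $\equivcumulativeitem{\agent}{z}{\reserveagent}$.

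The heart of the argument is a ``more options implies lower infimum'' observation, enabled by the withdraw action. Every component action $\actionagentitem$ for agent $\agent$ in $\mechitem$ can be embedded into the composed mechanism as the action $\hat\action_\agent^\mitem$, in which agent $\agent$ plays $\actionagentitem$ in $\mechitem$ and the withdraw action in every other $\mech^{\mitem'}$. Because the withdraw action guarantees zero allocation in its own mechanism, the composed allocation of $\hat\action_\agent^\mitem$ reduces to the component allocation $\bidallocagentitem(\actionsitem)$. Moreover, by the very definition $\equivbiditem(\actionagent) = \equivbidagent(\hat\action_\agent^\mitem)$, the composed cost of $\hat\action_\agent^\mitem$ equals the component-$\mitem$ cost of $\actionagentitem$. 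Thus, any component action achieving allocation at least $\allocation$ in $\mechitem$ yields a composed action achieving allocation at least $\allocation$ in the composition at the same cost. Taking infima therefore gives $\equivthresh{\agent}{\allocation} \leq \equivthreshitem{\agent}{\allocation}$.

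To promote this to the discounted version, recall that $\equivthreshres{\agent}{\allocation}{\reserveagent}$ equals $\equivthresh{\agent}{\allocation}$ when the latter is at least $\reserveagent$ and is $0$ otherwise, with the analogous identity for $\equivthreshresitem{\agent}{\allocation}{\reserveagent}$. If $\equivthreshitem{\agent}{\allocation} < \reserveagent$, then by the undiscounted inequality $\equivthresh{\agent}{\allocation}$ is strictly below $\reserveagent$ as well, so both discounted thresholds vanish. If instead $\equivthreshitem{\agent}{\allocation} \geq \reserveagent$, then $\equivthreshresitem{\agent}{\allocation}{\reserveagent} = \equivthreshitem{\agent}{\allocation} \geq \equivthresh{\agent}{\allocation} \geq \equivthreshres{\agent}{\allocation}{\reserveagent}$, since discounting only decreases a value. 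Integrating over $\allocation \in [0, z]$ then yields the lemma.

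The main obstacle is conceptual rather than technical: one must parse the definitions carefully enough to see that the cost function $\equivbiditem$ on composed actions exactly reproduces the cost of the ``play in $\mechitem$, withdraw elsewhere'' embedding, so that $\actionagentitem \mapsto \hat\action_\agent^\mitem$ is a feasibility-preserving, cost-preserving injection from the feasible set defining $\equivthreshitem{\agent}{\allocation}$ into the feasible set defining $\equivthresh{\agent}{\allocation}$. Once this identification is in hand, the rest is a one-line monotonicity argument and integration.
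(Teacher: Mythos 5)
Your proof is correct and follows essentially the same route as the paper's: the composed threshold cost is an infimum over a strictly larger set of options than the component threshold cost (via the ``play in $\mechitem$, withdraw elsewhere'' embedding), the discounted versions inherit the inequality, and integration gives the claim. Your write-up is simply more explicit than the paper's one-line argument about why the embedding preserves both allocation and cost and about the case analysis for the reserve discounting.
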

\begin{proof}
	By definition, $\equivthresh{\agent}{\allocation}=\inf_{\actionagent:\bidallocagent(\actionagent)\geq x}\equivbid(\actionagent)\leq\inf_{\actionagent:\bidallocagent(\actionagent)\geq x}\equivbiditem(\actionagent)=\equivthreshitem{\agent}{\allocation}$, and so $\equivthreshres{\agent}{\allocation}{\reserveagent}\leq\equivthreshresitem{\agent}{\allocation}{\reserveagent}$. Integrating both quantities yields $\equivcumulative{\agent}{\allocation}{\reserveagent}\leq \equivcumulativeitem{\agent}{\allocation}{\reserveagent}$.
\end{proof}

\begin{proof}[Proof of Theorem~\ref{thm:simultaneous}]
	Let $\bidcdfs$ be an action distribution and $\valuecdfs$ a product prior over values, with joint distribution $\jointcdf$. Further let $\allocalt$ be a feasible allocation in the composed environment, where $\allocalt^1\,\ldots,\allocalt^\numitems\in\feasible^1,\ldots,\feasible^{\numitems}$ denotes a profile of allocations for each component mechanism certifying the feasibility of $\allocalt$, i.e. $\allocaltagent=\max_j \allocaltagentitem$ for all $\agent$.
	For each agent $\agent$, let $\indicij$ be an indicator taking value $1$ if $\mitem$ is the lowest index such that $\allocaltagent=\allocaltagentitem$, and $0$ otherwise.
	Further let $\rev^{\mitem}(\mech,\jointcdf)$ denote the revenue from component mechanism $\mechitem$ under $\jointcdf$.
	We obtain the following sequence of inequalities, explained after their statement:
	\begin{align*}
		\tfrac{1}{\revpar}\rev(\mech,\jointcdf)&=\sum\nolimits_{\mitem} \tfrac{1}{\revpar}\rev^{\mitem}(\mech,\jointcdf)\\
		&\geq \sum\nolimits_{\mitem}\sum\nolimits_{\agent}\equivcumulativeitem{\agent}{\allocaltagentitem}{\reserveagent}\\
		&\geq \sum\nolimits_{\mitem}\sum\nolimits_{\agent}\equivcumulative{\agent}{\allocaltagentitem}{\reserveagent}\\
		&=       \sum\nolimits_{\mitem}\sum\nolimits_{\agent}\indicij\equivcumulative{\agent}{\allocaltagent}{\reserveagent}\\
		&=\sum\nolimits_{\agent}\equivcumulative{\agent}{\allocaltagent}{\reserveagent}.
	\end{align*}
	The first inequality follows from the definition of payments in a composed mechanism as the sum of the revenues of the component mechanisms. The second line comes from the assumption that each component mechanism has weak \rc\ $\revpar$ with minimum costs $\reserves$. The third line follows from Lemma~\ref{lem:simthresholds}. The remaining lines follow from the definition of feasibility in the composed mechanism.
\end{proof}

If we consider price per unit costs, then Theorem~\ref{thm:simultaneous} implies that the simultaneous composition of winner-pays-bid mechanisms inherits the \rc\ of the component mechanisms, and hence their robust welfare guarantees.
Moreover, if all mechanisms have monopoly reserves, the same can be said for revenue.

\begin{corollary}
	Let $\mech$ be the simultaneous composition of winner-pays-bid mechanisms, each with competitive efficiency at least $\revpar\leq 1$. Then in any equilibrium with independent values satisfying no bidder communication and best response, $\jointcdf=(\valuecdfs,\bidcdfs)$ with independent $\valuecdfs$, the welfare is at least a $\revpar(e-1)/e$ fraction of optimal.
\end{corollary}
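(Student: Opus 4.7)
The plan is to chain three earlier results: Theorem~\ref{thm:simultaneous} (to lift weak \rc\ from components to the composition), the weak \vc\ guarantee for price-per-unit costs under best response, and Theorem~\ref{thm:genpoa} (the welfare bound from weak \rc\ and weak \vc). All three of these have been set up so that the cost function is the price-per-unit cost $\equivbid_\agent^{\text{PPU}}$, and the whole argument goes through with this common choice.

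\emph{Step 1 (weak \rc\ of the composition).} Each component winner-pays-bid mechanism $\mech^\mitem$ has competitive efficiency at least $\revpar$. By the lemmas identifying \rc\ with generalized \rc\ under price-per-unit costs, and generalized \rc\ with weak \rc, each $\mech^\mitem$ has weak \rc\ at least $\revpar$ with respect to $\equivbid_\agent^{\text{PPU}}$ (with zero minimum costs). Theorem~\ref{thm:simultaneous} then transfers this property to $\mech$: the composition has weak \rc\ at least $\revpar$ under the same cost functions.

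\emph{Step 2 (weak \vc\ of conditional \aos).} Fix any agent $\agent$ and value $\valagent$. Because values are independent and $\jointcdf$ satisfies no bidder communication, conditioning on $\valagent$ yields a well-defined conditional action outcome $\mechdist_\agent(\valagent) = (\actionagent, \bidallocagent(\cdot,\actionothers), \bidpaymentagent(\cdot,\actionothers))$ whose randomness does not depend on $\valagent$. Best response of the equilibrium restricted to agent $\agent$ with value $\valagent$ coincides with best response in this single-agent \ao. Applying the weak \vc\ bound for price-per-unit costs under best response, each conditional \ao\ has weak individual efficiency at least $(e-1)/e$.

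\emph{Step 3 (assembling the welfare bound).} Every component $\mech^\mitem$ is a winner-pays-bid mechanism, so its allocations are deterministic in $\{0,1\}$, and so are the composed allocations $\bidallocagent(\actions) = \max_\mitem \bidallocagentitem(\actionsitem)$. Thus $\mech$ is a mechanism with deterministic allocations, weak \rc\ at least $\revpar$, and conditional \aos\ with weak \vc\ at least $(e-1)/e$. Theorem~\ref{thm:genpoa} delivers the conclusion: the expected welfare in $\jointcdf$ is at least a $\revpar(e-1)/e$ fraction of $\WEL(\OPTmech,\valuecdfs)$.

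The main thing to watch is that a single choice of cost function (price-per-unit) must simultaneously satisfy the hypotheses of all three invoked results, and that the no-bidder-communication condition is what makes each agent's conditional \ao\ a bona fide single-agent best-response problem to which the weak \vc\ lemma can be applied; once these two compatibility checks are made, the chain of implications is routine.
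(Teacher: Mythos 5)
Your proof is correct and is essentially the paper's own argument: the paper leaves this corollary without an explicit proof, remarking only that it follows by combining Theorem~\ref{thm:simultaneous} (weak \rc\ of the composition under price-per-unit costs), the weak \vc\ bound of $(e-1)/e$ for price-per-unit costs under best response, and Theorem~\ref{thm:genpoa} --- exactly the three-step chain you spell out. The only caveat worth noting is that your justification of the ``deterministic allocations'' hypothesis of Theorem~\ref{thm:genpoa} (that winner-pays-bid components always allocate in $\{0,1\}$) is slightly too strong as stated, since the paper also treats randomized winner-pays-bid rules; but the paper itself glosses over this same point, so your reading is faithful to the intended argument.
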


\begin{corollary}
	Let $\mech$ be the simultaneous composition of winner-pays-bid mechanisms, each with competitive efficiency at least $\revpar\leq 1$, and each with monopoly reserves. Then in any equilibrium with independent values, no bidder communication, and best response, and which respects reserves, $\jointcdf=(\valuecdfs,\bidcdfs)$ with independent $\valuecdfs$, the revenue is at least a $\revpar(e-1)/2e$ fraction of optimal.
\end{corollary}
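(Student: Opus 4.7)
The plan is to chain the two headline results of Section~\ref{sec:beyondpyb} together: Theorem~\ref{thm:simultaneous} lifts the weak \rc\ from the component mechanisms to the composed mechanism, and Theorem~\ref{thm:genrev} converts weak \rc\ plus individual efficiency into a revenue approximation. The bridge between them is the price-per-unit cost function $\equivbidagent^{\text{PPU}}$, for which both the component mechanisms' weak \rc\ and the agents' weak individual efficiency have already been established.

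First I would invoke Theorem~\ref{thm:simultaneous} with the minimum-cost profile $\reservesm$ and cost functions $\equivbidagent^{\text{PPU}}$. Each component $\mechitem$ is a winner-pays-bid mechanism with competitive efficiency at least $\revpar$; by the lemma identifying winner-pays-bid \rc\ with generalized \rc\ under price-per-unit costs, each $\mechitem$ therefore has weak \rc\ at least $\revpar$ with minimum costs $\reservesm$. Theorem~\ref{thm:simultaneous} then yields weak \rc\ at least $\revpar$ for the composed mechanism $\mech$ with the same minimum costs.

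Next I would handle individual efficiency. For each agent $\agent$ and each realized value $\valagent$, the equilibrium induces a conditional action outcome $(\valagent,\mechdist_\agent(\valagent))$ in the composed mechanism. By the weak-individual-efficiency lemma for price-per-unit costs under best response (whose proof constructs the Pareto frontier $\wpballoc$ and reduces to Lemma~\ref{lem:brres}), this conditional outcome has weak individual efficiency at least $(e-1)/e$. The best-response hypothesis of the corollary is exactly what this lemma requires. Finally, the hypothesis that the equilibrium respects reserves guarantees that the precondition of Theorem~\ref{thm:genrev} (namely, that $\valagent\geq\reserveagentm$ implies $\equivbid^{\text{PPU}}(\actionagent)\geq\reserveagentm$ and $\valagent<\reserveagentm$ implies $\bidallocagent(\actionagent)=0$ a.s.) holds for the composed mechanism. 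Applying Theorem~\ref{thm:genrev} then produces the stated $\revpar(e-1)/(2e)$ revenue approximation.

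The main technical obstacle is verifying the reserve-respect condition for the composite rather than just the components. Respecting the reserve in each component $\mechitem$ means $\bidagentitem\geq\reserveagentm$ iff $\valagent\geq\reserveagentm$; I need to argue that this translates into the composite statement about $\equivbid^{\text{PPU}}$ and $\bidallocagent$. For $\valagent<\reserveagentm$, every component bid is strictly below $\reserveagentm$, so $\bidallocagentitem(\bidagentitem)=0$ in each $\mechitem$ and hence $\bidallocagent(\actionagent)=\max_{\mitem}\bidallocagentitem(\bidagentitem)=0$ a.s. For $\valagent\geq\reserveagentm$, every component bid is at least $\reserveagentm$, so every component payment is at least $\reserveagentm\cdot\bidallocagentitem(\bidagentitem)$; summing and dividing by $\bidallocagent(\actionagent)=\max_\mitem\bidallocagentitem$ yields $\equivbid^{\text{PPU}}(\actionagent)\geq\reserveagentm$ as needed. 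With this check in place the proof is essentially a two-line composition of Theorem~\ref{thm:simultaneous} and Theorem~\ref{thm:genrev}.
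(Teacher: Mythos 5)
Your proposal is correct and follows exactly the route the paper intends: the corollary is presented as an immediate consequence of Theorem~\ref{thm:simultaneous} (lifting weak \rc\ under price-per-unit costs to the composition) combined with Theorem~\ref{thm:genrev} and the $(e-1)/e$ individual-efficiency bound. Your explicit check that component-wise reserve-respect implies the composite condition $\equivbid^{\text{PPU}}(\actionagent)\geq\reserveagentm$ (via $\sum_\mitem\bidallocagentitem\geq\max_\mitem\bidallocagentitem$) is a detail the paper leaves implicit, and it is sound.
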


\subsection{All-Pay Mechanisms}
\label{sec:allpay}

This section considers single-bid all-pay mechanisms.
Given an allocation rule $\bidalloc$ mapping bids $\bids=(\bid_1,\ldots,\bid_\numagents)$ to allocations, the corresponding all-pay mechanism has payment rule $\bidpaymentagent(\bids)=\bidagent$.
We will analyze the weak \rc\ of all-pay mechanisms, measuring competition via gross payments rather than price per unit of allocation.
For this cost function, however, individual efficiency fails to hold.
We will instead prove that all-pay mechanisms satisfy the weak \vc\ of Definition~\ref{def:weakvc}.

Given bid distribution $\bidcdfs$, product distribution $\valuecdfs$ over values, and action $\actionagent$ for agent $\agent$, we will consider the cost function $\equivbidagent^{\text{AP}}(\actionagent)=\bidpaymentagent(\actionagent)=\expect[\actionothers]{\bidpaymentagent(\actionagent,\actionothers)}$.
For all-pay mechanisms, payments are equal to bids, so for bid $\bidagent$, we have $\equivbidagent^{\text{AP}}(\bidagent)=\bidagent$.
For such mechanisms, we therefore also have that $\equivthresh{\agent}{\allocation}=\inf_{\bidagent\,:\,\bidallocagent(\bidagent)\geq \allocation} \equivbidagent^{\text{AP}}(\bidagent)$ is the agent's threshold bid for allocation level $\allocation$, and $\equivcumulative{\agent}{1}{}=\int_0^{1}\equivthreshres{\agent}{\allocation}{}\,d\allocation$ is $\agent$'s expected threshold bid.
The connection to threshold bids will enable us to re-use our analysis of winner-pays-bid mechanisms from Section~\ref{sec:rc}.
We present the analysis in the absence of minimum costs $\reserves$, as it is typical to consider all-pay mechanisms without reserves.

\begin{theorem}
	Let $\bidalloc$ be a deterministic allocation rule, and assume the winner-pays-bid mechanism for $\bidalloc$ has \rc\ $\revpar$. Then for cost function $\equivbidagent^{\text{AP}}$, the all-pay mechanism for $\bidalloc$ has weak \rc\ $\revpar$.
\end{theorem}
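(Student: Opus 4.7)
The plan is to reduce the weak \rc\ of the all-pay mechanism to the (standard) \rc\ of the winner-pays-bid mechanism for the same allocation rule $\bidalloc$ assumed in the hypothesis. The crucial observation is that the all-pay cost $\equivbidagent^{\text{AP}}(\bidagent)=\bidagent$ coincides with the winner-pays-bid price-per-unit cost, so for any fixed bid distribution $\bidcdfs$ the threshold functions $\equivthresh{\agent}{\allocation}$ and the cumulative threshold surplus $\equivcumulative{\agent}{\allocaltagent}{}$ agree between the two mechanisms. Moreover, the independent-$\valuecdfs$ and no-bidder-communication conditions implicit in weak \rc\ make $\bidothers$ independent of $\valagent$, so the value-conditional surplus $\cumuthreshagent{\agent}{\allocaltagent\,|\,\valagent}{}$ of Definition~\ref{def:exprc} collapses to $\equivcumulative{\agent}{\allocaltagent}{}$.

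To execute the reduction, I would fix an arbitrary bid distribution $\bidcdfs$, product prior $\valuecdfs$, and feasible ex-post allocation $\allocalt\in\feasible$. First, compare revenues on the same $\bidcdfs$: since bids are nonnegative and $\bidallocagent(\bids)\in[0,1]$ for the deterministic rule $\bidalloc$,
\[
\sum\nolimits_\agent \expect[\bids]{\bidagent}\;\geq\;\sum\nolimits_\agent \expect[\bids]{\bidagent\bidallocagent(\bids)},
\]
i.e., the all-pay revenue dominates the winner-pays-bid revenue because losers also pay. Next, apply the assumed \rc\ of the winner-pays-bid mechanism. Regard $\allocalt$ as the constant interim profile $\allocaltagent(\valagent):=\allocaltagent$, which is feasible because it is induced by the ex-post $\allocalt$. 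Definition~\ref{def:exprc} then yields
\[
\sum\nolimits_\agent \expect[\bids]{\bidagent\bidallocagent(\bids)}\;\geq\;\revpar\sum\nolimits_\agent \expect[\valagent]{\cumuthreshagent{\agent}{\allocaltagent\,|\,\valagent}{}}\;=\;\revpar\sum\nolimits_\agent \equivcumulative{\agent}{\allocaltagent}{},
\]
where the equality uses the independence noted above to drop the conditioning. Chaining the two displays certifies weak \rc\ at least $\revpar$ for the all-pay mechanism, which is the desired conclusion.

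I anticipate no serious obstacle. The essence of the argument is just the two observations that (i) in the all-pay format losers also pay, so revenue only increases relative to winner-pays-bid, and (ii) since thresholds are measured in bids in both payment formats, the threshold surpluses coincide. The lone delicate point is matching the quantifiers of Definition~\ref{def:weakrc}, which uses unconditional ex-post allocations, against those of Definition~\ref{def:exprc}, which uses value-conditional interim profiles; this is handled uniformly by the product-prior and no-bidder-communication hypotheses that Section~\ref{sec:simultaneous} already invoked for the simultaneous-composition analysis.
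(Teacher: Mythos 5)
Your proposal is correct and takes essentially the same route as the paper's own proof: both rest on the two observations that all-pay revenue weakly dominates winner-pays-bid revenue on the same bid distribution (because losers also pay) and that the threshold surplus measured in bids coincides across the two payment formats, then chain these with the assumed \rc\ of the winner-pays-bid mechanism. The only cosmetic difference is that the paper explicitly restricts attention to $\allocalt\in\{0,1\}^{\numagents}$ since the environment is deterministic, whereas you handle the interim-versus-ex-post quantifier matching directly via the product-prior assumption; both are fine.
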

\begin{proof}
	Let $\bidcdfs$ be a bid distribution and $\valuecdfs$ be a product distribution over values. For a deterministic (ex post) allocation rule $\bidalloc$, let $\mech^{\text{WPB}}$ and $\mech^{\text{AP}}$ be mechanisms obtained by pairing $\bidalloc$ with winner-pays-bid and all-pay payment formats, respectively. Assume $\mech^{\text{WPB}}$ has \rc\ $\revpar$. The result will follow from noting that the all-pay revenue under $\bidcdfs$ and $\valuecdfs$ is weakly larger, while the threshold surplus in $\mech^{\text{WPB}}$ equals the generalized threshold surplus in $\mech^{\text{AP}}$.
	Note that we need only consider $\allocalt\in\{0,1\}^\numagents$ in the threshold surplus, as the environment is deterministic.
	
	The revenue in a winner-pays-bid mechanism is the bids of the winners. The revenue in an all-pay mechanism is the bids of all agents. Therefore for the same bid distribution and prior $\bidcdfs$ and $\valuecdfs$, $\rev(\mech^{\text{AP}},\bidcdfs,\valuecdfs)\geq \rev(\mech^{\text{WPB}},\bidcdfs,\valuecdfs)$. To compare threshold surplus, note that the winner-pays-bid threshold surplus of $(\bidcdfs,\valuecdfs)$ in $\mech^{\text{WPB}}$ is $\cumuthreshagent{\agent}{\allocaltagent}{}=\int_{0}^{\allocaltagent}\interthresh{\agent}{\allocation}\,d\allocation$.
	For $\allocaltagent=1$, this integral is the expected threshold bid for agent $i$.
	Now consider the same $(\bidcdfs,\valuecdfs)$ in $\mech^{\text{AP}}$.
	The generalized threshold surplus for $\equivbidagent^{\text{AP}}$ is $\equivcumulative{\agent}{\allocaltagent}{}=\int_0^{\allocaltagent}\equivthreshres{\agent}{\allocation}{}\,d\allocation$. This is again equal to $\agent$'s expected threshold bid in the all-pay mechanism for $\allocaltagent=1$. Hence,
	$\cumuthreshagent{\agent}{\allocaltagent}{}=\equivcumulative{\agent}{\allocaltagent}{}$.
	We therefore have:
	\begin{equation*}
		\rev(\mech^{\text{AP}},\bidcdfs,\valuecdfs)\geq \rev(\mech^{\text{WPB}},\bidcdfs,\valuecdfs)\geq\revpar\sum\cumuthreshagent{\agent}{\allocaltagent}{}=\revpar\sum\equivcumulative{\agent}{\allocaltagent}{},
	\end{equation*}
where the second inequality follows from the competitive efficiency of $\mech^{\text{WPB}}$.
\end{proof}

We now derive a weak \vc\ guarantee:
\begin{lemma}[Weak \VC\ of All-Pay Mechanisms]\label{lem:weakvc}
	Let $(\val, \bid^\info,\bidallocation^\info,\bidpay^\info)$ be a single-agent \ao\ with all-pay payments: $\bidpay^\info(\bid)=\bid$ for all $\info$.
	If $(\val, \bid^\info,\bidallocation^\info,\bidpay^\info)$ satisfies best response, then with cost function $\equivbidagent^{\text{AP}}$ the following holds: 
	\begin{equation}\label{eq:weakvc}
		\mathbb E_{\info}[\val\bidallocation^\info(\bid^\info)-\bid^\info] + \equivcumulative{}{1}{}\geq\val/2.\end{equation}
\end{lemma}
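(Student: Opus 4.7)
The plan is to use the best-response hypothesis to turn the all-pay payment structure into a pointwise lower bound on the inverse interim allocation $\interthresh{}{x}$, and then integrate that bound against the uniform measure on $[0,1]$ to produce the desired inequality. I would first introduce the expected allocation $\bidallocation(d)=\mathbb E_\info[\bidallocation^\info(d)]$ and its inverse $\interthresh{}{x}=\inf\{d:\bidallocation(d)\geq x\}$. Because the payment rule is all-pay, deviating to any fixed bid $d$ yields expected utility $\val\bidallocation(d)-d$ (the expectation over $\info$ only touches the allocation term). Best response therefore gives $\util(\mechdist)\geq \val\bidallocation(d)-d$ for every $d\geq 0$.

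Specializing to $d=\interthresh{}{x}+\epsilon$ and letting $\epsilon\to 0$ yields $\util(\mechdist)\geq \val x - \interthresh{}{x}$ for every $x\in[0,1]$, which rearranges to the key pointwise bound
\[
\interthresh{}{x}\geq \max\bigl(\val x-\util(\mechdist),\,0\bigr).
\]
For all-pay payments the cost function satisfies $\equivbid(\bid)=\bid$, so $\equivthreshres{}{x}{}=\interthresh{}{x}$ and the generalized threshold surplus coincides with the integral of the inverse allocation.

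Writing $u=\util(\mechdist)$ for brevity, integration then gives
\[
\equivcumulative{}{1}{}=\int_0^1 \interthresh{}{x}\,dx\;\geq\;\int_{u/\val}^1 (\val x-u)\,dx\;=\;\tfrac{\val}{2}-u+\tfrac{u^2}{2\val}.
\]
Rearranging yields $u+\equivcumulative{}{1}{}\geq \tfrac{\val}{2}+\tfrac{u^2}{2\val}\geq \tfrac{\val}{2}$, which is precisely (\ref{eq:weakvc}). (The truncation at $u/\val$ is valid whenever $u\leq \val$, which holds since $u\leq \val\bidallocation(\bid^\info)\leq \val$.)

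The only real subtlety is checking that best response against a distribution $\bidallocation^\info$ of allocation rules really does give a bound in terms of the expected allocation $\bidallocation(\cdot)$. This is where the all-pay structure is essential: the payment $\bid$ is independent of $\info$ for a fixed deviation, so the expected utility of the deviation factors cleanly as $\val\bidallocation(d)-d$. Everything after that is elementary calculus. The bound $\val/2$ is tight in the worst case ($u=0$), with the familiar complete-information all-pay auction equilibrium providing the matching example.
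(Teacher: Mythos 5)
Your proposal is correct and follows essentially the same route as the paper's proof: derive the pointwise bound $\equivthreshres{}{\allocation}{}\geq\max(\val\allocation-\util(\mechdist),0)$ from best response, integrate over $[0,1]$ to get $\equivcumulative{}{1}{}\geq\tfrac{\val}{2}-\util(\mechdist)+\tfrac{\util(\mechdist)^2}{2\val}$, and drop the nonnegative quadratic term. Your added care about the $\epsilon$-limit and the validity of the truncation at $\util(\mechdist)/\val$ is a welcome but minor elaboration of the same argument.
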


To see that all-pay mechanisms do not satisfy the stronger definition of \vc, consider the single-bid rule $\bidallocation(b)=b$ and an agent with value $\val=1$. Since $\bidutil(\bid)=0$ for all best responses $\bid$, it follows that $\bidutil(\bid) + \equivcumulative{}{z}{}\rightarrow 0$ as $z^2/2$, which implies that no constant individual efficiency guarantee holds for arbitrary $z$.

\begin{proof}[Proof of Lemma~\ref{lem:weakvc}]
	Let $\bidutil(\bid^\info,\bidallocation^\info,\bidpay^\info)=(\val,\mechdist)$ be given.
	For any allocation probability $\allocation\in[0,1]$, agent $\agent$ could choose to get allocation probability at least $\allocation$ and pay at most $\equivthresh{}{\allocation}$. Hence, $\util(\mechdist)=E_{\info}[\val\bidallocation^\info(\bid^\info)-\bid^\info]\geq \val \allocation-\equivthresh{}{\allocation}$ for all $\allocation\in[0,1]$.
	We may rearrange this as $\equivthresh{}{\allocation}\geq \val\allocation-\util(\mechdist)$. Since we also have $\interthresh{}{\allocation}\geq 0$, we may write:
	\begin{align*}
		\equivcumulative{}{1}{}\geq \int_0^1 \max(\val\allocation-\util(\mechdist),0)\,d\allocation&=\int_{\util(\mechdist)/\val}^1 \val\allocation-\util(\mechdist)\,d\allocation\\
		&=\tfrac{\val}{2}-\util(\mechdist)+\tfrac{\util(\mechdist)^2}{2\val}.
	\end{align*}
	We therefore have $\util(\mechdist)+\equivcumulative{}{1}{}\geq \val/2+\util(\mechdist)^2/2\val$. Holding $\val$ fixed and minimizing the righthand side as a function of $\util(\mechdist)$ yields a lower bound of $\val/2$, as desired. 
\end{proof}

From Theorem~\ref{thm:genpoa}, we may conclude that all winner-pays-bid mechanisms for deterministic environments which have \rc\ $\revpar$ have an analogous all-pay mechanism with the same weak \rc. The welfare in these mechanisms is consequently a $\revpar/2$-fraction of optimal. Notably, this includes the simultaneous composition of all-pay mechanisms.

\begin{corollary}
For a deterministic allocation rule $\bidalloc$, let $\mech^{\text{WPB}}$ and $\mech^{\text{AP}}$ be the respective winner-pays-bid and all-pay mechanisms for $\bidalloc$. Then if $\mech^{\text{WPB}}$ has competitive efficiency $\revpar\leq 1$, then in any equilibrium $(\bidcdfs,\valuecdfs)$ with product distribution $\valuecdfs$ satisfying no bidder communication and best response in $\mech^{\text{AP}}$, the welfare is at least a $\revpar/2$ fraction of optimal.
\end{corollary}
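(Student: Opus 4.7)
The plan is to derive this as an immediate consequence of the generalized framework developed in Section~\ref{sec:beyondpyb}, by combining three ingredients already assembled in the text: the weak \rc\ transfer from winner-pays-bid to all-pay mechanisms, the weak \vc\ bound for all-pay \aos\ under best response, and the welfare theorem (\cref{thm:genpoa}) expressed in terms of weak \rc\ and weak \vc. The cost function used throughout will be the all-pay cost $\equivbidagent^{\text{AP}}(\actionagent)=\bidpaymentagent(\actionagent)$, which for an all-pay mechanism simplifies to $\equivbidagent^{\text{AP}}(\bidagent)=\bidagent$.

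First I would invoke the theorem just proved above stating that for any deterministic allocation rule $\bidalloc$, if the winner-pays-bid mechanism $\mech^{\text{WPB}}$ has \rc\ $\revpar$, then $\mech^{\text{AP}}$ has weak \rc\ $\revpar$ with respect to $\equivbidagent^{\text{AP}}$. This supplies the mechanism-side constant in \cref{thm:genpoa}.

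Next I would extract the agent-side constant. Conditioning on each agent's value $\valagent$ and the distribution $(\valothers,\bids)\sim\jointcdf\,|\,\valagent$, the resulting conditional \ao\ is an all-pay \ao\ (payments equal bids), and the best-response hypothesis on $\jointcdf$ passes directly to this conditional \ao. Hence \cref{lem:weakvc} applies and gives weak \vc\ at least $1/2$ for every conditional \ao. I would note here that the product structure on $\valuecdfs$ and the no-bidder-communication property are not strictly needed for the welfare statement itself (\cref{thm:genpoa} does not require them), but they are stated to match the standard hypotheses under which \cref{lem:weakvc} is typically invoked; either way, best response alone suffices to apply \cref{lem:weakvc} at the level of each conditional \ao.

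Finally, I would apply \cref{thm:genpoa} to $\mech^{\text{AP}}$, which, because the environment is deterministic and the weak \rc\ is $\revpar$ and every conditional \ao\ has weak \vc\ at least $1/2$, yields $\WEL(\mech^{\text{AP}},\jointcdf)\geq (\revpar/2)\,\WEL(\OPTmech,\valuecdfs)$. No new calculation is required beyond verifying the hypothesis match; the only potential subtlety — and what I would flag as the main step to be careful about — is confirming that the best-response condition of Definition~\ref{def:br} for the multi-agent equilibrium $\jointcdf$ implies the single-agent best-response condition used in \cref{lem:weakvc} for each conditional \ao, which is a direct unpacking of definitions once one writes the conditional \ao\ as $(\valagent,\mechdist_\agent(\valagent))$.
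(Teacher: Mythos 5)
Your proposal matches the paper's own derivation: the corollary is obtained exactly by chaining the weak-\rc\ transfer theorem for all-pay mechanisms, the weak \vc\ bound of \cref{lem:weakvc}, and \cref{thm:genpoa}, and your unpacking of Definition~\ref{def:br} into the single-agent best-response condition for each conditional \ao\ is the intended (and only) step requiring care. One correction, however: your aside that the product structure on $\valuecdfs$ and no bidder communication ``are not strictly needed for the welfare statement itself'' is wrong. The weak \rc\ of Definition~\ref{def:weakrc} is only defined, and only guaranteed by the transfer theorem, over product value distributions, and the unconditioned threshold surplus $\equivcumulative{\agent}{\allocaltagent}{}$ appearing in \cref{thm:genpoa} is only meaningful because independence together with no bidder communication makes an agent's threshold surplus independent of their own value (the paper says this explicitly when setting up the weak definitions). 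Without those hypotheses the weak-\rc\ inequality invoked in the final step of \cref{thm:genpoa} is simply unavailable, and the paper's citation of inefficient correlated equilibria for simultaneous compositions indicates the restriction is substantive, not cosmetic. Since you do retain these hypotheses in the actual application, the proof goes through as written; only the parenthetical claim should be deleted.
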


\bibliographystyle{apalike}
\bibliography{bibs}

\appendix

\end{document}